\spnewtheorem{problem2}{Problem}{\bfseries}{\itshape}
{\itshape}{\rmfamily}
\let\llncssubparagraph\subparagraph
\let\subparagraph\paragraph
\let\subparagraph\llncssubparagraph
\def\dd{\mathinner{.\,.}}
\newcommand{\cO}{\mathcal{O}}
\def\smallunderbrace#1{\mathop{\vtop{\m@th\ialign{##\crcr
   $\hfil\displaystyle{#1}\hfil$\crcr
   \noalign{\kern3\p@\nointerlineskip}%
   \tiny\upbracefill\crcr\noalign{\kern3\p@}}}}\limits}
\def\TFS{\textsc{TFS}}
\def\ETFS{\textsc{ETFS}}
\def\PFS{\textsc{PFS}}
\def\MCSR{\textsc{MCSR}}
\def\ETFSA{\textsc{ETFS}-\textsc{ALGO}}
\def\TFSA{\textsc{TFS}-\textsc{ALGO}}
\def\PFSA{\textsc{PFS}-\textsc{ALGO}}
\def\MCSRA{\textsc{MCSR}-\textsc{ALGO}}
\def\MCSRAI{\textsc{MCSRI}-\textsc{ALGO}}
\def\TPM{\textsc{TPM}}
\def\TM{\textsc{TM}}
\def\TMI{\textsc{TMI}}
\def\BA{\textsc{BA}}
\def\ie{\textit{i.e., }}
\def\eg{\textit{e.g., }}
\def\OLDEN {\textsc{OLD}}
\def\TRU {\textsc{TRU}}
\def\MSN {\textsc{MSN}}
\def\DNA {\textsc{DNA}}
\def\SYN {\textsc{SYN}}
\def\SYNB {\textsc{SYN}$_{\textsc{bin}}$}
\def\mathcolor#1#{\@mathcolor{#1}}
\def\@mathcolor#1#2#3{%
  \protect\leavevmode
  \begingroup
    \color#1{#2}#3%
  \endgroup
}
\newenvironment{packed_enum}{
\begin{description}
	\setlength{\topsep}{0pt}
	\setlength{\partopsep}{0pt}
  \setlength{\itemsep}{0pt}
  \setlength{\parskip}{0pt}
  \setlength{\parsep}{0pt}
}{\end{description}}
\definecolor{darkgreen}{rgb}{0, .6, 0}
\authorrunning{Bernardini et al.}
\title{Combinatorial Algorithms for String Sanitization}
\author{Giulia Bernardini\inst{1} \and Huiping Chen\inst{2} \and Alessio Conte\inst{3} \and Roberto Grossi\inst{3,4} \and\\
\vspace{-2mm} Grigorios Loukides\inst{2} \and Nadia Pisanti\inst{3,4} \and Solon P. Pissis\inst{4,5} \and Giovanna Rosone\inst{3}\and Michelle Sweering\inst{5}}
\institute{
Department of Informatics, Systems and Communication, University of Milano-Bicocca, Milan, Italy, \email{giulia.bernardini@unimib.it}
\and
Department of Informatics, King's College London, London, UK\\
\email{[huiping.chen,grigorios.loukides]@kcl.ac.uk}
\and Department of Computer Science, University of Pisa, Pisa, Italy\\
\email{[conte,grossi,pisanti]@di.unipi.it}, \email{giovanna.rosone@unipi.it}
\and ERABLE Team, INRIA, Lyon, France
\and CWI, Amsterdam, The Netherlands, \email{[solon.pissis,michelle.sweering]@cwi.nl}
}
\begin{document}

\maketitle

\begin{abstract}
  String data are often disseminated to support applications such as location-based service provision or DNA sequence analysis. This dissemination, however, may expose sensitive patterns that model confidential knowledge (\eg trips to mental health clinics from a string representing a user's location history). In this paper, we consider the problem of sanitizing a string by concealing the occurrences of sensitive patterns, while maintaining data utility, in two settings that are relevant to many common string processing tasks.

  ~~~~In the first setting, we aim to generate the  minimal-length string that preserves the order of appearance and frequency of all  non-sensitive patterns. Such a string allows accurately performing tasks based on the sequential nature and pattern frequencies of the string. To construct such a string, we propose a time-optimal algorithm, {\TFSA}. We also  propose another time-optimal algorithm, {\PFSA}, which preserves a partial order of appearance of non-sensitive patterns but produces a much shorter string that can be analyzed more efficiently. The strings produced by either of these algorithms are constructed by concatenating non-sensitive parts of the input string. However, it is possible to detect the sensitive patterns by ``reversing'' the concatenation operations. In response, we propose a heuristic, {\MCSRA}, which replaces letters in the strings output by the algorithms  with carefully selected letters, so that sensitive patterns are not reinstated, implausible patterns are not introduced, and occurrences of spurious patterns are prevented.   In the second setting, we aim to generate a string that is at minimal edit distance from the original string, in addition to preserving the order of appearance and frequency of all non-sensitive patterns. To construct such a string, we propose an algorithm, {\ETFSA}, based on solving specific instances of  approximate regular expression matching.

  ~~~~We implemented our sanitization approach that applies {\TFSA}, {\PFSA} and then {\MCSRA} and experimentally show that it is effective and efficient. We also show that {\TFSA} is nearly as effective at minimizing the edit distance as {\ETFSA}, while being substantially more efficient than {\ETFSA}.
\end{abstract}

\newpage

\section{Introduction}\label{sec:introduction}

A large number of applications, in domains ranging from transportation to web analytics and bioinformatics feature data modeled as \emph{strings}, \ie sequences of letters over some finite alphabet. For instance, a string may represent the history of visited locations of one or more individuals, with each letter corresponding to a location.  Similarly, it may represent the history of search query terms of one or more web users, with letters corresponding to query terms, or a medically important part of the DNA sequence of a patient, with letters corresponding to DNA bases. Analyzing such strings is key in applications including location-based service provision,  product recommendation, and DNA sequence analysis. Therefore, such strings are often disseminated beyond the party that has collected them. For example, location-based service providers often outsource their data to data analytics companies who perform tasks such as similarity evaluation between strings~\cite{anliu}, and retailers outsource their data to marketing agencies who perform tasks such as mining frequent patterns from the strings~\cite{odesa}.

However, disseminating a string intact may result in the exposure of confidential knowledge, such as trips to mental health clinics in transportation data \cite{george}, query terms revealing political beliefs or sexual orientation of individuals in web data \cite{netflix}, or diseases associated  with certain parts of DNA data \cite{brad}. Thus, it may be necessary to sanitize a string prior to its dissemination, so that confidential knowledge is not exposed. At the same time, it is important to preserve the utility of the sanitized string, so that data protection does not outweigh the benefits of disseminating the string to the party that disseminates or analyzes the string, or to the society at large. For example, a retailer should still be able to obtain actionable knowledge in the form of frequent patterns from the marketing agency who analyzed their outsourced data; and researchers should still be able to perform analyses such as identifying significant patterns in DNA sequences.

\subsection{Our Model and Settings} Motivated by the discussion above, we introduce the following model which we call \emph{Combinatorial String Dissemination} (CSD). In CSD, a party has a string $W$ that it seeks to disseminate, while satisfying a set of \emph{constraints} and a set of desirable \emph{properties}. For instance, the constraints aim to capture privacy requirements and the properties aim to capture data utility considerations (\eg posed by some other party based on applications). To satisfy both, $W$ must be transformed to a string  by applying a sequence of edit operations. The computational task is to determine this sequence of edit operations so that the transformed string satisfies the desirable properties subject to the constraints.
Clearly, the constraints and the properties must be specified  based on the application.

Under the CSD model, we consider two specific settings addressing practical considerations in common string processing applications; the  \emph{Minimal String Length} (\textsc{MSL}) setting, in which the goal is to produce a shortest string that satisfies the set of constraints and the set of desirable properties, and the \emph{Minimal Edit Distance} (\textsc{MED}) setting, in which the goal is to produce a string that satisfies the set of constraints and the set of desirable properties and is at minimal edit distance from $W$. In the following, we discuss each setting in more detail.

\paragraph{\textsc{MSL} Setting} In this setting, the sanitized string $X$ must satisfy the following constraint \textbf{C1}: for an  integer $k>0$, no given length-$k$ substring (also called pattern) modeling confidential knowledge should occur in $X$. We call each such length-$k$ substring a \emph{sensitive pattern}. We aim at finding the shortest possible string $X$ satisfying the following desired properties:
(\textbf{P1}) the order of appearance of all other length-$k$ substrings (\emph{non-sensitive patterns}) is the same in $W$ and in $X$; and
(\textbf{P2}) the frequency of these length-$k$ substrings is the same in $W$ and in $X$. The problem of constructing $X$ in this setting is referred to as {\TFS} (Total order, Frequency, Sanitization). Note that it is straightforward to hide substrings of {\em arbitrary} lengths from $X$, by setting $k$ equal to the length of the shortest substring we wish to hide, and then setting, for each of these substrings, any length-$k$ substring as sensitive.

The \textsc{MSL} setting is motivated by real-world applications involving string dissemination. In these applications, a \emph{data custodian} disseminates the sanitized version $X$ of a string $W$ to a \emph{data recipient}, for the purpose of analysis (\eg mining). $W$ contains confidential information that the data custodian needs to hide, so that it does not occur in $X$. Such information is specified by the data custodian based on domain expertise, as in \cite{abul,liyue,sbsh,odesa}. At the same time, the data recipient specifies \textbf{P1} and \textbf{P2} that $X$ must satisfy in order to be useful. These properties map directly to common data utility considerations in string analysis. By satisfying \textbf{P1}, $X$ allows tasks based on the sequential nature of the string, such as blockwise $q$-gram distance computation~\cite{DBLP:journals/almob/GrossiIMPPRV16}, to be performed accurately. By satisfying \textbf{P2}, $X$ allows computing the frequency of length-$k$ substrings and hence mining frequent length-$k$ substrings~\cite{DBLP:journals/bmcbi/Pissis14} with no utility loss. We require that $X$ has minimal length so that it does not contain redundant information. For instance, the string which is constructed by concatenating all non-sensitive length-$k$ substrings in $W$ and separating them with a special letter that does not occur in $W$,  satisfies \textbf{P1} and \textbf{P2} but is not the shortest possible. Such a string $X$ will have a negative impact on the efficiency of any subsequent analysis tasks to be performed on it.

\paragraph{\textsc{MED} Setting} In this setting, the sanitized version $X_{\text{ED}}$ of string $W$ must satisfy the properties \textbf{P1} and \textbf{P2}, subject to the constraint \textbf{C1}, and also be at minimal edit distance from string $W$. Constructing such a string $X_{\text{ED}}$ allows many tasks that are based on edit distance to be performed accurately. Examples of such tasks are frequent pattern mining~\cite{sdm16edit},  clustering~\cite{vldbj08}, entity extraction~\cite{icde19} and range query answering~\cite{tkde14qa}, which are important in domains such as bioinformatics~\cite{sdm16edit}, text mining~\cite{icde19}, and speech recognition~\cite{icassp}.

Note, existing works for sequential data sanitization (\eg  \cite{liyue,sbsh,icdm13,odesa,streamevent}) or anonymization (\eg \cite{condensation,bonomi,chen}) cannot be applied to our settings (see Section~\ref{sec:finale} for details).

\subsection{Our Contributions}

We define the {\TFS} problem for string sanitization and a variant of it, referred to as {\PFS} (Partial order, Frequency, Sanitization), which aims at producing an even shorter string $Y$  by relaxing \textbf{P1} of {\TFS}. We also develop algorithms for {\TFS} and {\PFS}.  Our algorithms construct strings $X$ and $Y$ using a separator letter $\#$, which is not contained in the alphabet of $W$, ensuring that sensitive patterns do not occur in $X$ or $Y$. The algorithms repeat proper substrings of sensitive patterns so that the frequency of non-sensitive patterns overlapping with sensitive ones does not change. For $X$, we give a deterministic construction which may be easily reversible (\ie it may enable a data recipient to construct $W$ from $X$), because the occurrences of $\#$ reveal the exact location of sensitive patterns. For $Y$, we give a construction which breaks several ties arbitrarily, thus being less easily reversible. We further address the reversibility issue by defining the {\MCSR} (Minimum-Cost Separators Replacement) problem and designing an algorithm for dealing with it. In {\MCSR}, we seek to replace all separators, so that the location of sensitive patterns is not revealed, while preserving data utility.  In addition, we define the problem of constructing $X_{\text{ED}}$ in the \textsc{MED} setting, which is referred to as {\ETFS} (Edit-distance, Total order, Frequency, Sanitization), and design an algorithm  to solve it.

Our work makes the following specific contributions:

\vspace{+1mm}
    \noindent\textbf{1.} We design an algorithm, {\TFSA}, for solving the {\TFS} problem in $\cO(kn)$ time,
    where $n$ is the length of $W$. In fact, we prove that $\cO(kn)$ time is worst-case optimal by showing that the length of $X$ is in $\Theta(kn)$ in the worst case.
    The output of {\TFSA} is a string $X$ consisting of a sequence of substrings over the alphabet of $W$ separated by $\#$ (see Example~\ref{ex1} below). An important feature of our algorithm, which is useful in the efficient construction of $Y$ discussed next, is that it can be implemented to produce an $\cO(n)$-sized representation of $X$ with respect to $W$ in $\cO(n)$ time. See Section~\ref{sec:ts}.

\begin{example} \label{ex1}
    Let $W=\texttt{aabaaacbcbbbaabbacaab}$, $k=4$, and the set of sensitive patterns be $\{\texttt{baaa},\texttt{bbaa}\}$.
    The string $X=\texttt{aabaa\#aaacbcbbba\#baabbacaab}$ consists of three substrings over the alphabet $\{\texttt{a},\texttt{b},\texttt{c}\}$ separated by $\#$. Note that no sensitive pattern occurs in $X$, while all non-sensitive substrings of length $k=4$ have the same frequency in $W$ and in $X$ (\eg $\texttt{aaba}$ appears once), and they appear in the same order in $W$ and in $X$ (\eg $\texttt{aaba}$ precedes $\texttt{abaa}$). Also, note that any shorter string than $X$ would either create sensitive patterns or change the frequencies (\eg removing the last letter of $X$ creates a string in which $\texttt{caab}$ no longer appears).
    \qed \end{example}

\noindent\textbf{2.} We define the {\PFS} problem relaxing \textbf{P1} of {\TFS} to produce shorter strings that are more efficient to analyze. Instead of a \emph{total order} (\textbf{P1}), we require a \emph{partial order} ({$\mathbf{\boldsymbol{\Pi} 1}$}) that preserves the order of appearance only for sequences of successive non-sensitive length-$k$ substrings that overlap by $k-1$ letters.
This makes sense because the order of two successive non-sensitive length-$k$ substrings with no length-$(k-1)$ overlap has anyway been ``interrupted'' (by a sensitive pattern). We exploit this observation to shorten the string further. Specifically, we design an algorithm that solves {\PFS} in the optimal $\cO(n +|Y|)$ time, where $|Y|$ is the length of $Y$, using the $\cO(n)$-sized representation of $X$. See Section~\ref{sec:ps}.

\begin{example} \emph{(Cont'd from Example \ref{ex1})}\label{ex2}
    Recall that $W=\texttt{aabaaacbcbbbaabbacaab}$.
    A string $Y$ is $\texttt{aaacbcbbba\#aabaabbacaab}$. The order of \texttt{aaba} and \texttt{abaa} is preserved in $Y$ since they are successive, non-sensitive, and with an overlap of $k-1=3$ letters. The order of \texttt{abaa} and \texttt{aaac}, which are successive non-sensitive, is not preserved since they do not have an overlap of $k-1=3$ letters.
    \qed \end{example}

    \noindent\textbf{3.} We define the {\MCSR} problem, which seeks to produce a string $Z$, by deleting or replacing all separators in $Y$ with letters from the alphabet of $W$ so that: no sensitive patterns are reinstated in $Z$; occurrences of spurious patterns that may not be mined from $W$ but can be mined from $Z$, at  a given support threshold $\tau$, are prevented; and the distortion incurred by the replacements in $Z$ is bounded. The first requirement is to preserve privacy and the next two to preserve data utility. We show that {\MCSR} is NP-hard and propose a heuristic to attack it. We also show how to apply the heuristic, so that letter replacements do not result in \emph{implausible} (\ie{} statistically unexpected) patterns that may reveal the location of sensitive patterns. See Section~\ref{sec:z}.

    \begin{example} \emph{(Cont'd from Example \ref{ex2})}\label{exmcsr}
    Recall that $Y=\texttt{aaacbcbbba\#aabaabbacaab}$. Let $\tau=1$.
    A string $Z=\texttt{aaacbcbbba}\textbf{c}\texttt{aabaabbacaab}$ is produced by replacing letter $\#$ with letter $\texttt{c}$. Note that $Z$ contains no sensitive pattern, nor a non-sensitive pattern of length-$4$ substring that could not be mined from $W$ at a support threshold $\tau$ (i.e., a pattern that does not occur in $W$). In addition, $Z$ contains no implausible pattern,  such as $\texttt{bbab}$, which is not expected to occur in $W$, according to an established  statistical significance measure for strings \cite{brendel,mireille,almir}.
    \qed \end{example}

    \noindent {\bf 4.} We design an algorithm for solving the {\ETFS} problem. The algorithm, called {\ETFSA}, is based on a connection between {\ETFS} and the approximate regular expression matching problem~\cite{regex}. Given a string $W$ and a regular expression $E$, the latter problem seeks to find a string $T$ that matches $E$ and is at minimal edit distance from $W$. {\ETFSA} solves the {\ETFS} problem in $\cO(k|\Sigma|n^2)$ time, where $|\Sigma|$ is the size of the alphabet of $W$. See Section~\ref{sec:etfsa}.

    \begin{example}
     Let $W=\texttt{aaaaaab}$, $k=4$, and the set of sensitive patterns be $\{\texttt{aaaa},\texttt{aaab}\}$. {\TFSA} constructs string $X=\varepsilon$, where $\varepsilon$ is the empty string, with $d_E(W,X)=7$.   On the contrary, {\ETFSA} constructs string $X_{\text{ED}}=\texttt{aaa\#aab}$ with $d_E(W,X_{\text{ED}})=1<7$. Clearly, string $X_{\text{ED}}$ is more suitable for applications, which are based on measuring sequence similarity. \qed
     \end{example}

    \noindent\textbf{5.} For the \textsc{MSL} setting, we implemented our combinatorial approach for sanitizing a string $W$ (\ie the aforementioned algorithms implementing the pipeline  $W \rightarrow X \rightarrow Y \rightarrow Z$) and show its effectiveness and efficiency on real and synthetic data. We also show that it possible to produce a string $Z$ that does not contain implausible patterns,  while incurring insignificant additional utility loss.
    See Section~\ref{sec:exp}.

    \medskip

    \noindent\textbf{6.}  For the \textsc{MED} setting, we implemented {\ETFSA} and experimentally compared it with \TFSA{}. Interestingly, we demonstrate that {\TFSA} constructs optimal or near-optimal solutions to the {\ETFS} problem in practice. This is particularly encouraging because {\TFSA} is linear in the length of the input string $n$, whereas {\ETFSA} is quadratic in $n$. See Section~\ref{sec:exp}.

    \medskip

    A preliminary version of this paper, without the method that avoids implausible patterns and without contributions 4 and 6, appeared in~\cite{pkdd2019}. Furthermore, we include here all proofs omitted from~\cite{pkdd2019}, as well as additional examples and discussion of related work.

\section{Preliminaries, Problem Statements, and Main Results}

\paragraph{Preliminaries} Let $T=T[0]T[1]\ldots T[n-1]$ be a \emph{string} of length $|T|=n$ over a finite ordered alphabet $\Sigma$ of size $|\Sigma|=\sigma$.
By $\Sigma^*$ we denote the set of all strings over $\Sigma$.
By $\Sigma^k$ we denote the set of all length-$k$ strings over $\Sigma$.
For two positions $i$ and $j$ on $T$, we denote by $T[i\dd j]=T[i]\ldots T[j]$ the \emph{substring} of $T$ that starts at position $i$ and ends at position $j$ of $T$. By $\varepsilon$ we denote the \emph{empty string} of length 0. A \emph{prefix} of $T$ is a substring of the form $T[0\dd j]$, and a suffix of $T$ is a substring of the form $T[i\dd n-1]$. A \emph{proper} prefix (suffix) of a string is not equal to the string itself.
By $\text{Freq}_V(U)$ we denote the number of occurrences of string $U$ in string $V$.  Given two strings $U$ and $V$ we say that $U$ has a {\em suffix-prefix overlap} of length $\ell>0$ with $V$ if and only if the length-$\ell$ suffix of $U$ is equal to the
length-$\ell$ prefix of $V$, \ie $U[|U|-\ell \dd |U|-1]=V[0\dd \ell-1]$.

We fix a string $W$ of length $n$ over an alphabet $\Sigma=\{1,\ldots,n^{\cO(1)}\}$ and an integer $0<k<n$. We refer to a length-$k$ string or a {\em pattern} interchangeably. An occurrence of a pattern is uniquely represented by its starting position.
Let $\mathcal{S}$ be a set of positions over $\{0,\ldots, n-k\}$ with the following closure property: for every $i \in \mathcal{S}$, if there exists $j$ such that $W[j\dd j+k-1]=W[i\dd i+k-1]$, then $j\in \mathcal{S}$. That is, if an occurrence of a pattern is in $\mathcal{S}$ all its occurrences are in $\mathcal{S}$. A substring $W[i\dd i+k-1]$ of $W$ is called {\em sensitive} if and only if $i\in \mathcal{S}$. $\mathcal{S}$ is thus the set of occurrences of sensitive patterns. The difference set $\mathcal{I}=\{0,\ldots, n-k\} \setminus \mathcal{S}$ is the set of occurrences of {\em non-sensitive} patterns.

For any string $U$, we denote by $\mathcal{I}_U$ the set of occurrences of non-sensitive length-$k$ strings over $\Sigma$ in $U$. (We have that $\mathcal{I}_W=\mathcal{I}$.) We call an occurrence $i$ the {\em t-predecessor} of another occurrence $j$ in $\mathcal{I}_U$ if and only if $i$ is the largest element in $\mathcal{I}_U$  that is less than $j$. This relation induces a {\em strict total order} on the occurrences in $\mathcal{I}_U$. We call  $i$ the {\em p-predecessor} of  $j$ in $\mathcal{I}_U$ if and only if $i$ is the t-predecessor of $j$ in $\mathcal{I}_U$ {\em and} $U[i\dd i+k-1]$ has a suffix-prefix overlap of length $k-1$ with $U[j\dd j+k-1]$. This relation induces a {\em strict partial order} on the occurrences in $\mathcal{I}_U$. We call a subset $\mathcal{J}$ of $\mathcal{I}_U$ a \textit{t-chain} (resp., \textit{p-chain}) if for all elements in $\mathcal{J}$ except the minimum one, their t-predecessor (resp., p-predecessor) is also in $\mathcal{J}$. For two strings $U$ and $V$, chains $\mathcal{J}_U$ and $\mathcal{J}_V$ are {\em equivalent}, denoted by $\mathcal{J}_U \equiv \mathcal{J}_V$, if and only if $|\mathcal{J}_U|=|\mathcal{J}_V|$ and $U[u\dd u+k-1]=V[v\dd v+k-1]$, where $u$ is the $j$th smallest element of $\mathcal{J}_U$ and $v$ is the $j$th smallest of $\mathcal{J}_V$, for all $j\le |\mathcal{J}_U|$.

Given two strings $U$ and $V$ the {\em edit distance} $d_E(U,V)$ is defined as the minimum number of elementary edit operations (letter insertion, deletion, or substitution) to transform $U$ to $V$.

The set of {\em regular expressions} over an alphabet $\Sigma$ is defined recursively as follows~\cite{regex}:
 (I) $a\in \Sigma \cup \{\varepsilon\}$, where $\varepsilon$ denotes the empty string, is a regular expression. (II) If $E$ and $F$ are regular expressions, then so are  $EF$, $E|F$, and $E^*$, where $EF$ denotes the set of strings obtained by concatenating a string in $E$ and a string in $F$, $E|F$ is the union of the strings in $E$ and $F$, and $E^*$ consists of all strings obtained by concatenating zero or more strings from $E$. Parentheses are used to override the natural precedence of the operators, which places the operator $^*$ highest, the concatenation next, and the operator $|$ last. We state that a string $T$ \emph{matches} a regular expression $E$, if $T$ is equal to one of the strings in $E$.

\paragraph{Problem Statements and Main Results} We define the following problem for the \textsc{MSL} setting.

\begin{problem2}[{\TFS}]\label{prob:t-string}
Given $W$, $k$, $\mathcal{S}$, and $\mathcal{I}_W$ construct the {\em shortest} string $X$:

\begin{packed_enum}
    \item[\textbf{C1}] $X$ does not contain any sensitive pattern.
    \item[\textbf{P1}] $\mathcal{I}_W \equiv \mathcal{I}_X$, \ie the t-chains $\mathcal{I}_W$ and $\mathcal{I}_X$ are equivalent.
    \item[\textbf{P2}] $\text{Freq}_X(U)=\text{Freq}_W(U)$, for all $U\in\Sigma^{k}\setminus \{W[i\dd i+k-1]:i \in \mathcal{S}\}$.
\end{packed_enum}
\end{problem2}

{\TFS} requires constructing the shortest string $X$ in which all sensitive patterns from $W$ are concealed (\textbf{C1}), while preserving the order (\textbf{P1}) and the frequency (\textbf{P2}) of all non-sensitive patterns. Our first result is the following.

\begin{restatable}{theorem}{claimone}\label{the:fastx}
Let $W$ be a string of length $n$ over $\Sigma=\{1,\ldots,n^{\cO(1)}\}$.
Given $k<n$ and $\mathcal{S}$, {\TFSA} solves Problem~\ref{prob:t-string} in $\cO(kn)$ time, which is worst-case optimal. An $\cO(n)$-sized representation of $X$ can be built in $\cO(n)$ time.
\end{restatable}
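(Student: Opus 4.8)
The plan is to reduce all three requirements \textbf{C1}, \textbf{P1}, \textbf{P2} to a single statement about the \emph{$\Sigma$-windows} of a candidate string, i.e. its length-$k$ substrings containing no separator $\#$, and then to argue correctness, minimality, running time, and the compact representation in turn. The key characterization I would prove is: a string $X$ satisfies \textbf{C1}, \textbf{P1}, \textbf{P2} if and only if the sequence of its $\Sigma$-windows, read left to right, equals exactly $W[p_1\dd p_1+k-1],\ldots,W[p_m\dd p_m+k-1]$, where $\mathcal{I}_W=\{p_1<\cdots<p_m\}$. Indeed, \textbf{C1} forbids any sensitive $\Sigma$-window; \textbf{P2} forbids any $\Sigma$-window absent from $W$ (it would raise a frequency from $0$) and fixes the multiplicity of each non-sensitive pattern, while the closure property of $\mathcal{S}$ forces every $\Sigma$-window to be a non-sensitive occurrence counted with the correct multiplicity; and \textbf{P1} fixes their order. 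I would then describe {\TFSA} as emitting $W[p_\ell\dd p_\ell+k-1]$ in order, overlapping two consecutive patterns by $k-1$ letters (appending the single letter $W[p_\ell+k-1]$) whenever $W[p_{\ell-1}\dd p_{\ell-1}+k-1]$ has a suffix-prefix overlap of length $k-1$ with $W[p_\ell\dd p_\ell+k-1]$, and otherwise inserting $\#$ and restarting with the full length-$k$ pattern. Correctness is then immediate from the ``if'' direction of the characterization, since the $\Sigma$-windows of this output are by construction exactly the required sequence.

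The heart of the argument is the lower bound on $|X|$. Fix any valid $X$ and let $x_1<\cdots<x_m$ be the starting positions of its $\Sigma$-windows, which number exactly $m$ by the ``only if'' direction. I would prove the following \emph{gap lemma}: for every $\ell$, the offset $x_{\ell+1}-x_\ell$ is either $1$ or at least $k+1$. If instead $1<x_{\ell+1}-x_\ell\le k$, then $x_\ell+1$ lies strictly between two consecutive $\Sigma$-window starts, so $X[x_\ell+1\dd x_\ell+k]$ must contain a $\#$; since $X[x_\ell+1\dd x_\ell+k-1]$ lies inside the $\Sigma$-window at $x_\ell$, that $\#$ can only sit at position $x_\ell+k$, which however lies inside the $\Sigma$-window starting at $x_{\ell+1}\le x_\ell+k$, a contradiction. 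Moreover offset $1$ forces the two patterns to overlap by $k-1$. Hence $|X|\ge k+\sum_{\ell=1}^{m-1}(x_{\ell+1}-x_\ell)$ is minimized by taking offset $1$ precisely when a length-$(k-1)$ overlap exists and offset $k+1$ (one separator) otherwise, which is exactly what {\TFSA} realizes. I expect this gap lemma to be the main obstacle: ruling out the ``partial-overlap'' offsets $2,\ldots,k$ and pinning the separator cost at exactly $k+1$ is what turns the greedy, purely local overlap decision into a provably global optimum.

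For the time bound, each of the $m\le n-k+1$ non-sensitive patterns contributes at most $k+1$ letters, so $|X|=\cO(kn)$, and a single left-to-right scan that tests each required overlap in $\cO(k)$ time produces $X$ explicitly within $\cO(kn)$. For worst-case optimality I would exhibit a family of inputs — for instance a string in which non-sensitive and sensitive positions alternate and no two non-sensitive patterns share a length-$(k-1)$ overlap — forcing a separator before each of $\Theta(n)$ blocks and hence $|X|=\Theta(kn)$; since the output must be written out, $\Omega(kn)$ time is unavoidable, matching the upper bound.

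Finally, for the $\cO(n)$-sized representation I would encode $X$ not letter by letter but as a sequence of $\cO(n)$ instructions of the form ``copy $W[a\dd b]$'', ``emit letter $W[c]$'', or ``emit $\#$'', one group per non-sensitive occurrence, which has total size $\cO(n)$. To build it in $\cO(n)$ time I would replace the $\cO(k)$ overlap tests by $\cO(1)$ longest-common-extension queries after an $\cO(n)$-time preprocessing of $W$ (a suffix tree or array augmented for constant-time lowest-common-ancestor queries), so that each decision ``overlap by $k-1$ versus insert $\#$'' is taken in constant time and the whole scan runs in $\cO(n)$.
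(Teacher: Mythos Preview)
Your proposal is correct and, for the core minimality argument, takes a genuinely different route from the paper. The paper proves minimality by an induction on the length of the processed prefix of $W$: it argues that the partial output $X_j$ after reading $W[0\dd j]$ is optimal, case-splitting on $(C[j-k],C[j-k+1])$ and showing that any shorter extension would violate one of \textbf{C1}, \textbf{P1}, \textbf{P2}. Your argument is instead structural: you first characterise valid outputs via their $\Sigma$-windows, then prove the gap lemma (consecutive $\Sigma$-window starts differ by $1$ or by at least $k+1$), which immediately gives a global lower bound that {\TFSA} matches. This decoupling of ``what outputs are feasible'' from ``how the algorithm runs'' is cleaner and makes the optimality of the greedy overlap decision transparent; the paper's induction, by contrast, stays closer to the code and verifies optimality step by step. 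Both approaches yield the same bound, and both use the same machinery (pointer encoding plus constant-time longest-common-extension queries) for the $\cO(n)$-sized representation.

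One point you should tighten: your worst-case family is described only as ``non-sensitive and sensitive positions alternate and no two non-sensitive patterns share a length-$(k-1)$ overlap'', without exhibiting such a $W$. The paper makes this concrete by taking $W$ to be an order-$(k-1)$ de~Bruijn sequence over $\Sigma$ with $\mathcal{S}$ equal to the odd positions; the de~Bruijn property guarantees that all length-$(k-1)$ substrings are distinct, so no two non-sensitive patterns can overlap by $k-1$, forcing every separator and yielding $|X|=\lceil\frac{n-k+1}{2}\rceil\cdot k+\lfloor\frac{n-k+1}{2}\rfloor=\Theta(kn)$. You should either give this construction or an equivalent one to complete the $\Omega(kn)$ output-size argument.
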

\textbf{P1} implies \textbf{P2}, but \textbf{P1}
is a strong assumption that may result in long output strings that are inefficient to analyze. We thus relax \textbf{P1} to require that the order of appearance remains the same only for sequences of successive non-sensitive length-$k$ substrings that also overlap by $k-1$ letters (p-chains). This leads to the following problem for the \textsc{MSL} setting.

\begin{problem2}[\PFS]\label{prob:p-string}
Given $W$, $k$, $\mathcal{S}$, and $\mathcal{I}_W$ construct a {\em shortest} string $Y$:

\begin{packed_enum}
        \item[\textbf{C1}] $Y$ does not contain any sensitive pattern.
        \item[\textbf{$\boldsymbol{\Pi}$1}] There exists an injective function $f$ from the p-chains of $\mathcal{I}_W$ to the p-chains of $\mathcal{I}_Y$ such that $f(\mathcal{J}_W)\equiv \mathcal{J}_W$ for any p-chain $\mathcal{J}_W$ of $\mathcal{I}_W$.
    \item[\textbf{P2}] $\text{Freq}_Y(U)=\text{Freq}_W(U)$, for all $U\in\Sigma^{k}\setminus \{W[i\dd i+k-1]:i \in \mathcal{S}\}$.
\end{packed_enum}
\end{problem2}

Our second result, which builds on Theorem~\ref{the:fastx}, is the following.

\begin{restatable}{theorem}{claimtwo}\label{the:fasty}
Let $W$ be a string of length $n$ over $\Sigma=\{1,\ldots,n^{\cO(1)}\}$.
Given $k<n$ and $\mathcal{S}$, {\PFSA} solves Problem~\ref{prob:p-string} in the optimal $\cO(n+|Y|)$ time.
\end{restatable}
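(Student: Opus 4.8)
The plan is to reuse the $\cO(n)$-sized representation of $X$ delivered by {\TFSA} (Theorem~\ref{the:fastx}) and to turn the relaxation from \textbf{P1} to \textbf{$\boldsymbol{\Pi}$1} into a reordering of the maximal $\#$-free blocks of $X$. Write $X=B_1\#B_2\#\cdots\#B_m$, where each block is a maximal non-sensitive substring $B_i=W[s_i\dd e_i+k-1]$ whose length-$k$ windows form a maximal p-chain of $\mathcal{I}_W$; these maximal p-chains are exactly the objects that \textbf{$\boldsymbol{\Pi}$1} must preserve. First I would record two structural facts. (i) Since $\mathcal S$ is closed under equal patterns, every non-sensitive pattern occurs only at non-sensitive positions, so $\text{Freq}_W(U)$ equals the number of occurrences of $U$ among the $B_i$; hence the number of length-$k$ windows over $\Sigma$ in any string satisfying \textbf{C1} and \textbf{P2} is the fixed quantity $N:=\sum_i(|B_i|-k+1)$. (ii) Two blocks can share letters without changing any frequency or creating a forbidden window only through a suffix-prefix overlap of length exactly $k-1$: such an overlap re-creates merely the first window of the second block (already counted), whereas an overlap of length $\ge k$ would delete a window and an overlap $<k-1$ (or a plain junction) would manufacture a new window, both violating \textbf{P2} (and possibly \textbf{C1}).

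Building on (ii), I would model the blocks as the edges of a directed multigraph $G$ whose vertices are the distinct length-$(k-1)$ strings occurring as a prefix or suffix of some $B_i$: each $B_i$ is an edge from its $(k-1)$-prefix to its $(k-1)$-suffix. A maximal run of blocks that can be glued by successive $(k-1)$-overlaps is precisely a \emph{trail} of $G$; realizing each of $c$ edge-disjoint trails as one super-block and separating the super-blocks by $\#$ yields an admissible $Y$ with, by (i), $|Y|=N+ck-1$. Conversely, any admissible $Y$ splits into its $c'$ maximal $\#$-free super-blocks, each a p-chain that (by (i) and the injectivity of $f$) is exactly a concatenation of whole blocks glued by $(k-1)$-overlaps, i.e. a trail of $G$; thus $|Y|=N+c'k-1$ with $c'$ at least the minimum number of trails covering $G$. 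Since $N+ck-1$ is strictly increasing in the number of trails, {\PFSA} must compute a minimum trail cover of $G$.

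The minimum number of trails covering all edges of $G$ is the classical Eulerian quantity: summed over weakly connected components carrying at least one edge, it equals $\max\bigl(1,\sum_v\max(0,d^+(v)-d^-(v))\bigr)$ per component, and a matching decomposition is extractable in time linear in the size of $G$. Concretely I would obtain the blocks from {\TFSA} in $\cO(n)$ time; assign equal integer identifiers to equal length-$(k-1)$ substrings of $W$, hence to all edge endpoints, in $\cO(n)$ time using a suffix array (with LCP) or fingerprints of $W$; build $G$ and an optimal trail decomposition in $\cO(m)=\cO(n)$ time; and finally spell out the $c_{\min}$ glued super-blocks, separated by $\#$, in $\cO(|Y|)$ time. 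This gives total time $\cO(n+|Y|)$, which is optimal since reading $W$ and writing $Y$ already cost $\Omega(n+|Y|)$. Correctness of the three conditions is then immediate: \textbf{C1} and \textbf{P2} hold because $(k-1)$-gluing and $\#$ neither create nor destroy a $\Sigma$-window; and \textbf{$\boldsymbol{\Pi}$1} holds by mapping every p-chain of $\mathcal{I}_W$ to the contiguous sub-chain of the super-block hosting its block, which is injective because distinct blocks land in disjoint ranges of $Y$.

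The step I expect to be the main obstacle is the optimality (lower-bound) direction: formally showing that \textbf{$\boldsymbol{\Pi}$1} together with \textbf{P2} forces every admissible $Y$ to be a gluing of the $m$ blocks by overlaps of length exactly $k-1$ only, so that $c'$ cannot beat the minimum trail cover; the crux is the counting identity (i) combined with the disjointness of the images of $f$, which together rule out any cleverer sharing of letters. A secondary technical point is guaranteeing the $(k-1)$-mer identification, and hence the construction of $G$, in truly linear time, i.e. without an extra logarithmic factor.
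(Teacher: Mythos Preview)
Your proposal is correct and follows essentially the same route as the paper: both reduce the problem to rearranging the maximal $\#$-free blocks of $X$ so as to maximize the number of length-$(k-1)$ suffix--prefix merges, and both solve this via a minimum path/trail decomposition of the de~Bruijn multigraph on $(k-1)$-mers (the paper phrases this as a fixed-overlap shortest-superstring problem on two-letter strings, Lemmas~\ref{lem:2scs}--\ref{lem:fl-scs}, but that is exactly your trail-cover computation). Your optimality argument---the counting identity $|Y|=N+ck-1$ together with the observation that \textbf{P2} forces the images $f(B_i)$ of the maximal p-chains to partition the $N$ windows of $Y$---is in fact spelled out more carefully than the paper's one-line appeal in Lemma~\ref{lem:ps}.
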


To arrive at Theorems~\ref{the:fastx} and~\ref{the:fasty}, we use a special letter (separator) $\#\notin\Sigma$ when required. However, the occurrences of $\#$ may reveal the locations of sensitive patterns. We thus seek to delete or replace the occurrences of $\#$ in $Y$ with letters from $\Sigma$. The new string $Z$ should not reinstate  sensitive patterns or create implausible patterns. Given an integer threshold $\tau>0$, we call a pattern $U\in \Sigma^k$ a $\tau\textit{-ghost}$ in $Z$ if and only if $\text{Freq}_W(U)<\tau$ but $\text{Freq}_{Z}(U)\geq\tau$. Moreover, we seek to prevent {\em $\tau\text{-ghost}$ occurrences} in $Z$ by also bounding the total \emph{weight} of the {\em letter choices} we make to replace the occurrences of $\#$. This is the {\MCSR} problem. We show that already a restricted version of the {\MCSR} problem, namely, the version when $k=1$, is NP-hard via the {\em Multiple Choice Knapsack} (MCK) problem~\cite{Pissinger}.

\begin{restatable}{theorem}{claimthree}\label{the:fasty'}
The {\MCSR} problem is NP-hard.
\end{restatable}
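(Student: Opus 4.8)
The plan is to prove NP-hardness of {\MCSR} by a polynomial-time reduction from the Multiple Choice Knapsack (MCK) problem, focusing on the restricted case $k=1$, which the theorem statement already signals as the relevant one. Recall that in MCK we are given disjoint classes of items $N_1,\ldots,N_m$, each item $j$ in class $N_i$ carrying a profit $p_{ij}$ and a weight $w_{ij}$, a capacity $c$, and a target profit $q$; we must choose exactly one item from each class so that the total weight is at most $c$ and the total profit is at least $q$. This is the natural source problem because the {\MCSR} structure already has a ``multiple choice'' flavour: each occurrence of the separator $\#$ must be replaced by exactly one letter from $\Sigma$ (or deleted), and each such local choice incurs a weight (distortion) and interacts with the global constraint of not creating $\tau$-ghosts.

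The first step is to set up the correspondence. When $k=1$, patterns are single letters, so the $\tau$-ghost constraint reduces to a condition on letter frequencies: replacing a $\#$ by a letter $a$ increases $\text{Freq}_Z(a)$, and we must avoid pushing any letter that was infrequent in $W$ (frequency $<\tau$) up to frequency $\geq\tau$ in $Z$. I would identify each occurrence of $\#$ in $Y$ with one MCK class, and the admissible replacement letters for that occurrence (those that keep the construction valid, i.e.\ do not reinstate a sensitive pattern — trivial for $k=1$ aside from which letters are allowed) with the items of that class. The weight of choosing letter $a$ at a given position encodes $w_{ij}$, and the profit/utility encodes $p_{ij}$. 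The total-weight bound of {\MCSR} plays the role of the capacity $c$, and the $\tau$-ghost-avoidance requirement must be engineered so that it forces at most a bounded number of replacements into each ``frequency budget,'' thereby simulating the profit target $q$.

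The key technical step, and the one I expect to be the main obstacle, is making the $\tau$-ghost constraint simultaneously (i) enforce the ``exactly one choice per class'' semantics of MCK and (ii) encode the knapsack capacity/profit trade-off, without the two constraints of {\MCSR} (the weight bound and the ghost bound) collapsing into something easier than MCK. Concretely, I would pad $W$ with a carefully chosen multiset of auxiliary letters so that each candidate replacement letter sits exactly at the threshold $\tau-1$ in $W$; then each replacement of a $\#$ by that letter is precisely the event that turns it into a $\tau$-ghost, so the number of times a given letter may be used is capped. By allocating a distinct family of threshold letters per class and tuning these caps against the global weight budget, the feasibility of an {\MCSR} instance becomes equivalent to the existence of a feasible MCK selection meeting both capacity and profit. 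I would then argue correctness in both directions: a valid MCK solution yields a replacement of all separators respecting the weight bound and introducing no $\tau$-ghost, and conversely any feasible {\MCSR} solution, read off position by position, selects exactly one admissible item per class and hence a feasible MCK solution. Since all gadget sizes (alphabet, padding length, weights) are polynomial in the MCK input and MCK is NP-hard, this establishes that {\MCSR} is NP-hard already for $k=1$, and hence in general.
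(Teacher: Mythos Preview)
Your high-level strategy matches the paper's: reduce MCK to the $k=1$ case of {\MCSR}, identifying each occurrence of $\#$ with an MCK class and each admissible replacement letter with an item of that class. However, the execution has a genuine gap.

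The point you do not exploit is that in the {\MCSR} problem the functions $\text{Ghost}$ and $\text{Sub}$ are \emph{arbitrary inputs} that may depend on both the position $i$ and the chosen letter $Z[i]$. The paper uses this directly: it places one fresh letter $\alpha_{ij}$ in $Y$ for every item, sets $\tau=2$ (so \emph{every} replacement creates a $\tau$-ghost), and simply defines $\text{Ghost}(\mu(i),\alpha_{ij})=c_{ij}$ and $\text{Sub}(\mu(i),\alpha_{ij})=1-w_{ij}$, leaving both functions undefined elsewhere. This last device is precisely what enforces ``exactly one item per class'': the $i$th $\#$ can only be replaced by a letter from the $i$th block because no other choice has a finite cost. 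With $\theta=\delta-b$, the MCK objective $\min\sum c_{ij}x_{ij}$ becomes the {\MCSR} objective $\min\sum\text{Ghost}$, and the MCK constraint $\sum w_{ij}x_{ij}\ge b$ becomes the {\MCSR} constraint $\sum\text{Sub}\le\theta$.

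Your plan instead tries to route everything through frequency padding (``each candidate replacement letter sits exactly at the threshold $\tau-1$'') and aims for feasible solutions that introduce \emph{no} $\tau$-ghost. But if the intended optimum has Ghost cost zero, the Ghost function carries no numeric information, and you are left with only the single linear inequality $\sum\text{Sub}\le\theta$ to encode \emph{both} the MCK capacity $c$ and the profit target $q$. One budget cannot encode two independent numeric constraints; the step you label ``tuning these caps against the global weight budget'' is where this would break, and indeed you leave it unspecified. Moreover, putting a letter at frequency exactly $\tau-1$ means a single use already makes it a $\tau$-ghost, so ``the number of times a given letter may be used'' is capped at zero, not at some useful value. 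The fix is the paper's: make every replacement a $\tau$-ghost and let the $\text{Ghost}$ \emph{values} carry the MCK costs, while $\text{Sub}$ carries the weights.
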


Based on this connection, we propose a non-trivial heuristic algorithm to attack the {\MCSR} problem for the general case of an arbitrary $k$.

We define the following problem for the \textsc{MED} setting.

\begin{problem2}[{\ETFS}]\label{prob:et-string}
Given $W$, $k$, $\mathcal{S}$, and $\mathcal{I}$, construct a string $X_{\text{ED}}$ which is at minimal edit distance from $W$ and satisfies the following:
\begin{packed_enum}
    \item[\textbf{C1}] $X_{\text{ED}}$ does not contain any sensitive pattern.
    \item[\textbf{P1}] $\mathcal{I}_W \equiv \mathcal{I}_{X_{\text{ED}}}$, \ie the t-chains $\mathcal{I}_W$ and $\mathcal{I}_{X_{\text{ED}}}$ are equivalent.
    \item[\textbf{P2}] $\text{Freq}_{X_{\text{ED}}}(U)=\text{Freq}_W(U)$, for all $U\in\Sigma^{k}\setminus \{W[i\dd i+k-1]:i \in \mathcal{S}\}$.
\end{packed_enum}
\end{problem2}

We show how to reduce any instance of the {\ETFS} problem to some instance of the approximate regular expression matching problem.
In particular, the latter instance consists of a string of length $n$ (string $W$) and a regular expression $E$ of length $\cO(k|\Sigma|n)$. We thus prove the claim of Theorem~\ref{the:et-string} by employing the $\cO(|W|\cdot|E|)$-time algorithm of~\cite{regex}.

\begin{restatable}{theorem}{etfsatheorem}\label{the:et-string}
 Let $W$ be a string of length $n$ over an alphabet $\Sigma$. Given $k<n$ and $\mathcal{S}$, {\ETFSA} solves Problem~\ref{prob:et-string} in $\cO(k|\Sigma|n^2)$ time.
\end{restatable}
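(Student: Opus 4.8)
The plan is to reduce the {\ETFS} problem to approximate regular expression matching, which by the cited $\cO(|W|\cdot|E|)$-time algorithm of~\cite{regex} will immediately yield the running time once I bound $|E|$. The whole argument therefore rests on constructing a regular expression $E$ with two properties: first, the set of strings matching $E$ is \emph{exactly} the set of strings satisfying \textbf{C1}, \textbf{P1}, and \textbf{P2}; and second, $|E|=\cO(k|\Sigma|n)$. Given such an $E$, running the approximate-matching algorithm on $W$ and $E$ returns the string $T$ matching $E$ at minimal edit distance from $W$, which is precisely $X_{\text{ED}}$, in $\cO(|W|\cdot|E|)=\cO(n\cdot k|\Sigma|n)=\cO(k|\Sigma|n^2)$ time.

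The key design step is to express the constraints \textbf{C1}, \textbf{P1}, \textbf{P2} as a regular language. I would exploit the structure already uncovered by Theorem~\ref{the:fastx}: the {\TFSA} algorithm shows that any valid string is a concatenation of non-sensitive blocks, separated by $\#$ where necessary and with proper substrings of sensitive patterns repeated to preserve frequencies. Since \textbf{P1} fixes the t-chain $\mathcal{I}_W$ up to equality, the \emph{sequence} of non-sensitive length-$k$ patterns that must appear, and their order, is completely determined by $W$; what remains free is only how consecutive non-sensitive patterns are glued together (by maximal overlap when they overlap by $k-1$, or by inserting a $\#$ together with a repeated proper prefix/suffix of the intervening sensitive pattern otherwise). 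I would therefore build $E$ as a concatenation, walking through the occurrences in $\mathcal{I}_W$ in order: for each non-sensitive pattern I emit its $k$ fixed letters, and between two successive occurrences I insert a gadget $(\,\cdots\,)$ that is a union over the admissible connecting strings. The $|\Sigma|$ factor enters because, at each of the $\cO(n)$ positions, the gadget must allow a choice among alphabet letters (for the separator replacement / padding that keeps \textbf{C1} satisfied without spawning new length-$k$ patterns), and each such alternative has length $\cO(k)$; summing $\cO(k|\Sigma|)$ over $\cO(n)$ junctions gives the claimed $\cO(k|\Sigma|n)$ bound on $|E|$.

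The main obstacle I anticipate is proving correctness of the gadget, that is, that $L(E)$ equals the feasible set \emph{exactly}, with no feasible string excluded and no infeasible string admitted. The subtle direction is completeness together with soundness at the \emph{boundaries between blocks}: I must argue that every way of concatenating the forced non-sensitive patterns that avoids reinstating a sensitive pattern (\textbf{C1}) and avoids accidentally creating or destroying an occurrence of some non-sensitive $U$ (\textbf{P2}) is captured by exactly one branch of some gadget, and conversely that no branch of any gadget introduces a sensitive pattern or perturbs a frequency. This requires a careful case analysis of what can happen across a junction of length $<k$, using the overlap structure of the p-predecessor/t-predecessor relations and the fact that $\#\notin\Sigma$ localizes where spurious patterns could form. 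I would handle this by a local argument: any length-$k$ window of a matched string lies within one block or straddles exactly one junction gadget, and in each case I verify the window is either a prescribed non-sensitive pattern or contains a $\#$ (hence is not a pattern over $\Sigma$), thereby certifying both \textbf{C1} and \textbf{P2}; \textbf{P1} follows because the blocks are emitted in the order and with the multiplicities dictated by $\mathcal{I}_W$.

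Once correctness of the reduction is established, the time bound is immediate, so I expect the technical weight of the proof to sit entirely in the construction and verification of $E$, with the complexity statement being a one-line corollary of~\cite{regex}.
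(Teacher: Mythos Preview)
Your overall plan coincides with the paper's approach: reduce to approximate regular expression matching, build a regular expression $E$ whose language is exactly the set of feasible strings, bound $|E|=\cO(k|\Sigma|n)$, and invoke the $\cO(|W|\cdot|E|)$ algorithm of~\cite{regex}. The soundness argument you sketch (any length-$k$ window either equals a prescribed non-sensitive pattern or contains a $\#$) is also the paper's.

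There is, however, a genuine gap in your description of the junction gadget. You characterize the admissible connection between two consecutive non-sensitive patterns as ``inserting a $\#$ together with a repeated proper prefix/suffix of the intervening sensitive pattern'', and you speak of a \emph{union} of admissible connecting strings with ``exactly one branch'' per feasible string. Taken literally this is both too restrictive and finite, whereas the feasible set is infinite: between $N_i$ and $N_{i+1}$ one may insert \emph{any} string of the form $\#(\Sigma^{<k}\#)^*$, i.e.\ arbitrarily many $\#$-separated blocks each of length at most $k-1$, with the letters unconstrained by $W$. This matters for optimality: the edit-distance-minimizing connection need not reuse letters of the sensitive pattern at all (cf.\ the paper's example $W=\texttt{aaaaaab}$, where $X_{\text{ED}}=\texttt{aaa\#aab}$ although $\mathcal{I}=\emptyset$). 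The paper's gadget therefore uses a Kleene star, $\oplus=\#(\Sigma^{<k}\#)^*$, and additionally places the analogous starred gadgets $\ominus=(\Sigma^{<k}\#)^*$ and $\otimes=(\#\Sigma^{<k})^*$ before $N_1$ and after $N_{|\mathcal{I}|}$, which your construction omits. Without these three ingredients your $L(E)$ is a strict subset of the feasible set, so the returned $T$ need not be optimal; with them, your completeness argument and size bound go through exactly as you outlined.
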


\section{{\TFSA{}}}\label{sec:ts}

We convert string $W$ into a string $X$ over alphabet $\Sigma\cup\{\#\}$, $\#\notin\Sigma$, by reading the letters of $W$, from left to right, and appending them to $X$ while enforcing the following two rules:
\smallskip

\noindent\textbf{R1}: When the last letter of a sensitive substring $U$ is read from $W$, we append $\#$ to $X$ (essentially replacing this last letter of $U$ with $\#$). Then, we append the succeeding non-sensitive substring (in the t-predecessor order) after $\#$.

\noindent\textbf{R2}: When the $k-1$ letters before $\#$ are the same as the $k-1$ letters after $\#$, we remove $\#$ and the $k-1$ succeeding letters (inspect Fig.~\ref{fig:rules}).

\medskip

\textbf{R1} prevents $U$ from occurring in $X$, and \textbf{R2} reduces the length of $X$ (\ie allows to hide sensitive patterns with fewer extra letters). Both rules leave unchanged the order and frequencies of non-sensitive patterns. It is crucial to observe that applying the idea behind \textbf{R2} on more than $k-1$ letters would decrease the frequency of some pattern, while applying it on fewer than $k-1$ letters would create new patterns. Thus, we need to consider just \textbf{R2} \textit{as-is}.

\medskip

\begin{figure}[!ht]
\centering
\includegraphics[width=5.9cm]{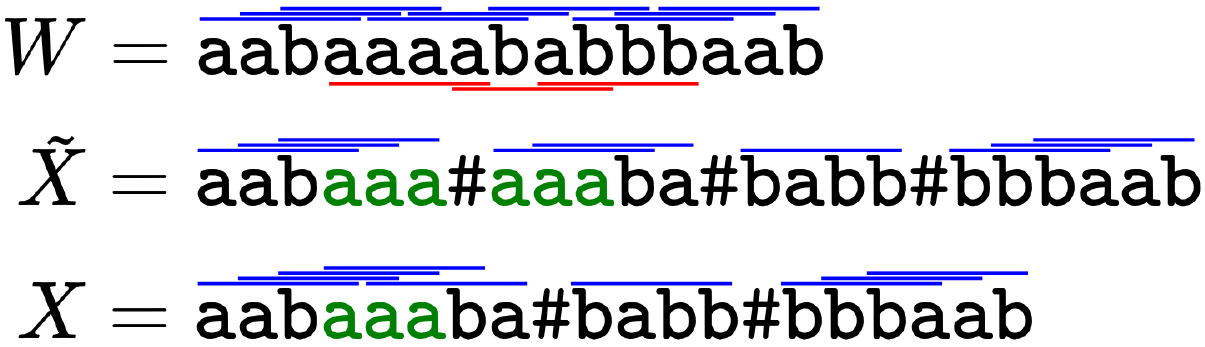}
\caption{Sensitive patterns are underlined in red; non-sensitive patterns are overlined in blue; $\tilde{X}$ is obtained by applying \textbf{R1}; and $X$ by applying \textbf{R1} and \textbf{R2}. In green we highlight an overlap of $k-1=3$ letters.}
\label{fig:rules}
\end{figure}
\medskip

Let $C$ be an array of size $n$ that stores the occurrences of sensitive and non-sensitive patterns: $C[i]=1$ if $i \in \mathcal{S}$ and $C[i]=0$ if $i \in \mathcal{I}$. For technical reasons we set the last $k-1$ values in $C$ equal to $C[n-k]$; \ie~$C[n-k+1]:=\ldots:=C[n-1]:=C[n-k]$.
Note that $C$ is constructible from $\mathcal{S}$ in $\cO(n)$ time.
Given $C$ and $k<n$, {\TFSA} efficiently constructs $X$ by implementing {\bf R1} and {\bf R2} concurrently as opposed to implementing \textbf{R1} and then \textbf{R2} (see the proof of Lemma \ref{lem:properties} for details of the workings of {\TFSA} and Fig.~\ref{fig:rules} for an example). We next show that string $X$ enjoys several properties.

\begin{lemma}
\label{lem:properties}
Let $W$ be a string of length $n$ over $\Sigma$.
Given $k<n$ and array $C$, {\TFSA} constructs the shortest string $X$ such that the following hold:
\begin{itemize}
    \item[(I)] There exists no $W[i\dd i+k-1]$ with $C[i]=1$ occurring in $X$ (\textbf{C1}).
    \item[(II)] $\mathcal{I}_W \equiv \mathcal{I}_X$, \ie the order of substrings $W[i\dd i+k-1]$, for all $i$ such that $C[i]=0$, is the same in $W$ and in $X$; conversely, the order of all substrings $U\in\Sigma^k$ of $X$ is the same in $X$ and in $W$ (\textbf{P1}).
    \item[(III)] $\text{Freq}_X(U) = \text{Freq}_W(U)$, for all $U\in\Sigma^{k}\setminus \{W[i\dd i+k-1]:C[i]=1\}$ (\textbf{P2}).
    \item[(IV)] The occurrences of letter $\#$ in $X$ are at most $\lfloor\frac{n-k+1}{2}\rfloor$ and they are at least $k$ positions apart (\textbf{P3}).
    \item[(V)] $0 \le |X| \le \lceil\frac{n-k+1}{2}\rceil \cdot k + \lfloor\frac{n-k+1}{2}\rfloor$ and these bounds are tight (\textbf{P4}).
\end{itemize}
\end{lemma}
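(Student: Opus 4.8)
The plan is to first pin down the combinatorial structure of the output and then read off all five properties, together with minimality, from that structure. Partition the index range $\{0,\dots,n-k\}$ into maximal runs of \emph{consecutive} non-sensitive positions $[a_1\dd b_1],\dots,[a_g\dd b_g]$ and set $r_j=b_j-a_j+1$. My first claim, proved by maintaining an invariant on \TFSA{}'s left-to-right scan, is that—before \textbf{R2}—the algorithm emits each run $[a_j\dd b_j]$ as the block $B_j=W[a_j\dd b_j+k-1]$ of length $r_j+k-1$, inserts exactly one $\#$ between consecutive blocks (from \textbf{R1}), and emits nothing for a leading or trailing sensitive run; and that \textbf{R2} then merges $B_j$ with $B_{j+1}$—deleting the $\#$ and the length-$(k-1)$ prefix of $B_{j+1}$—precisely when $W[b_j+1\dd b_j+k-1]=W[a_{j+1}\dd a_{j+1}+k-2]$. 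I would verify that applying \textbf{R1} and \textbf{R2} concurrently produces exactly this decomposition, which is the foundation for everything else.

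For (I)--(III) I would argue directly on blocks. Since $\#\notin\Sigma$, every length-$k$ substring of $X$ over $\Sigma$ lies inside a single (possibly merged) block; the $\Sigma$-patterns of $B_j$ are exactly $W[a_j\dd a_j+k-1],\dots,W[b_j\dd b_j+k-1]$, all non-sensitive, while at an \textbf{R2} junction the \emph{unique} spanning pattern equals $W[a_{j+1}\dd a_{j+1}+k-1]$ by the overlap equality, hence is again non-sensitive; this yields (I). For (II) and (III), each non-sensitive occurrence of $W$ appears in $X$ exactly once and in $W$-order, because blocks follow the order of the runs and an \textbf{R2} merge \emph{reproduces} the first pattern of the next block rather than duplicating it; this gives $\mathcal I_W\equiv\mathcal I_X$ and $\mathrm{Freq}_X(U)=\mathrm{Freq}_W(U)$ on every non-sensitive $U$, and also the converse order statement (no spurious $\Sigma$-pattern arises).

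For (IV)--(V), writing $N=n-k+1$: each block has length $\ge k$ (minimum $k$ for a single-position run), so two consecutive separators are at least $k$ letters apart. To bound their number I would charge to each $\#$ two distinct positions—the leftmost sensitive position of its run and the rightmost non-sensitive position of the preceding run—which are pairwise distinct across separators, giving at most $\lfloor N/2\rfloor$ of them. For the length, $|X|=\sum_j(r_j+k-1)+(\#\text{-count})$ before \textbf{R2}; maximizing this subject to the run structure (at most $\lceil N/2\rceil$ non-sensitive runs, each single-position, each internal sensitive run a single position) yields $\lceil N/2\rceil\cdot k+\lfloor N/2\rfloor$, attained by an alternating input on which \textbf{R2} never fires (tightness of the upper bound), while the lower bound $0$ is attained by an all-sensitive input.

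The main obstacle is minimality: for any $X'$ satisfying \textbf{C1},\textbf{P1},\textbf{P2} I must show $|X'|\ge|X|$, which requires ruling out \emph{all} alternative layouts rather than reasoning about the one \TFSA{} produces. Here \textbf{P1} fixes the entire sequence of non-sensitive patterns, and \textbf{C1} with \textbf{P2} forces every length-$k$ substring of $X'$ over $\Sigma$ to be one of these patterns with matching multiplicity. Within a run the $r_j$ consecutive patterns must overlap by exactly $k-1$ (a smaller overlap manufactures a $\Sigma$-pattern absent from $W$, and overlap $\ge k$ is impossible for distinct consecutive occurrences), so run $j$ costs at least $r_j+k-1$; between consecutive runs, abutting or under-overlapping two blocks creates a spurious or sensitive spanning pattern, so an unavoidable break of cost $\ge 1$ is needed unless the clean $(k-1)$-overlap is admissible—exactly \textbf{R2}'s condition—where it saves precisely $k$. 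I would close the argument by showing greedy merging is optimal: the merge decision at junction $(j,j+1)$ leaves the length-$(k-1)$ suffix that governs junction $(j+1,j+2)$ unchanged, so the junction savings are independent and summing the forced per-run and per-transition costs reproduces $|X|$. I expect this independence-of-merges step and the ``no cheaper transition'' dichotomy to be the delicate part of the proof.
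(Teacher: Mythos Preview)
Your plan is correct and follows a genuinely different route from the paper for the minimality claim. Both you and the paper first establish that the output of \TFSA{} is the concatenation of the blocks $B_j=W[a_j\dd b_j+k-1]$ (one per maximal non-sensitive run), with a $\#$ between consecutive blocks except where the $(k{-}1)$-suffix of $B_j$ equals the $(k{-}1)$-prefix of $B_{j+1}$. From there, the paper proves each of (I)--(V) by direct reference to the algorithm's line numbers, and proves minimality by \emph{induction on the processed prefix}: it shows that $X_j$, the output after reading $W[0\dd j]$, is shortest for $W[0\dd j]$ via a short case analysis on $(C[j-k],C[j-k+1])$. You instead argue globally from the block decomposition and, for minimality, lower-bound \emph{any} admissible $X'$: \textbf{P1} pins down the entire ordered sequence of $\Sigma$-patterns, so $X'$ is a concatenation of maximal $\Sigma$-blocks each carrying a contiguous segment of that sequence with successive $(k{-}1)$-overlaps; this forces run~$j$ to cost $\ge r_j+k-1$ and each non-mergeable junction to cost $\ge 1$, while mergeable junctions save exactly $k$, and the independence of merge decisions (the suffix governing the next junction is unchanged by a merge) shows \TFSA{} attains the bound. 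Your argument makes the optimality mechanism transparent; the paper's induction is terser but relies implicitly on the fact that an optimal string for $W[0\dd j]$ extends one for $W[0\dd j{-}1]$.

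Two small points to tighten. First, at an \textbf{R2} junction there is not a \emph{unique} spanning $\Sigma$-pattern but up to $k-1$ of them; the overlap equality makes the $i$th one equal $W[a_{j+1}+i\dd a_{j+1}+i+k-1]$, hence a pattern of run $j{+}1$ and thus non-sensitive, so your conclusion survives but the wording should be corrected. Second, for the tightness of the upper bound in \textbf{P4} you must exhibit an input where $C$ alternates \emph{and} \textbf{R2} never fires; this requires the relevant $(k{-}1)$-windows to be pairwise distinct, which the paper secures by taking $W$ to be a de~Bruijn sequence of order $k-1$---you should supply such a construction rather than assume one exists.
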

\begin{scriptsize}
\begin{algorithm}[!ht]
 \NoCaptionOfAlgo
 \LinesNotNumbered
 \SetAlgoSkip{}
 \SetKw{report}{report}
 \caption{{\TFSA}$(W\in\Sigma^{n},C,k,\#\notin\Sigma)$}
 \label{alg:ts}
 \nl \label{line1} $X \gets \varepsilon$; $j\gets |W|$; $\ell \gets 0$\;

 \nl \label{line2} $j \gets \min\{i | C[i]=0\}$;\textcolor{gray}{\tcc*[f]{$j$ is the leftmost pos of a non-sens. pattern}}\\
 \nl \label{line3} \If( \textcolor{gray}{\tcc*[f]{Append the first non-sens. pattern to $X$}}){$j+k-1 < |W|$}{

  \nl \label{line4}  $X[0 \dd k-1] \gets W[j \dd j+k-1]$; $j \gets j+k$; $\ell \gets \ell+k$\;
}
  \nl \label{line5} \While( \textcolor{gray}{\tcc*[f]{Examine two consecutive patterns}}){$j < |W|$}{
    \nl \label{line6}   $p \gets j-k$; $c \gets p+1$\;
    \nl\label{line7}    \If(\textcolor{gray}{\tcc*[f]{If both are non-sens., append the last letter of the rightmost one to $X$}}){$C[p]=C[c]=0$}{
    \nl\label{line8}        $X[\ell]\gets W[j]$; $\ell \gets \ell +1$; $j \gets j+1$\;
        }
    \nl \label{line9}   \If(\textcolor{gray}{\tcc*[f]{If the rightmost is sens., mark it and advance $j$}}){$C[p]=0 \land C[c]=1$}{
     \nl  \label{line10}     $f \gets c$;
               $j\gets j+1$\;
        }
     \nl \label{line11}  \lIf(\textcolor{gray}{\tcc*[f]{If both are sens., advance $j$}}){$C[p]=C[c]=1$}{$j\gets j+1$}

     \nl \label{line12}  \If( \textcolor{gray}{\tcc*[f]{If the leftmost is sens. and the rightmost is not}}){$C[p]=1 \land C[c]=0$}{

      \nl\label{line13}      \If(\textcolor{gray}{\tcc*[f]{If the last marked sens. pattern and the current non-sens. overlap by $k-1$, append the last letter of the latter to $X$}}){$W[c \dd c+k-2]= W[f \dd f+k-2]$}{
      \nl \label{line14}        $X[\ell] \gets W[j]$; $\ell \gets \ell +1$; $j \gets j+1$\;
            }

      \nl\label{line15} \Else(\textcolor{gray}{\tcc*[f]{Else append $\#$ and the current non-sens. pattern to $X$}}){
       \nl \label{line16}        $X[\ell]\gets \#$; $\ell \gets \ell +1$\;
       \nl \label{line17}        $X[\ell \dd \ell + k - 1] \gets W[j-k+1 \dd j]$; $\ell \gets \ell+k$; $j \gets j+1$\;
            }
        }
    }
\nl \label{line18}    \report $X$
  \end{algorithm}
\end{scriptsize}
\begin{proof}
    \textbf{C1}: Index $j$ in {\TFSA} runs over the positions of string $W$; at any moment it indicates the ending position of the currently considered length-$k$ substring of $W$. When $C[j-k+1]=1$ (Lines \ref{line9}-\ref{line11}) {\TFSA} never appends $W[j]$, \ie~the last letter of a sensitive length-$k$ substring, implying that, by construction of $C$, no $W[i \dd i+k-1]$ with $C[i]=1$ occurs in $X$.

    \textbf{P1}: When $C[j-k]=C[j-k+1]=0$ (Lines \ref{line7}-\ref{line8}) {\TFSA} appends $W[j]$ to $X$, thus the order of $W[j-k \dd j-1]$ and $W[j-k+1 \dd j]$ is clearly preserved. When $C[j-k]=0$ and $C[j-k+1]=1$, index $f$ stores the starting position on $W$ of the $(k-1)$-length suffix of the last non-sensitive substring appended to $X$ (see also Fig.~\ref{fig:rules}). \textbf{C1} ensures that no sensitive substring is added to $X$ in this case, nor when $C[j-k]=C[j-k+1]=1$. The next letter will thus be appended to $X$ when $C[j-k]=1$ and $C[j-k+1]=0$ (Lines \ref{line12}-\ref{line17}). The condition on Line \ref{line13} is satisfied if and only if the last non-sensitive length-$k$ substring appended to $X$ overlaps with the immediately succeeding non-sensitive one by $k-1$ letters: in this case, the last letter of the latter is appended to $X$ by Line \ref{line14}, clearly maintaining the order of the two. Otherwise, Line \ref{line17} will append $W[j-k+1 \dd j]$ to $X$, once again maintaining the length-$k$ substrings' order.
    Conversely, by construction, any $U \in \Sigma^k$ occurs in $X$ only if it equals a length-$k$ non-sensitive substring of $W$. The only occasion when a letter from $W$ is appended to $X$ more then once is when Line \ref{line17} is executed: it is easy to see that in this case, because of the occurrence of $\#$, each of the $k-1$ repeated letters creates exactly one $U \notin \Sigma^k$, without introducing any new length-$k$ string over $\Sigma$ nor increasing the occurrences of a previous one. Finally, Line \ref{line14} does not introduce any new $U \in \Sigma^k$ except for the one present in $W$, nor any extra occurrence of the latter, because it is only executed when two consecutive non-sensitive length-$k$ substrings of $W$ overlap exactly by $k-1$ letters.

    \textbf{P2}: It follows from the proof for \textbf{C1} and \textbf{P1}.

    \textbf{P3}: Letter $\#$ is added only by Line \ref{line16}, which is executed only when $C[j-k]=1$ and $C[j-k+1]=0$. This can be the case up to $\lceil\frac{n-k+1}{2}\rceil$ times as array $C$ can have alternate values only in the first $n-k+1$ positions. By construction, $X$ cannot start with $\#$ (Lines \ref{line2}-\ref{line4}), and thus the maximal number of occurrences of $\#$ is $\lfloor\frac{n-k+1}{2}\rfloor$. By construction, letter $\#$ in $X$ is followed by at least $k$ letters (Line \ref{line17}): the leftmost non-sensitive substring following a sequence of one or more occurrences of sensitive substrings in $W$.

    \textbf{P4}:{\em Upper bound.} {\TFSA} increases the length of string $X$ by more than one letter only when letter $\#$ is added to $X$ (Line \ref{line16}).
    Every time Lines \ref{line16}-\ref{line17} are executed, the length of $X$ increases by $k+1$ letters.
    Thus the length of $X$ is maximized when the maximal number of occurrences of $\#$ is attained. This length is thus bounded by $\lceil\frac{n-k+1}{2}\rceil \cdot k + \lfloor\frac{n-k+1}{2}\rfloor$.

    {\em Tightness.} For the lower bound, let $W=a^n$ and $a^k$ be sensitive. The condition at Line \ref{line3} is not satisfied because no element in $C$ is set to 0: $j=n$. Then the condition on Line \ref{line5} is also not satisfied because $j=n$, and thus {\TFSA} outputs the empty string. A \emph{de Bruijn sequence} of order $k$ over an alphabet $\Sigma$ is a string in which every possible length-$k$ string over $\Sigma$ occurs exactly once as a substring. For the upper bound, let $W$ be the order-$(k-1)$ de Bruijn sequence over alphabet $\Sigma$, $n-k$ be even, and $\mathcal{S}=\{1,3,5,\ldots,n-k-1\}$. $C[0]=0$ and so Line \ref{line4} will add the first $k$ letters of $W$ to $X$. Then observe that $C[1]=1, C[2]=0; C[3]=1, C[4]=0,\ldots$, and so on; this sequence of values corresponds to satisfying Lines \ref{line12} and \ref{line9} alternately. Line \ref{line9} does not add any letter to $X$. The {\em if} statement on Line \ref{line13} will always fail because of the de Bruijn sequence property. We thus have a sequence of the non-sensitive length-$k$ substrings of $W$ interleaved by occurrences of $\#$ appended to $X$. {\TFSA} thus outputs a string of length $\lceil\frac{n-k+1}{2}\rceil \cdot k + \lfloor\frac{n-k+1}{2}\rfloor$ (see Example~\ref{ex:debruijn}).

 We finally prove that $X$ has minimal length. Let $X_j$ be the prefix of string $X$ obtained by processing $W[0\dd j]$. Let $j_{\min}=\min\{i | C[i]=0\}+k-1$. We will proceed by induction on $j$, claiming that $X_j$ is the shortest string such that \textbf{C1} and \textbf{P1}-\textbf{P4} hold for $W[0 \dd j], \,\,\forall\,j_{\min}\le j \le |W|-1$. We call such a string \emph{optimal}.

\textit{Base case: $j=j_{\min}$.} By Lines \ref{line3}-\ref{line4} of {\TFSA}, $X_j$ is equal to the first non-sensitive length-$k$ substring of $W$, and it is clearly the shortest string such that \textbf{C1} and \textbf{P1}-\textbf{P4} hold for $W[0 \dd j]$.

\textit{Inductive hypothesis and step:} $X_{j-1}$ is optimal for $j>j_{\min}$.
    If $C[j-k]=C[j-k+1]=0$, $X_j=X_{j-1}W[j]$ and this is clearly optimal. If $C[j-k+1]=1$, $X_j=X_{j-1}$ thus still optimal. Finally, if $C[j-k]=1$ and $C[j-k+1]=0$ we have two subcases: if $W[f\dd f+k-2]=W[j-k+1 \dd j-1]$ then $X_j=X_{j-1}W[j]$, and once again $X_j$ is evidently optimal. Otherwise, $X_j=X_{j-1}\#W[j-k+1 \dd j]$. Suppose by contradiction that there exists a shorter $X'_j$ such that \textbf{C1} and \textbf{P1}-\textbf{P4} still hold:  either drop $\#$ or append less than $k$ letters after $\#$. If we appended less than $k$ letters after $\#$, since {\TFSA} will not read $W[j]$ ever again, \textbf{P2}-\textbf{P3} would be violated, as an occurrence of $W[j-k+1 \dd j]$ would be missed.
    Without $\#$, the last $k$ letters of $X_{j-1}W[j-k+1]$ would violate either \textbf{C1} or \textbf{P1} and \textbf{P2} (since we suppose \small$W[f\dd f+k-2]\neq W[j-k+1 \dd j-1]$\normalsize). Then $X_j$ is optimal.~\qed 
\end{proof}

\begin{example}[Illustration of \textbf{P3}]
\label{ex:debruijn}
Let $k=4$. We construct the order-$3$ de Bruijn sequence $W=\texttt{baaabbbaba}$ of length $n=10$ over alphabet $\Sigma=\{\texttt{a},\texttt{b}\}$, and choose $\mathcal{S} = \{1,3,5\}$. {\TFSA} constructs:
\[X=\texttt{baaa\#aabb\#bbba\#baba}.\]
The upper bound of $\lceil\frac{n-k+1}{2}\rceil \cdot k + \lfloor\frac{n-k+1}{2}\rfloor=19$ on the length of $X$ is attained.
\qed \end{example}

Let us now show the main result of this section.

\claimone*
\begin{proof}
For the first part inspect {\TFSA}. Lines \ref{line2}-\ref{line4} can be realized in $\cO(n)$ time. The \emph{while} loop in Line \ref{line5} is executed no more than $n$ times, and every operation inside the loop takes $\cO(1)$ time except for Line \ref{line13} and Line \ref{line17} which take $\cO(k)$ time. Correctness and optimality follow directly from Lemma~\ref{lem:properties} (\textbf{P4}).

For the second part, we assume that $X$ is represented by $W$ and a sequence of pointers $[i,j]$ to $W$ interleaved (if necessary) by occurrences of $\#$. In Line \ref{line17}, we can use an interval $[i,j]$ to represent the length-$k$ substring of $W$ added to $X$. In all other lines (Lines \ref{line4}, \ref{line8} and \ref{line14}) we can use $[i,i]$ as one letter is added to $X$ per one letter of $W$.
By Lemma~\ref{lem:properties} we can have at most $\lfloor\frac{n-k+1}{2}\rfloor$ occurrences of letter $\#$. The check at Line \ref{line13} can be implemented in constant time after linear-time pre-processing of $W$ for longest common extension queries~\cite{DBLP:books/daglib/0020103}. All other operations take in total linear time in $n$. Thus there exists an $\cO(n)$-sized representation of $X$ and it is constructible in $\cO(n)$ time.~\qed
\end{proof}

\section{{\PFSA}}\label{sec:ps}

Lemma~\ref{lem:properties} tells us that $X$ is the shortest string satisfying constraint \textbf{C1} and properties \textbf{P1}-\textbf{P4}. If we were to drop \textbf{P1} and employ the partial order {$\mathbf{\boldsymbol{\Pi} 1}$} (see Problem \ref{prob:p-string}), the length of $X=X_1\#\ldots\#X_N$ would not always be minimal:
if a {\em permutation} of the strings $X_1,\ldots,X_N$ contains pairs $X_i$, $X_j$ with a suffix-prefix overlap of length $\ell=k-1$, we may further apply \textbf{R2}, obtaining a shorter string.

To find such a permutation efficiently and construct a shorter string $Y$ from $W$, we propose {\PFSA}. The crux of our algorithm is an efficient method to solve a variant of the classic NP-complete \emph{Shortest Common Superstring} (SCS) problem~\cite{DBLP:journals/jcss/GallantMS80}. Specifically our algorithm: (I) Computes the string $X$ using Theorem~\ref{the:fastx}. (II) Constructs a collection $\mathcal{B}'$ of strings, each of two letters (two ranks); the first (resp., second) letter is the lexicographic rank of the length-$\ell$ prefix (resp., suffix) of each string in the collection $\mathcal{B}=\{X_1, \ldots, X_N\}$. (III) Computes a shortest string containing every element in $\mathcal{B}'$ as a distinct substring. (IV) Constructs $Y$ by mapping back each element to its distinct substring in $\mathcal{B}$. If there are multiple possible shortest strings, one is selected arbitrarily.

\begin{example}[Illustration of the workings of {\PFSA}]\label{ex:multiplicities}
Let $\ell=k-1=3$ and

$$X=\texttt{aabaa}\#\texttt{aaacbcbbba}\#\texttt{baabbacaab}.$$

The collection $\mathcal{B}$ is comprised of the following substrings: $X_1=\texttt{aabaa}$,    $X_2=\texttt{aaacbcbbba}$, and $X_3=\texttt{baabbacaab}$. The collection $\mathcal{B}'$ is comprised of the following two-letter strings: $\texttt{23},\texttt{14},\texttt{32}$. To construct $B'$, we first find the length-$3$ prefix and the length-$3$ suffix of each $X_i$, $i\in[1,3]$, which leads to a collection $\{\texttt{aab},\texttt{baa},  \texttt{aaa},\texttt{bba}\}$. Then, we sort the collection  lexicographically to obtain $\{\texttt{aaa},\texttt{aab}, \texttt{baa}, \texttt{bba}\}$, and last we replace each $X_i$, $i\in[1,3]$, with the lexicographic ranks of its length-$3$ prefix and length-$3$ suffix. For instance, $X_1$  is replaced by $23$. After that, a shortest string containing all elements of $\mathcal{B}'$ as distinct substrings is computed as: $\texttt{14}\cdot  \texttt{232}$. This shortest string is mapped back to the solution $Y=\texttt{aaacbcbbba}\#\texttt{aabaabbacaab}$.
Note, $Y$ contains one occurrence of $\#$ and has length  $23$, while $X$ contains $2$ occurrences of $\#$ and has length $27$.
\qed \end{example}

We now present the details of {\PFSA}.
We first introduce the {\em Fixed-Overlap Shortest String with Multiplicities} (FO-SSM) problem:  Given a {\em collection} $\mathcal{B}$ of strings $B_1,\ldots,B_{|\mathcal{B}|}$ and an integer $\ell$, with $|B_i|>\ell$, for all $1\leq i \leq |\mathcal{B}|$, FO-SSM seeks to find a shortest string containing each element of $\mathcal{B}$ as a distinct substring using the following operations on any pair of strings $B_i,B_j$:

\begin{itemize}
\item[(I)] $\texttt{concat}(B_i,B_j)=B_i \cdot B_j$;
\item[(II)] $\ell$-$\texttt{merge}(B_i,B_j)=B_i[0\dd |B_i|-1-\ell] B_j[0\dd |B_j|-1] = B_i[0\dd |B_i|-1-\ell] \cdot B_j$.
\end{itemize}

Any solution to FO-SSM with $\ell:=k-1$ and $\mathcal{B}:=X_1,\ldots,X_{N}$ implies a solution to the PFS problem, because $|X_i| > k-1$ for all $i$'s (see Lemma~\ref{lem:properties}, \textbf{P3})

The FO-SSM problem is a variant of the SCS problem. In the SCS problem, we are given a {\em set} of strings and we are asked to compute the shortest common superstring of the elements of this set. The SCS problem is known to be NP-complete, even for binary strings \cite{DBLP:journals/jcss/GallantMS80}. However, if all strings are of length two, the SCS problem admits a linear-time solution~\cite{DBLP:journals/jcss/GallantMS80}. We exploit this crucial detail positively to show a linear-time solution to the FO-SSM problem in Lemma~\ref{lem:fl-scs}. In order to arrive to this result, we first adapt the SCS linear-time solution of~\cite{DBLP:journals/jcss/GallantMS80} to our needs (see Lemma~\ref{lem:2scs}) and plug this solution into Lemma~\ref{lem:fl-scs}.

\begin{lemma}\label{lem:2scs}
Let $\mathcal{Q}$ be a collection of $q$ strings, each of length two, over an alphabet $\Sigma=\{1,\ldots,(2q)^{\cO(1)}\}$. We can compute a shortest string containing every element of $\mathcal{Q}$ as a distinct substring in $\cO(q)$ time.
\end{lemma}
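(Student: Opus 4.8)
The plan is to reduce the problem to computing a minimum trail decomposition of a directed multigraph and to solve it in linear time via Eulerian‑circuit techniques, adapting the approach of~\cite{DBLP:journals/jcss/GallantMS80}.

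First I would build a directed multigraph $G=(V,E)$ from $\mathcal{Q}$: for every string $B_i=ab$ with $a,b\in\Sigma$ I add one arc $a\to b$, so that $|E|=q$ and $|V|\le 2q$. Since the alphabet is $\{1,\dots,(2q)^{\cO(1)}\}$, each occurring letter is an integer with $\cO(1)$ digits in base $2q$, so a radix sort renames the at most $2q$ occurring letters to the range $[1,2q]$ in $\cO(q)$ time; thereafter every array I use has size $\cO(q)$. The central observation is a correspondence between strings and walks: a string $S$ of length $m$ is a walk of $m-1$ arcs in the complete digraph on $\Sigma$, its length-$2$ substrings being exactly the traversed arcs. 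Hence $S$ contains every element of $\mathcal{Q}$ as a \emph{distinct} substring if and only if its walk traverses each arc of $E$ (with its multiplicity) at distinct positions, and minimizing $|S|$ is equivalent to minimizing the number of arcs of such a walk.

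Next I would prove that the minimum length equals $q+p^{*}$, where $p^{*}$ is the minimum number of edge-disjoint trails that partition $E$. For the upper bound, any partition of $E$ into $p$ trails yields, for a trail with $e_j$ arcs, a string of length $e_j+1$; concatenating these gives a single string of length $\sum_j(e_j+1)=q+p$ (concatenation adds no letters, and the spurious length-$2$ strings created at the junctions are harmless for mere containment). For the matching lower bound, I take any valid $S$ and mark the $q$ positions realizing the required occurrences; a maximal run of consecutive marked positions is a sub-walk using distinct arcs of $E$, i.e.\ a trail, and these runs partition $E$ into some number $r\ge p^{*}$ of trails. Since distinct runs are separated by at least one unmarked arc, $S$ has at least $q+(r-1)$ arcs, whence $|S|\ge q+r\ge q+p^{*}$. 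I would then compute $p^{*}$ in closed form: trails live inside weakly connected components, so $p^{*}=\sum_i \max(1,b_i)$ over the components $G_i$, where $b_i=\sum_{v\in G_i}\max\bigl(0,\,d^{+}(v)-d^{-}(v)\bigr)$ is the total out-excess of $G_i$; a balanced (Eulerian) component needs one closed trail, while in an unbalanced one each unit of out-excess must start a separate trail and $b_i$ trails suffice.

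To actually output the string in $\cO(q)$ time I would process each component: pair its out-excess vertices with its in-excess vertices and insert $b_i$ auxiliary arcs to balance it (adding arcs preserves weak connectivity, and a balanced weakly connected digraph is Eulerian), run Hierholzer's algorithm to extract an Eulerian circuit in time linear in the number of arcs, and delete the $b_i$ auxiliary arcs to split the circuit into exactly $\max(1,b_i)$ trails. Concatenating the trail-strings over all components yields a shortest string of length $q+p^{*}$, and every step runs in $\cO(q)$. The main obstacle I anticipate is the lower bound, namely arguing rigorously that a single arbitrary string cannot beat $q+p^{*}$: this needs the run/gap counting above together with the degree-imbalance characterization of $p^{*}$, so that connectivity provably forces at least $r-1$ wasted arcs and $r$ cannot drop below the imbalance bound. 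The only remaining delicate point is implementational, keeping everything within $\cO(q)$ despite a polynomially (not constantly) sized alphabet, which is exactly what the initial radix-sort renaming secures.
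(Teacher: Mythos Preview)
Your proposal is correct and follows essentially the same approach as the paper: reduce the problem to a minimum trail decomposition of the de Bruijn multigraph of $\mathcal{Q}$ and compute it in $\cO(q)$ time via Eulerian-circuit techniques after a radix-sort renaming of the alphabet. The only differences are cosmetic --- you balance each component with auxiliary arcs and run Hierholzer's algorithm, whereas the paper greedily extracts maximal paths from out-excess vertices and then splices in the remaining cycles --- and you supply an explicit lower-bound argument ($|S|\ge q+p^{*}$) where the paper simply asserts that a minimal path decomposition yields a shortest superstring.
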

\begin{proof}
We sort the elements of $\mathcal{Q}$ lexicographically in $\cO(q)$ time using radixsort.
We also replace every letter in these strings with their {\em lexicographic rank} from $\{1,\ldots,2q\}$ in $\cO(q)$ time using radixsort. In $\cO(q)$ time we construct the de Bruijn multigraph $G$ of these strings~\cite{CAZAUX2016}. Within the same time complexity, we find all nodes $v$ in $G$ with in-degree, denoted by $\text{IN}(v)$, smaller than out-degree, denoted by $\text{OUT}(v)$. We perform the following two steps:

\paragraph{Step 1} While there exists a node $v$ in $G$ with $\text{IN}(v)<\text{OUT}(v)$, we start an arbitrary path (with possibly repeated nodes) from $v$,
traverse consecutive edges and delete them. Each time we delete an edge, we update the in- and out-degree of the affected nodes.
 We stop traversing edges when a node $v'$ with $\text{OUT}(v')=0$ is reached: whenever $\text{IN}(v')=\text{OUT}(v')=0$, we also delete $v'$ from $G$. Then, we add the traversed path $p=v \ldots v'$ to a set $\mathcal{P}$ of paths. The path can contain the same node $v$ more than once. If $G$ is empty we halt. Proceeding this way, there are no two elements $p_1$ and $p_2$ in $\mathcal{P}$ such that $p_1$ starts with $v$ and $p_2$ ends with $v$; thus this path decomposition is minimal. If $G$ is not empty at the end, by construction, it consists of only cycles.

\paragraph{Step 2} While $G$ is not empty, we perform the following. If there exists a cycle $c$ that {\em intersects} with any path $p$ in $\mathcal{P}$ we splice $c$ into $p$, update $p$ with the result of splicing, and delete $c$ from $G$. This operation can be efficiently implemented by maintaining an array $A$ of size $2q$ of linked lists over the paths in $\mathcal{P}$: $A[\alpha]$ stores a list of  pointers to all occurrences of letter $\alpha$ in the elements of $\mathcal{P}$. Thus in constant time per node of $c$ we check if any such path $p$ exists in $\mathcal{P}$ and splice the two in this case. If no such path exists in $\mathcal{P}$, we add to $\mathcal{P}$ any of the path-linearizations of the cycle, and delete the cycle from $G$. After each change to $\mathcal{P}$, we update $A$ and delete every node $u$ with $\text{IN}(u)=\text{OUT}(u)=0$ from $G$.

  The correctness of this algorithm follows from the fact that $\mathcal{P}$ is a minimal path decomposition of $G$. Thus any concatenation of paths in $\mathcal{P}$ represents a shortest string containing all elements in $\mathcal{Q}$ as distinct substrings.~\qed
\end{proof}

\begin{lemma}\label{lem:fl-scs}
Let $\mathcal{B}$ be a collection of strings over an alphabet $\Sigma=\{1,\ldots,||\mathcal{B}||^{\cO(1)}\}$. Given an integer $\ell$, the FO-SSM problem for $\mathcal{B}$ can be solved in $\cO(||\mathcal{B}||)$ time.
\end{lemma}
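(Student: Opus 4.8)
The plan is to reduce the FO-SSM problem to the length-two SCS problem solved in Lemma~\ref{lem:2scs}, exactly mirroring the four-step workflow described for {\PFSA} in Example~\ref{ex:multiplicities}. The key observation is that the only way two strings $B_i, B_j$ can be combined to save length is via the $\ell$-$\texttt{merge}$ operation, which requires a suffix-prefix overlap of length exactly $\ell$; the $\texttt{concat}$ operation saves nothing. Hence the entire combinatorial structure of an optimal FO-SSM solution is determined solely by the length-$\ell$ prefix and the length-$\ell$ suffix of each $B_i$ (this is well-defined since $|B_i|>\ell$ for all $i$). This means I can abstract each string $B_i$ down to a two-letter string whose first letter encodes its length-$\ell$ prefix and whose second letter encodes its length-$\ell$ suffix.

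First I would build the collection $\mathcal{B}'$ of two-letter strings: extract the length-$\ell$ prefix and length-$\ell$ suffix of each $B_i$, collect these $2|\mathcal{B}|$ substrings of length $\ell$, and assign each a lexicographic rank. The ranking is the efficiency-critical step: naively sorting length-$\ell$ strings costs $\cO(\ell)$ per comparison, but I would instead sort all these length-$\ell$ substrings together using radixsort over the rankable alphabet, which takes $\cO(||\mathcal{B}||)$ total time since the combined length of all prefixes and suffixes is $\cO(||\mathcal{B}||)$. Replacing each $B_i$ by the pair of ranks of its prefix and suffix yields $\mathcal{B}'$, a collection of $q=|\mathcal{B}|$ strings each of length two over an alphabet of size at most $2q$, constructed in $\cO(||\mathcal{B}||)$ time. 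Two strings $B_i,B_j$ admit an $\ell$-$\texttt{merge}$ precisely when the suffix-rank of $B_i$ equals the prefix-rank of $B_j$, i.e.\ when the corresponding two-letter strings overlap by one letter; thus a shortest superstring of $\mathcal{B}'$ (in the distinct-substring sense of Lemma~\ref{lem:2scs}) encodes an optimal sequence of merges for $\mathcal{B}$.

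Next I would invoke Lemma~\ref{lem:2scs} on $\mathcal{B}'$ to compute in $\cO(q)=\cO(||\mathcal{B}||)$ time a shortest string containing every element of $\mathcal{B}'$ as a distinct substring. Finally, I would map this solution back: each two-letter symbol in the computed superstring corresponds to a distinct $B_i$, and wherever two adjacent symbols share a letter (an overlap in $\mathcal{B}'$), the corresponding strings are combined by $\ell$-$\texttt{merge}$ (removing the shared length-$\ell$ region once), while non-overlapping adjacencies correspond to $\texttt{concat}$. Reconstructing the actual FO-SSM string costs time linear in the output, which is $\cO(||\mathcal{B}||)$. Summing the four steps gives the claimed $\cO(||\mathcal{B}||)$ bound.

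The correctness argument is where I would spend the most care, and I expect it to be the main obstacle: I must establish that the abstraction to two-letter strings loses no information relevant to optimality. Concretely, I need to argue a bijection between valid FO-SSM combination schemes on $\mathcal{B}$ and valid distinct-substring superstrings of $\mathcal{B}'$ that preserves total length savings — each length-$\ell$ merge on $\mathcal{B}$ saves exactly $\ell$ letters and corresponds to exactly one one-letter overlap in $\mathcal{B}'$, so minimizing the length of the $\mathcal{B}'$-superstring minimizes the length of the FO-SSM output. The subtlety is the word \emph{distinct} in both problems: I must ensure that treating each $B_i$ as a distinct substring (even when several $B_i$ happen to share the same prefix--suffix rank pair, hence map to identical two-letter strings) is faithfully handled, which is exactly why Lemma~\ref{lem:2scs} was phrased for a \emph{collection} with multiplicities rather than a set. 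I would therefore verify that the de Bruijn multigraph of $\mathcal{B}'$ correctly accounts for repeated two-letter strings as parallel edges, so that every $B_i$ is represented and no spurious overlaps are introduced beyond the genuine length-$\ell$ ones.
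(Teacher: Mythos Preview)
Your proposal is correct and follows essentially the same approach as the paper: both reduce FO-SSM to the length-two shortest-string problem of Lemma~\ref{lem:2scs} by replacing each $B_i$ with the pair of lexicographic ranks of its length-$\ell$ prefix and suffix, then invoke that lemma and map the answer back. The only minor implementation difference is that the paper computes the ranks via the LCP data structure on the concatenation of $\mathcal{B}$, whereas you radixsort the $2|\mathcal{B}|$ length-$\ell$ substrings directly; both are valid $\cO(||\mathcal{B}||)$-time routes given that $\sum_i |B_i| > |\mathcal{B}|\cdot\ell$.
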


\begin{proof}
Consider the following renaming technique. Each length-$\ell$ substring of the collection is assigned a {\em lexicographic rank} from the range $\{1,\ldots,||\mathcal{B}||\}$. Each string in $\mathcal{B}$ is converted to a two-letter string as follows. The first letter is the  lexicographic rank of its length-$\ell$ prefix and the second letter is the lexicographic rank of its length-$\ell$ suffix. We thus obtain a new {\em collection} $\mathcal{B}'$ of two-letter strings. Computing the ranks for all length-$\ell$ substrings in $\mathcal{B}$ can be implemented in $\cO(||\mathcal{B}||)$ time by employing radixsort to sort $\Sigma$ and then the well-known LCP data structure over the concatenation of strings in $\mathcal{B}$~\cite{DBLP:books/daglib/0020103}. The FO-SSM problem is thus solved by finding a shortest string containing every element of $\mathcal{B}'$ as a distinct substring. Since $\mathcal{B}'$ consists of two-letter strings only we can solve the problem in $\cO(|\mathcal{B}'|)$ time by applying Lemma~\ref{lem:2scs}. The statement follows.~\qed
\end{proof}

Thus, {\PFSA} applies Lemma~\ref{lem:fl-scs} on $\mathcal{B}:=X_1,\ldots,X_N$ with $\ell:=k-1$ (recall that $X_1\#\ldots\#X_N=X$). Note that each time the  $\texttt{concat}$ operation is performed, it also places the letter $\#$ in between the two strings.

\begin{lemma}\label{lem:ps}
Let $W$ be a string of length $n$ over an alphabet $\Sigma$.
Given $k<n$ and array $C$, {\PFSA} constructs a shortest string $Y$ with \textbf{C1}, \textbf{$\boldsymbol{\Pi} 1$}, and \textbf{P2-P4}.
\end{lemma}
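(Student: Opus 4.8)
The plan is to establish Lemma~\ref{lem:ps} by assembling the pieces already proved: the correctness of the reduction from \PFS{} to FO-SSM, the solution to FO-SSM from Lemma~\ref{lem:fl-scs}, and the structural properties of $X$ from Lemma~\ref{lem:properties}. First I would recall that by Lemma~\ref{the:fastx} we obtain $X=X_1\#\ldots\#X_N$ and that, by property \textbf{P3} of Lemma~\ref{lem:properties}, each $X_i$ satisfies $|X_i|>k-1$, so the precondition $|B_i|>\ell$ of FO-SSM holds with $\ell:=k-1$ and $\mathcal{B}:=\{X_1,\ldots,X_N\}$. I would then invoke Lemma~\ref{lem:fl-scs} to obtain, in $\cO(\|\mathcal{B}\|)=\cO(|X|)$ time, a shortest string $Y$ containing every $X_i$ as a distinct substring using only \texttt{concat} and $(k-1)$-\texttt{merge}, where each \texttt{concat} inserts a $\#$ between the two operands.

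The core of the argument is to verify that this $Y$ satisfies \textbf{C1}, \textbf{$\boldsymbol{\Pi}1$}, and \textbf{P2}--\textbf{P4}, and that it has minimal length. For \textbf{C1} and \textbf{P2}, I would argue that the two allowed operations do not alter the multiset of length-$k$ substrings over $\Sigma$: a \texttt{concat} separates the operands by $\#\notin\Sigma$, so it introduces no length-$k$ string over $\Sigma$ spanning the junction and reinstates no sensitive pattern (which was already absent from each $X_i$ by Lemma~\ref{lem:properties}\textbf{(I)}); a $(k-1)$-\texttt{merge} fuses $X_i$ and $X_j$ exactly when the length-$(k-1)$ suffix of $X_i$ equals the length-$(k-1)$ prefix of $X_j$, which is precisely the situation rule \textbf{R2} handles, so the single length-$k$ substring created at the overlap already occurred in $W$ and no frequency changes. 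Hence the frequency of every non-sensitive pattern is preserved and no sensitive pattern appears, giving \textbf{C1} and \textbf{P2}. For \textbf{$\boldsymbol{\Pi}1$}, I would observe that each p-chain of $\mathcal{I}_W$ lies entirely within a single block $X_i$ (a p-chain requires consecutive $(k-1)$-overlaps, which are never broken inside an $X_i$ but are broken at every $\#$ in $X$), and since $Y$ contains each $X_i$ as a distinct substring, mapping a p-chain of $\mathcal{I}_W$ to its image inside the copy of $X_i$ in $Y$ yields the required injective, chain-preserving $f$. Properties \textbf{P3} and \textbf{P4} follow because the $\#$-count and the length can only decrease relative to $X$ (every $(k-1)$-\texttt{merge} removes one $\#$ and $k-1$ letters), and the blocks remain at least $k$ positions apart.

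The main obstacle is the minimality (shortest-length) claim, and here I would lean entirely on the optimality of the FO-SSM solution together with the observation that the reduction is tight. The key point to justify is that \emph{every} valid solution $Y'$ to \PFS{} corresponds to a way of arranging the blocks $X_1,\ldots,X_N$ using \texttt{concat} and $(k-1)$-\texttt{merge}: since $Y'$ must preserve all non-sensitive $k$-mers with their frequencies (\textbf{P2}) and contain no sensitive pattern (\textbf{C1}) nor any spurious $k$-mer, its maximal runs of letters over $\Sigma$ between separators must be exactly a permutation of the blocks, possibly glued by length-$(k-1)$ overlaps; any shorter gluing would either merge on fewer than $k-1$ letters (creating a new $k$-mer) or more than $k-1$ (destroying an occurrence), exactly as argued for \textbf{R2} in Lemma~\ref{lem:properties}. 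Thus the length of the shortest \PFS{} string equals the optimum of the FO-SSM instance, and since Lemma~\ref{lem:fl-scs} returns an optimal FO-SSM solution, the mapped-back string $Y$ is a shortest \PFS{} string. I would close by noting that $|X_i|>k-1$ guarantees the overlaps used in FO-SSM are genuine and that the block decomposition is unambiguous, completing the proof.
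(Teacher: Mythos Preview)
Your approach mirrors the paper's: derive $Y$ from $X$ via FO-SSM (Lemma~\ref{lem:fl-scs}), then verify \textbf{C1}, \textbf{$\boldsymbol{\Pi}1$}, \textbf{P2}--\textbf{P4}, and minimality. The paper is terser---it simply notes that no length-$k$ substring over $\Sigma$ is added to or removed from $X$, and that minimality follows from the minimality of $|X|$ together with Lemma~\ref{lem:fl-scs}---but the logical skeleton is the same. Two points in your write-up need correction.

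First, your justification of the $(k-1)$-\texttt{merge} step is off. You write that ``the single length-$k$ substring created at the overlap already occurred in $W$ and no frequency changes.'' In fact, a $(k-1)$-merge creates \emph{no} new length-$k$ substring at the junction: every length-$k$ window of the merged string lies entirely inside the copy of $X_i$ or entirely inside the copy of $X_j$ (the overlap has length exactly $k-1$). This is precisely why frequencies are preserved. Had a genuinely new $k$-mer appeared---even one already present in $W$---its frequency \emph{would} increase by one, contradicting \textbf{P2}; so the reason you give would not suffice. The paper's one-line statement that ``no length-$k$ substring over $\Sigma$ is added or removed from $X$'' is the correct observation.

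Second, you do not address the \emph{tightness} half of \textbf{P4}. Showing $|Y|\le|X|$ gives only the upper bound; you still owe an instance where $|Y|$ attains $\lceil\frac{n-k+1}{2}\rceil\cdot k + \lfloor\frac{n-k+1}{2}\rfloor$. The paper closes this gap by reusing the order-$(k-1)$ de Bruijn construction from the tightness proof of Lemma~\ref{lem:properties}: in that instance no two blocks $X_i,X_j$ share a length-$(k-1)$ suffix--prefix overlap (by the de Bruijn property), so no $(k-1)$-merge is possible, FO-SSM is forced to concatenate all blocks with separators, and $|Y|=|X|$ meets the bound. You should add this argument.
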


\begin{proof}
\textbf{C1} and \textbf{P2} hold trivially for $Y$ as no length-$k$ substring over $\Sigma$ is added or removed from $X$. Let $X=X_1\#\ldots\#X_N$.
The order of non-sensitive length-$k$ substrings within $X_i$, for all $i\in[1,N]$, is preserved in $Y$. Thus there exists an injective function $f$ from the p-chains of $\mathcal{I}_W$ to the p-chains of $\mathcal{I}_Y$ such that $f(\mathcal{J}_W)\equiv \mathcal{J}_W$ for any p-chain $\mathcal{J}_W$ of $\mathcal{I}_W$ (\textbf{$\boldsymbol{\Pi} 1$} is preserved). \textbf{P3} also holds trivially for $Y$ as no occurrence of $\#$ is added. Since $|Y|\leq|X|$, for \textbf{P4}, it suffices to note that the construction of $W$ in the proof of tightness in Lemma~\ref{lem:properties} (see also Example~\ref{ex:debruijn}) ensures that there is no suffix-prefix overlap of length $k-1$ between {\em any} pair of length-$k$ substrings of $Y$ over $\Sigma$ due to the property of the order-$(k-1)$ de Bruijn sequence. Thus the upper bound of $\lceil\frac{n-k+1}{2}\rceil \cdot k + \lfloor\frac{n-k+1}{2}\rfloor$ on the length of $X$ is also tight for $Y$.

The minimality on the length of $Y$ follows from
the minimality of $|X|$ and the correctness of Lemma~\ref{lem:fl-scs} that computes a shortest such string.~\qed
\end{proof}

Let us now show the main result of this section.

\claimtwo*

\begin{proof}
We compute the $\cO(n)$-sized representation of string $X$ with respect to $W$ described in the proof of Theorem~\ref{the:fastx}. This can be done in $\cO(n)$ time.
If $X\in\Sigma^*$, then we construct and return $Y:=X$ in time $\cO(|Y|)$ from the representation.
If $X\in(\Sigma\cup\{\#\})^*$, implying $|Y|\leq |X|$, we compute the LCP data structure of string $W$ in  $\cO(n)$ time~\cite{DBLP:books/daglib/0020103}; and implement Lemma~\ref{lem:fl-scs} in $\cO(n)$ time by avoiding to read string $X$ explicitly: we rather rename $X_1,\ldots,X_N$ to a collection of two-letter strings by employing the LCP information of $W$ directly. We then construct and report $Y$ in time $\cO(|Y|)$. Correctness follows directly from Lemma~\ref{lem:ps}.~\qed
\end{proof}

\section{{\MCSR} Problem, {\MCSRA}, and Implausible Pattern Elimination}\label{sec:z}

In the following, we introduce the {\MCSR} problem and prove that it is NP-hard (see Section \ref{sec:z:mcsrproblem}). Then, we introduce {\MCSRA}, a heuristic to address this problem (see Section \ref{sec:z:mcsra}). Finally, we discuss how to configure {\MCSRA} in order to eliminate implausible patterns (see Section \ref{sec:z:implausible}).

\subsection{The {\MCSR} Problem}\label{sec:z:mcsrproblem}

The strings $X$ and $Y$, constructed by {\TFSA} and {\PFSA}, respectively, may contain the separator $\#$, which reveals information about the location of the sensitive patterns in $W$. Specifically, a malicious data recipient can go to the position of a $\#$ in $X$ and ``undo'' Rule {\bf R1} that has been applied by {\TFSA}, removing $\#$ and the $k-1$ letters after $\#$ from $X$. The result could be an occurrence of the sensitive pattern. For example, applying this process to the first $\#$ in $X$ shown in Fig. \ref{fig:rules}, results in recovering the sensitive pattern $\texttt{abab}$. A similar attack is possible on the string $Y$ produced by {\PFSA}, although it is hampered by the fact that substrings within two consecutive $\#$s in $X$ often swap places in $Y$.

To address this issue, we seek to construct a new string $Z$, in which $\#$s are either deleted or replaced by letters from $\Sigma$. To preserve data utility, we favor separator replacements that have a small cost in terms of occurrences of $\tau$-ghosts (patterns with frequency less than $\tau$ in $W$ and at least $\tau$ in $Z)$ and incur a level of distortion bounded by a parameter $\theta$ in $Z$. The cost of an occurrence of a $\tau$-ghost at a certain position is given by function \emph{Ghost}, while function \emph{Sub} assigns a distortion weight to each letter that could replace a $\#$. Both functions will be described in further detail below.

To preserve privacy, we require separator replacements not to reinstate sensitive patterns. This is the {\MCSR} problem, a restricted version of which is presented in Problem \ref{problem:1MCSR}. The restricted version is referred to as $\text{MCSR}_{k=1}$ and differs from {\MCSR} in that it uses $k=1$ for the pattern length instead of an arbitrary value $k>0$.  $\text{MCSR}_{k=1}$ is presented next for simplicity and because it is used in the proof of Lemma \ref{lem:1MCSR}. Lemma \ref{lem:1MCSR} implies Theorem~\ref{the:fasty'}.

\begin{problem2}[$\text{MCSR}_{k=1}$] Given a string $Y$ over an alphabet $\Sigma \cup \{\#\}$ with $\delta>0$ occurrences of letter $\#$, and parameters $\tau$ and $\theta$, construct a new  string $Z$ by substituting the $\delta$ occurrences of $\#$ in $Y$ with letters from $\Sigma$, such that:

\begin{description}
\item[]{\emph{(I)}} $\displaystyle\sum_{\substack{i:Y[i]=\#,\text{ Freq}_{Y}(Z[i])<\tau \\ \text{Freq}_{Z}(Z[i])\geq\tau}}\hspace{-6mm}\text{\emph{Ghost}}(i,Z[i])$ is minimum, and \emph{(II)}  $\displaystyle\sum_{i:Y[i]=\#}\hspace{-2mm}\text{\emph{Sub}}(i,Z[i])\leq \theta$.
\end{description}
\label{problem:1MCSR}
\end{problem2}

\begin{lemma}\label{lem:1MCSR}
The $\text{MCSR}_{k=1}$ problem is NP-hard.
\end{lemma}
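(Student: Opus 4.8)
The plan is to prove NP-hardness of $\text{MCSR}_{k=1}$ by a polynomial-time reduction from the \emph{Multiple Choice Knapsack} (MCK) problem, which the paper has already flagged as the intended source of hardness. Recall that in MCK we are given disjoint classes $N_1,\ldots,N_\delta$ of items, where each item $j$ in class $N_i$ carries a profit $p_{ij}$ and a weight $w_{ij}$, together with a capacity bound $\theta$; the task is to select \emph{exactly one} item from each class so that the total weight is at most $\theta$ while the total profit is maximized (MCK is well known to be NP-hard~\cite{Pissinger}). The structural match is clean: in $\text{MCSR}_{k=1}$ we must choose, for each of the $\delta$ occurrences of $\#$, a single replacement letter from $\Sigma$, and this choice independently contributes both a \emph{Sub}-cost constrained by the budget $\theta$ and a \emph{Ghost}-cost appearing in the objective. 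So each $\#$-position will play the role of a knapsack class and each candidate replacement letter the role of an item in that class.

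First I would fix the value $k=1$, so that a ``pattern'' is just a single letter and $\text{Freq}_Y(Z[i])$, $\text{Freq}_Z(Z[i])$ become letter counts; this is what makes the per-position choices decouple and lets the two functions \emph{Sub} and \emph{Ghost} carry the instance data. Given an MCK instance with $\delta$ classes, I would build a string $Y$ over an alphabet $\Sigma$ large enough that each class $N_i$ gets its own private block of candidate letters, placing exactly $\delta$ occurrences of $\#$ in $Y$ (one per class) and padding the surrounding letters so that the frequency thresholds behave predictably. The function $\text{Sub}(i,\cdot)$ would be set to encode the weights $w_{ij}$ (so that constraint (II), $\sum_i \text{Sub}(i,Z[i]) \le \theta$, is exactly the MCK capacity constraint), while $\text{Ghost}(i,\cdot)$ would encode the profits $p_{ij}$ after the standard transformation that turns profit-maximization into cost-minimization (for instance, replacing profit $p_{ij}$ by $P_{\max}-p_{ij}$ for a suitable constant). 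The remaining delicate point is the interaction with the $\tau$-ghost condition: the summation in objective (I) ranges only over positions where $\text{Freq}_Y(Z[i])<\tau$ and $\text{Freq}_Z(Z[i])\ge\tau$, so I would arrange the padding in $Y$ so that \emph{every} candidate replacement letter for a given $\#$ straddles the threshold $\tau$ (e.g.\ each private letter occurs exactly $\tau-1$ times in $Y$, so that a single replacement pushes it to exactly $\tau$). This guarantees that each chosen letter genuinely contributes its \emph{Ghost}-cost to the objective, making the correspondence between the two objective values exact.

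With this construction, I would argue the two directions of the reduction. Any selection of exactly one item per class in the MCK instance corresponds bijectively to a valid assignment of replacement letters to the $\delta$ separators in $Y$; the \emph{Sub}-budget constraint holds if and only if the MCK weight constraint holds, and the objective (I) value equals the transformed total profit. Hence an optimal solution of $\text{MCSR}_{k=1}$ yields an optimal solution of MCK and vice versa, so a polynomial-time algorithm for $\text{MCSR}_{k=1}$ would solve MCK in polynomial time. Since the construction of $Y$, $\Sigma$, and the two tabulated functions is clearly polynomial in the MCK instance size, this establishes NP-hardness.

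The step I expect to be the main obstacle is the careful engineering of the padding so that the threshold-gated \emph{Ghost} summation picks up \emph{precisely} the intended per-class cost and nothing more: because $k=1$ ties frequencies to global letter counts, I must ensure that the replacement letters chosen for different $\#$-positions do not interfere with one another's frequency counts (hence the disjoint private alphabet blocks) and that the ``$\text{Freq}_Y<\tau$ but $\text{Freq}_Z\ge\tau$'' gate fires for every candidate rather than only some of them. Getting this bookkeeping exactly right — so that the $\text{MCSR}_{k=1}$ objective is a faithful affine image of the MCK profit — is where the real work lies; the rest of the argument is routine.
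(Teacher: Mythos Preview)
Your proposal is correct and follows essentially the same approach as the paper's own proof: a polynomial-time reduction from MCK where each occurrence of $\#$ corresponds to a class, each candidate replacement letter to an item, \emph{Sub} encodes weights, \emph{Ghost} encodes costs, and the padding is arranged so every candidate letter sits exactly one below the threshold $\tau$ (the paper takes $\tau=2$ with each letter occurring once). The only cosmetic difference is that the paper, working from a cost-minimization / weight-lower-bound formulation of MCK, applies the affine flip on the \emph{Sub} side (setting $\text{Sub}=1-w_{ij}$ and $\theta=\delta-b$), whereas you flip on the \emph{Ghost} side via $P_{\max}-p_{ij}$; both achieve the same exact correspondence.
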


\begin{proof}
We reduce the NP-hard \emph{Multiple Choice Knapsack} (MCK) problem \cite{mckp} to $\text{MCSR}_{k=1}$ in polynomial time. In MCK, we are given a set of elements subdivided into $\delta$, mutually exclusive classes, $C_1,\ldots,C_\delta$, and a knapsack. Each class $C_i$ has $|C_i|$ elements. Each element $j \in C_i$ has an arbitrary cost $c_{ij} \geq 0$ and an arbitrary weight $w_{ij}$.
The goal is to minimize the total cost (Eq.~\ref{preq1}) by filling the knapsack with one element from each class (constraint II), such that the weights of the elements in the knapsack satisfy constraint I, where constant $b \geq 0$ represents the minimum allowable total weight of the elements in the knapsack:
\begin{equation}
\min \sum_{i\in[1,\delta]}\sum_{j\in C_i}c_{ij}\cdot x_{ij}\label{preq1}
\end{equation}
subject to the constraints: (I) $ \sum_{i\in[1,\delta]}\sum_{j\in C_i}w_{ij}\cdot x_{ij}\geq b$, (II)$\,\, \sum_{j\in C_i}x_{ij}=1, ~~ i=1,\ldots \delta$, and~ (III) $ x_{ij}\in \{0,1\}, ~~ i=1,\ldots,\delta, ~~ j\in C_i$.

The variable $x_{ij}$ takes value $1$ if the element $j$ is chosen from class $C_i$, $0$ otherwise (constraint III). We reduce any instance $\texttt{I}_{\text{MCK}}$ to an instance $\texttt{I}_{\text{MCSR}_{k=1}}$ in polynomial time, as follows:
\begin{itemize}
\item[(I)] Alphabet $\Sigma$ consists of letters $\alpha_{ij}$, for each $j\in C_i$ and each class $C_i$, $i\in[1,\delta]$.
\item[(II)] We set $Y=\alpha_{11}\alpha_{12}\ldots\alpha_{1|C_1|}\#\ldots\#\alpha_{\delta 1}\alpha_{\delta 2}\ldots\alpha_{\delta|C_\delta|}\#$. Every element of $\Sigma$ occurs exactly once: $\text{Freq}_Y(\alpha_{ij})=1$. Letter $\#$ occurs $\delta$ times in $Y$. For convenience, let us denote by $\mu(i)$ the $i$th occurrence of $\#$ in $Y$.
\item[(III)] We set $\tau=2$ and $\theta = \delta-b$.
\item[(IV)] $\text{Ghost}(\mu(i),\alpha_{ij})=c_{ij}$ and $\text{Sub}(\mu(i),\alpha_{ij})=1-w_{ij}$.
The functions are otherwise {\em not defined}.
\end{itemize}

This is clearly a polynomial-time reduction. We now prove the correspondence between a solution $S_{\texttt{I}_{\text{MCK}}}$ to the given instance $\texttt{I}_{\text{MCK}}$ and a solution $S_{\texttt{I}_{\text{MCSR}_{k=1}}}$ to the instance $\texttt{I}_{\text{MCSR}_{k=1}}$.

We first show that if $S_{\texttt{I}_{\text{MCK}}}$ is a solution to $\texttt{I}_{\text{MCK}}$, then $S_{\texttt{I}_{\text{MCSR}_{k=1}}}$ is a solution to $\texttt{I}_{\text{MCSR}_{k=1}}$. Since the elements in $S_{\texttt{I}_{\text{MCK}}}$ have minimum $\sum_{i\in[1,\delta]}\sum_{j\in C_i}c_{ij}\cdot x_{ij}$, $\text{Freq}_Y(\alpha_{ij})=1$, and $\tau=2$, the letters  $\alpha_1, \ldots, \alpha_\delta$ corresponding to the selected elements lead to a $Z$ that incurs a minimum
\begin{equation}
\sum_{i\in[1,\delta]}\sum_{\substack{j=\mu(i):\text{Freq}_Y(Z[j])<\tau\\ \text{Freq}_{Z}(Z[j])\geq \tau}}\text{Ghost}(j,Z[j]).
\label{theorem6_eq1}
\end{equation}

In addition, each letter $Z[j]$ that is considered by the inner sum of Eq. \ref{theorem6_eq1} corresponds to a single occurrence of $\#$, and these are all the occurrences of $\#$. Thus we obtain that

{\small
\begin{equation}
\sum_{i\in[1,\delta]}\sum_{\substack{j=\mu(i):\text{Freq}_Y(Z[j])<\tau\\ \text{Freq}_{Z}(Z[j])\geq \tau}}\text{Ghost}(j,Z[j])=\displaystyle\sum_{\substack{i:Y[i]=\#,\text{ Freq}_{Y}(Z[i])<\tau \\ \text{Freq}_{Z}(Z[i])\geq\tau}}\text{Ghost}(i,Z[i])
\end{equation}
}

\noindent (\ie condition I in Problem \ref{problem:1MCSR} is satisfied).  Since the elements in $S_{\texttt{I}_{\text{MCK}}}$ have total weight $\sum_{i\in[1,\delta]}\sum_{j\in C_i}w_{ij}\cdot x_{ij}\geq b$, the letters $\alpha_1, \ldots, \alpha_\delta$, they map to, lead to a $Z$ with $\sum_{i\in[1,\delta]}\sum_{j\in C_i}(1 -\text{Sub}(\mu(i),\alpha_i))\cdot x_{ij}\geq \delta-\theta$, which implies

\begin{equation}
    \sum_{i\in[1,\delta]}\sum_{j\in C_i}\text{Sub}(\mu(i),\alpha_{ij})\cdot x_{ij}=
    \displaystyle\sum_{i:Y[i]=\#}\text{Sub}(i,Z[i])\leq \theta
\end{equation}

\noindent (\ie condition II in Problem \ref{problem:1MCSR} is satisfied). $S_{\texttt{I}_{\text{MCSR}_{k=1}}}$ is thus a solution to $\texttt{I}_{\text{MCSR}_{k=1}}$.

We finally show that, if $S_{\texttt{I}_{\text{MCSR}_{k=1}}}$ is a solution to $\texttt{I}_{\text{MCSR}_{k=1}}$, then $S_{\texttt{I}_{\text{MCK}}}$ is a solution to $\texttt{I}_{\text{MCK}}$. Since each $\#_i$, $i\in[1,\delta]$, is replaced by a single letter $\alpha_i$ in $S_{\texttt{I}_{\text{MCSR}_{k=1}}}$, exactly one element will be selected from each class $C_i$ (\ie conditions II-III of MCK are satisfied). Since the letters in $S_{\texttt{I}_{\text{MCSR}_{k=1}}}$ satisfy condition I of Problem~\ref{problem:1MCSR}, every element of $\Sigma$ occurs exactly once in $Y$, and $\tau=2$, their corresponding selected elements $j_1\in C_1, \ldots, j_\delta\in C_\delta$ will have a minimum total cost. Since $S_{\texttt{I}_{\text{MCSR}_{k=1}}}$ satisfies $\sum_{i:Y[i]=\#}\text{Sub}(i,Z[i])=\sum_{i\in[1,\delta]}\sum_{j\in C_i}\text{Sub}(\mu(i),\alpha_{ij})\cdot x_{ij}\leq \theta$, the selected elements $j_1\in C_1, \ldots, j_\delta\in C_\delta$ that correspond to $\alpha_1 \ldots, \alpha_\delta$ will satisfy  $\sum_{i\in[1,\delta]}\sum_{j\in C_i}(1-w_{ij})\cdot x_{ij}\leq \delta-b$, which implies $\sum_{i\in[1,\delta]}\sum_{j\in C_i}w_{ij}\cdot x_{ij}\geq b$ (\ie condition I of MCK is satisfied).  Therefore, $S_{\texttt{I}_{\text{MCK}}}$ is a solution to $\texttt{I}_{\text{MCK}}$. The statement follows.~\qed
\end{proof}

Lemma~\ref{lem:1MCSR} implies the main result of this section.

\claimthree*

The cost of $\tau$-ghosts is captured by a function Ghost. This function assigns a cost to an occurrence of a $\tau\text{-ghost}$, which is caused by a separator replacement at position $i$,  and is specified based on domain knowledge. For example, with a cost equal to $1$ for each gained occurrence of each $\tau\text{-ghost}$, we penalize more heavily a $\tau$-ghost with frequency much below $\tau$ in $Y$  and the penalty increases with the number of gained occurrences. Moreover, we may want to penalize positions towards the end of a temporally ordered string, to avoid spurious patterns that would be deemed important in applications based on time-decaying models \cite{graham1}.

The replacement distortion is captured by a function Sub which assigns a weight to a letter that could replace a $\#$ and is specified based on domain  knowledge. The maximum allowable replacement distortion is $\theta$.  Small weights favor the replacement of separators with desirable letters (\eg letters that reinstate non-sensitive frequent patterns) and letters that reinstate sensitive patterns are assigned a weight larger than $\theta$ that prohibits them from replacing a $\#$. As will be explained in Section \ref{sec:z:implausible}, weights larger than $\theta$ are also assigned to letters which would lead to implausible substrings \cite{icdm13} if they replaced $\#$s.

\subsection{{\MCSRA}}\label{sec:z:mcsra}

We next present {\MCSRA}, a non-trivial heuristic that exploits the connection of the {\MCSR} and MCK~\cite{Pissinger} problems. We start with a high-level description of {\MCSRA}:
\begin{itemize}
    \item[(I)] Construct the set of all candidate $\tau$-ghost patterns (\ie length-$k$ strings  over $\Sigma$ with frequency below $\tau$ in $Y$ that can have frequency at least $\tau$ in $Z$).\\
    \item[(II)] Create an instance of MCK from an instance of {\MCSR}. For this, we map the $i$th occurrence of $\#$ to a class $C_i$ in MCK and each possible replacement of the occurrence with a letter $j$ to a different item in $C_i$. Specifically, we consider all possible replacements with letters in $\Sigma$ and also a replacement with the empty string, which models deleting (instead of replacing) the $i$th occurrence of $\#$.
    In addition, we set the costs and weights that are input to MCK as follows. The cost for replacing the $i$th occurrence of $\#$ with the letter $j$ is set to the sum of the Ghost function for all candidate $\tau$-ghost patterns when the $i$th occurrence of $\#$ is replaced by $j$. That is, we make the worst-case assumption that the replacement forces all candidate $\tau$-ghosts to become $\tau$-ghosts in $Z$. The weight for replacing the $i$th occurrence of $\#$ with letter $j$ is set to $\text{Sub}(i,j)$. \\
    \item[(III)] Solve the instance of MCK and translate the solution back to a (possibly suboptimal) solution of the MCSR problem. For this, we replace the $i$th occurrence of $\#$ with the letter corresponding to the element chosen by the MCK algorithm from class $C_i$, and similarly for each other occurrence of $\#$. If the  instance has no solution (\ie no possible replacement can hide the sensitive patterns), MCSR-ALGO reports that $Z$ cannot be constructed and terminates.
\end{itemize}

Lemma~\ref{lem:mscraeff} below states the running time of an efficient implementation of {\MCSRA}.

\begin{lemma}
{\MCSRA} runs in $\cO(|Y|+k\delta\sigma+\mathcal{T}(\delta,\sigma))$ time, where $\mathcal{T}(\delta,\sigma)$ is the running time of the MCK algorithm for $\delta$ classes with $\sigma+1$ elements each. \label{lem:mscraeff}
\end{lemma}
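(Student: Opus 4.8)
The plan is to bound the running time of \MCSRA{} by analyzing each of its three high-level stages separately and summing the results. I would first account for the preprocessing needed to read $Y$ and to identify all relevant patterns, then for the construction of the MCK instance, and finally delegate the optimization cost to the assumed MCK subroutine, whose running time is abstracted as $\mathcal{T}(\delta,\sigma)$.

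For stage (I), I would scan $Y$ once to locate the $\delta$ occurrences of $\#$ and to collect, for each occurrence, the candidate $\tau$-ghost patterns it can affect. Since replacing a single $\#$ can only create or destroy the length-$k$ substrings that overlap that position, each occurrence of $\#$ influences at most $k$ length-$k$ windows, and for each window we must consider up to $\sigma$ possible replacement letters; this gives $\cO(k\delta\sigma)$ candidate patterns in total, with the initial linear scan of $Y$ contributing the $\cO(|Y|)$ term. I would argue that the frequencies $\text{Freq}_Y(\cdot)$ needed to decide which patterns are candidates can be precomputed in $\cO(|Y|)$ time (for instance by hashing or by a suffix-structure over $Y$), so that each candidacy test is $\cO(1)$.

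For stage (II), I would map the $i$th occurrence of $\#$ to class $C_i$ and each of the $\sigma$ letters (plus the empty-string deletion option) to an element of $C_i$, giving exactly $\sigma+1$ elements per class. The cost $c_{ij}$ for replacing the $i$th $\#$ by letter $j$ is the sum of the \emph{Ghost} values over the candidate $\tau$-ghosts that this replacement could turn into actual $\tau$-ghosts; by the bound from stage (I), computing all these costs over all $\delta$ classes and all $\sigma+1$ elements touches each candidate a constant number of times and so costs $\cO(k\delta\sigma)$ total. The weights are set to $\text{Sub}(i,j)$, computable in $\cO(1)$ each, adding $\cO(\delta\sigma)$, which is absorbed. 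Stage (III) solves the MCK instance in $\mathcal{T}(\delta,\sigma)$ time by assumption and then performs the $\delta$ substitutions in $\cO(|Y|)$ time (or within the $\cO(k\delta\sigma)$ bound). Summing the three stages yields the claimed $\cO(|Y|+k\delta\sigma+\mathcal{T}(\delta,\sigma))$ bound.

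The main obstacle I anticipate is justifying the $k\delta\sigma$ term tightly, \ie arguing carefully that the worst-case-assumption cost aggregation in stage (II) does not overcount. Specifically, I must ensure that when summing \emph{Ghost} contributions across the $k$ windows overlapping a given $\#$ and across the $\sigma$ candidate letters, no candidate pattern is processed more than a constant number of times per (occurrence, letter) pair; otherwise a naive implementation could incur an extra factor. I would address this by organizing the candidate patterns in a structure indexed by $(i,j)$ so that each Ghost evaluation is charged to exactly one such pair, which keeps the bookkeeping within $\cO(k\delta\sigma)$.
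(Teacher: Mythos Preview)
Your three-stage decomposition matches the paper's proof exactly, and the overall accounting is right. The one place where your argument is looser than the paper's is the claim that ``each candidacy test is $\cO(1)$'' once you have precomputed $\text{Freq}_Y(\cdot)$ via ``hashing or a suffix-structure over $Y$''. A candidate pattern $U$ created by a replacement is a length-$k$ string that need not occur in $Y$ at all, so looking it up in a suffix tree of $Y$ costs $\cO(k)$ per pattern in the na\"ive sense; over the $\cO(k\delta\sigma)$ candidates this would give $\cO(k^2\delta\sigma)$. The paper closes this gap explicitly: it builds the \emph{generalized} suffix tree of $Y$ together with all the length-$(2k{-}1)$ strings $S_{ij}$ (the windows around the $i$th~$\#$ after substituting letter $j$), which costs $\cO(|Y|+k\delta\sigma)$, and then exploits suffix links to spell all $k$ length-$k$ substrings of each $S_{ij}$ in $\cO(k)$ \emph{total} time (i.e.\ $\cO(1)$ amortized per substring). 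That amortization is what makes the $\cO(k\delta\sigma)$ bound go through deterministically. Your hashing alternative also needs this amortization (via a rolling hash over each $S_{ij}$) to avoid paying $\cO(k)$ per candidate, and even then it is only $\cO(1)$ in expectation.

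The paper also computes, in the same pass, $\max\text{Freq}_Z(U)$ to filter out patterns that can never reach frequency $\tau$; you skip this, which is fine for the time bound (the cost aggregation in Step~2 still touches only $\cO(k\delta\sigma)$ patterns) but worth mentioning since it affects which patterns are ``marked'' and hence the heuristic's actual output. Apart from the missing amortization argument, your plan is the paper's plan.
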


\begin{proof}
It should be clear that if we conceptually extend $\Sigma$ with the empty string, our approach takes into account the possibility of deleting (instead of replacing) an occurrence of $\#$. To ease comprehension though we only describe the case of letter replacements.

\paragraph{Step 1} Given $Y$, $\Sigma$, $k$, $\delta$, and $\tau$, we construct a set $\mathcal{C}$ of {\em candidate $\tau$-ghosts} as follows. The candidates are at most $(|Y|-k+1-k\delta) + (k\delta\sigma)=\cO(|Y|+k\sigma\delta)$ distinct strings of length $k$. The first term corresponds to all substrings of length $k$ over $\Sigma$ occurring in $Y$ (\ie if $Y$ did not contain $\#$, we would have $|Y|-k+1$ such substrings; each of the $\delta$ $\#$ causes the loss of $k$ such substrings). The second term corresponds to all possible substrings of length $k$ that may be introduced in $Z$ but do not occur in $Y$. For any string $U$ from the set of these $\mathcal{O}(|Y|+k\delta\sigma)$ strings, we want to compute $\text{Freq}_{Y}(U)$ and its {\em maximal frequency} in $Z$, denoted by $\max\text{Freq}_{Z}(U)$, \ie the largest possible frequency that $U$ can have in $Z$, to construct set $\mathcal{C}$. Let $S_{ij}$ denote the string of length $2k-1$, containing the $k$ consecutive length-$k$ substrings, obtained after replacing the $i$th occurrence of $\#$ with letter $j$ in $Y$.

\begin{itemize}
    \item[(I)] If $\text{Freq}_{Y}(U)\geq \tau$, $U$ by definition can never become $\tau$-ghost in $Z$, and we thus exclude it from $\mathcal{C}$. $\text{Freq}_{Y}(U)$, for all $U$ occurring in $Y$, can be computed in $\cO(|Y|)$ total time using the suffix tree of $Y$~\cite{DBLP:books/daglib/0020103}.
    \item[(II)] If $\max\text{Freq}_{Z}(U)<\tau$, $U$ by definition can never become $\tau$-ghost in $Z$, and we thus exclude it from $\mathcal{C}$. $\max\text{Freq}_{Z}(U)$ can be computed by adding to $\text{Freq}_{Y}(U)$, the maximum additional number of occurrences of $U$ caused by a letter replacement among all possible letter replacements. We sum up this quantity for each $U$ and for all replacements of occurrences of $\#$ to obtain $\max\text{Freq}_{Z}(U)$. To do this, we first build the generalized suffix tree of $Y,S_{11},\ldots,S_{\delta\sigma}$ in $\cO(|Y|+k\delta\sigma)$ time~\cite{DBLP:books/daglib/0020103}. We then spell $S_{i1},\ldots,S_{i\sigma}$, for all $i$, in the generalized suffix tree in $\cO(k\sigma)$ time per $i$.
We exploit suffix links to spell the length-$k$ substrings of $S_{ij}$ in $\cO(k)$ time and memorize the maximum number of occurrences of $U$ caused by replacing the $i$th occurrence of $\#$ among all $j$. We represent set $\mathcal{C}$ on the generalized suffix tree by marking the corresponding nodes,
and we denote this representation by $T(\mathcal{C})$. The total size of this representation is $\cO(|Y|+k\sigma\delta)$.
\end{itemize}

\paragraph{Step 2} We now want to construct an instance of the MCK problem using $T(\mathcal{C})$.
We first set letter $j$ as element $\alpha_{ij}$ of class $C_i$.
We then set $c_{ij}$ equal to the sum of the Ghost function cost incurred by replacing the $i$th occurrence of $\#$ by letter $j$ for all (at most $k$) affected length-$k$ substrings that are marked in $T(\mathcal{C})$. The main assumption of our heuristic is precisely the fact that we assume that this letter replacement will force all of these affected length-$k$ substrings becoming $\tau$-ghosts in $Z$.
The computation of $c_{ij}$ is done as follows. For each $(i,j)$, $i\in[1,\delta]$ and $j\in[1,\sigma]$, we have $k$ substrings whose frequency changes, each of length $k$. Let $U$ be one such pattern occurring at position $t$ of $Z$, where $\mu(i)-k+1\leq t \leq \mu(i)$ and $\mu(i)$ is the $i$th occurrence of $\#$ in $Y$. We check if $U$ is marked in $T(\mathcal{C})$ or not. If $U$ is not marked we add nothing to $c_{ij}$. If $U$ is marked, we increment $c_{ij}$ by $\text{Ghost}(t,U)$. We also set $w_{ij}=\text{Sub}(i,j)$ (as stated above, any letter that reinstates a sensitive pattern is assigned a weight $\text{Sub}>\theta$, so that it cannot be selected to replace an occurrence of $\#$ in Step $3$). Similar to Step 1, the total time required for this computation is $\cO(|Y|+k\sigma\delta)$.

\paragraph{Step 3} In Step 2, we have computed $c_{ij}$ and $w_{ij}$, for all $i,j$, $i\in[1,\delta]$ and $j\in[1,\sigma]$.
  We thus have an instance of the MCK problem. We solve it and translate the solution back to a (suboptimal) solution of the MCSR problem: the element $\alpha_{ij}$ chosen by the MCK algorithm from class $C_i$ corresponds to letter $j$ and it is used to replace the $i$th occurrence of $\#$, for all $i\in[1,\delta]$. The cost of solving MCK depends on the chosen algorithm and is given by a function $\mathcal{T}(\delta,\sigma)$.

Thus, the total cost of {\MCSRA} is $\cO(|Y|+k\delta\sigma+
\mathcal{T}(\delta,\sigma))$.~\qed
\end{proof}

\subsection{Eliminating Implausible Patterns}\label{sec:z:implausible}

We present the notion of implausible substring and explain how we can ensure that implausible patterns do not occur in $Z$, as a result of applying the {\MCSRA} algorithm to string $Y$.

Consider, for instance, an input string $Y=\ldots\texttt{a\#c}\ldots$ that models the movement of an individual, and the string $\texttt{abc}$, which is created as a substring of $Z$ when we replace $\#$ with $\texttt{b}$.  Consider further that an individual can, generally, not go from $\texttt{a}$ to $\texttt{c}$ through $\texttt{b}$, or that it is highly unlikely for them to do so. We call a substring such as $\texttt{abc}$ {\em implausible}. Clearly, if $\texttt{abc}$ occurs in $Z$, it may be possible for an attacker to infer that $\texttt{b}$ replaced $\#$, and then infer a sensitive pattern by ``undoing'' {\bf R1} as explained in Section  \ref{sec:z:mcsrproblem}. In order to effectively model this scenario, we define implausible patterns based on a statistical significance  measure for strings \cite{brendel,mireille,almir}. The measure is defined as follows \cite{brendel}:
$$z_{W}(U)=\frac{\text{Freq}_{W}(U)-\mathbb{E}_W[U]}{\max(\sqrt{\mathbb{E}_W[U]},1)},$$

\noindent where $U$ is a string with $|U|>2$, $W$ is the reference string, and
$$\mathbb{E}_W[U]=\begin{cases}\frac{\text{Freq}_W(U[0\dd|U|-2])\cdot\text{Freq}_W(U[1\dd|U|-1])}{\text{Freq}_W(U[1\dd|U|-2])}, & \text{Freq}_W(U[1\dd|U|-2])>0\\
0, \mbox{otherwise}
\end{cases}
$$

\noindent is the expected frequency of $U$ in $W$, computed based on an independence assumption between the event ``$U[0\dd |U|-1]$ occurs in $W$'' and ``$U[1\dd |U|-1]$ occurs in $W$''. The measure $z_W$ is a normalized version of the standard score of $U$, based on the fact that the variance  $\text{Var}_W[U]\approx \sqrt{\mathbb{E}_W[U]}$~\cite{mireille}. A small $z_W(U)$ indicates that $U$ occurs less likely than expected, and hence it can naturally be considered as an artefact of sanitization.

Given a user-defined threshold $\rho<0$, we define a string $U$ as {\em $\rho$-implausible} if $z_W(U)<\rho$. The set of $\rho$-implausible substrings of $W$ can be computed in the optimal $\cO(|\Sigma|\cdot|W|)$ time~\cite{almir}. We use $W$ as the reference string, assuming that it is a good representation of the domain; \eg a trip (substring) that is $\rho$-implausible in $W$ is also implausible in general.  Alternatively, one could use any other string as reference, impose length constraints on implausible patterns~\cite{LoukidesPNAS,manolis}, or even directly specify substrings that should not occur in $Z$ based on domain knowledge.

Given the set $\mathcal{U}$ of ($\rho$-)implausible patterns, we ensure that no $\#$ replacement creates $U=U_1\alpha U_2 \in \mathcal{U}$ in $Z$, where $\alpha$ is the letter that replaces $\#$, by assigning a weight $\text{Sub}(i,Z[i])>\theta$, for each $Z[i]$ such that $Y[i]=\#$ and $U_1 \cdot Z[i] \cdot U_2 \in \mathcal{U}$. This guarantees that no replacement leading to an artefact occurrence of an element of $\mathcal{U}$ is performed by \MCSRA. Note, however, that a $\rho$-implausible pattern may occur in $Z$ as a substring, either because it occurred in a part of $W$ that was copied to $Z$ (\eg a non-sensitive pattern), or due to the change of frequency of some substrings that are created in $Z$ after the replacement of a $\#$. However, since such $\rho$-implausible patterns did not contain a $\#$ in the first place, they cannot be exploited by an attacker seeking to reverse the construction of $Z$.

\section{{\ETFSA}}\label{sec:etfsa}
Let $U$ and $V$ be two non-sensitive length-$k$ substrings of $W$ such that $U$ is the $t$-predecessor of $V$. Since $U$ and $V$ must occur in the same order in the solution string $X_{\text{ED}}$, the main choice we have to make in order to solve the {\ETFS} problem is whether to:
\begin{enumerate}

\item[(I)] ``merge'' $U$ and $V$ when the length-$(k-1)$ suffix of $U$ and the length-$(k-1)$ prefix of $V$ match; or \item[(II)] ``interleave'' $U$ and $V$ with a carefully  selected string over $\Sigma\cup\{\#\}$.
\end{enumerate}

Among operations I and II, for every such pair $U$ and $V$, we must select the operation that \emph{globally} results in the smallest number of edit operations. Operations I and II can naturally be  expressed by means of a regular expression $E$. In particular, this implies that any instance of the {\ETFS} problem can be  reduced to an instance of approximate regular expression matching and thus an algorithm for approximate regular expression matching between $E$ and $W$~\cite{regex} can be employed. More formally, given a string $W$ and a regular expression $E$, the {\em approximate regular expression matching} problem is to find a string $T$ that matches $E$ with minimal $d_E(W,T)$. The following result is known.
\begin{theorem}[\cite{regex}]\label{the:regexp}
Given a string $W$ and a regular expression $E$,
the approximate regular expression matching problem can be solved in $\cO(|W|\cdot|E|)$ time.
\end{theorem}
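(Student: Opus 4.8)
\textbf{Proof plan for Theorem~\ref{the:regexp}.}

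The plan is to establish the $\cO(|W|\cdot|E|)$ bound for approximate regular expression matching by building a nondeterministic finite automaton (NFA) from $E$ and then running a dynamic-programming ``edit-distance-to-the-automaton'' computation, processing $W$ one letter at a time. First I would convert the regular expression $E$ into an NFA $N$ using the standard Thompson construction, which produces an automaton with $\cO(|E|)$ states and $\cO(|E|)$ transitions; crucially, each state has bounded out-degree and the automaton has a single start state and a single accepting state. The key conceptual object is a table $D$ where $D_v^{(i)}$ denotes the minimum number of edit operations needed to transform the prefix $W[0\dd i-1]$ into \emph{some} string that labels a path from the start state to state $v$ in $N$. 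A string $T$ matching $E$ at minimal edit distance from $W$ then corresponds to $\min$ over accepting states of $D^{(|W|)}$, and the answer is read off after the final column is computed.

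The core of the argument is the recurrence relating column $i$ to column $i-1$, mirroring the three edit operations. For each state $v$ I would set
\begin{equation*}
D_v^{(i)} = \min\bigl\{\, D_v^{(i-1)} + 1,\;\; \min_{u \xrightarrow{a} v}\bigl(D_u^{(i-1)} + [a \neq W[i-1]]\bigr),\;\; \min_{u \xrightarrow{a} v} \bigl(D_u^{(i)} + 1\bigr) \,\bigr\},
\end{equation*}
where the first term models deleting $W[i-1]$ (consume an input letter without moving in $N$), the second models matching or substituting (consume $W[i-1]$ while following an edge labeled $a$), and the third models inserting a letter of $T$ (follow an edge in $N$ without consuming input). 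The bracket $[a\neq W[i-1]]$ contributes $0$ on a match and $1$ on a substitution; $\varepsilon$-transitions are handled by treating them as zero-cost moves within the same column. I would prove correctness of this recurrence by induction on $i$ together with a standard optimal-substructure argument: any optimal alignment of a prefix of $W$ against a prefix of an accepting path decomposes at its last operation into one of these three cases.

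The main obstacle, and the step requiring the most care, is the \emph{insertion} term (third case) together with the $\varepsilon$-transitions: both follow edges in $N$ without advancing the input pointer, so within a single column $i$ the values $D_v^{(i)}$ depend on one another through these zero- and unit-cost intra-column moves. Naively this creates cyclic dependencies (the automaton may contain cycles arising from the Kleene star $E^*$). The plan is to resolve this by recognizing that, within a fixed column, computing all $D_v^{(i)}$ correctly amounts to a shortest-path computation on $N$ where edges followed without consuming input carry cost $1$ (insertions) or $0$ ($\varepsilon$-transitions), seeded by the values obtained from the deletion and match/substitution terms. Because edge costs are only $0$ and $1$, this intra-column relaxation can be performed in time linear in the number of transitions, $\cO(|E|)$, using a single pass in topological-like order exploiting the structure of the Thompson automaton (whose only cycles are the back-edges of starred subexpressions), or more robustly via a $0$–$1$ BFS. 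Since each of the $|W|+1$ columns costs $\cO(|E|)$, the total running time is $\cO(|W|\cdot|E|)$, which completes the proof. I would cite~\cite{regex} for the detailed treatment of the intra-column propagation, as this is precisely the technical heart of that work.
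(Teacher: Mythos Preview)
The paper does not supply its own proof of Theorem~\ref{the:regexp}; it simply quotes the result as known and cites~\cite{regex}. So there is nothing in the paper to compare your plan against beyond that citation.

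That said, your sketch is essentially the argument of the cited work: Thompson's construction to get an $\cO(|E|)$-sized NFA, a column-by-column dynamic program over the states, and a careful intra-column relaxation to handle insertions and $\varepsilon$-moves. The one place to be precise is the intra-column step: a generic $0$--$1$ BFS is correct but you should argue why it costs $\cO(|E|)$ per column (edges are $\cO(|E|)$, weights in $\{0,1\}$, so deque-based BFS suffices). The original paper~\cite{regex} instead exploits the specific nesting structure of the Thompson automaton (the only cycles are the back-edges introduced by Kleene stars, and these can be handled with two ordered sweeps over the states per column). Either route yields the stated $\cO(|W|\cdot|E|)$ bound, so your plan is sound; just be aware that citing~\cite{regex} for ``the detailed treatment of the intra-column propagation'' commits you to their two-pass argument rather than the $0$--$1$ BFS alternative you also mention.
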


In the following, we define a specific type of a regular expression $E$.
Let us first define the following regular expression: $$\Sigma^{<k}=(\underbrace{(a_1|a_2|\ldots|a_{|\Sigma|}|\varepsilon)\ldots (a_1|a_2|\ldots|a_{|\Sigma|}|\varepsilon)}_{\text{$k-1$ times}}),$$
\noindent where $\Sigma=\{a_1,a_2,\ldots,a_{|\Sigma|}\}$ is the alphabet of $W$ and $k>1$. We also define the following regular-expression gadgets, for a letter $\# \notin \Sigma$:
\newcommand{\sltk}{\ensuremath{\Sigma^{<k}}}
$$ \oplus = \#(\sltk\#)^*, \quad \ominus = (\sltk\#)^*, \quad \otimes =(\# \sltk)^*.$$

Intuitively, the gadget $\oplus$ represents a string we may choose to include in the output in an effort to minimize the edit distance between $W$ and the solution string $X_{\text{ED}}$. It should be clear that the length of $\oplus$ is in $\cO(k|\Sigma|)$ and that $\oplus$ cannot generate any length-$k$ substring over $\Sigma$. Furthermore, inserting $\oplus$ in $E$ cannot create any sensitive or non-sensitive pattern due to the occurrences of $\#$ on both ends of $\oplus$. The gadgets $\ominus$ and $\otimes$ are similar to $\oplus$. They are added in the beginning and at the end of $E$, respectively. This is because $E$ should not start or end with $\#$ as this would only increase the edit distance to $W$. As it will be explained later, to  construct $E$, we also make use of the $|$ operator.  Intuitively, the $|$ operator represents the choice we make between operation ``merge'' or ``interleave''.

We are now in a position to describe {\ETFSA}, an algorithm for solving the {\ETFS} problem. {\ETFSA} starts by constructing $E$. Let $(N_1,N_2\ldots,N_{|\mathcal{I}|})$ be the sequence of non-sensitive length-$k$ substrings as they occur in $W$ from left to right. We first set $E=\ominus N_1$ and then process the pairs of non-sensitive length-$k$ substrings $N_i$ and $N_{i+1}$, for all $i\in\{1,|\mathcal{I}|-1\}$.
At the $i$th step, we examine whether or not $N_i$ and $N_{i+1}$ can be merged. If they can, we append to $E$ a regular expression $(A|\oplus N_{i+1})$, where $A$ is obtained by chopping-off the length-$(k-1)$ prefix of $N_{i+1}$ (that is, the remainder of $N_{i+1}$ after merging it with $N_i$). Otherwise, we append $\oplus N_{i+1}$ to $E$. Intuitively, using $A$ corresponds to choosing ``merge'' and $\oplus N_{i+1}$ to choosing ``interleave''. After examining each pair $N_i$ and $N_{i+1}$, we append $\otimes$ to $E$. This concludes the construction of $E$. Note how, for any combination of choices, $N_{i+1}$ will always appear in the string obtained.

Next, {\ETFSA} employs Theorem~\ref{the:regexp}  to construct $X_{\text{ED}}$. In particular, it finds a string $T$ that matches $E$ with minimal $d_E(W,T)$. Last, it sets $X_{\text{ED}}=T$. We arrive at the main result of this section.

\etfsatheorem*

\begin{proof} Constructing $E$ can be done in $\cO(n+kn+|E|)=\cO(k|\Sigma|n)$ time, since: (I)
The non-sensitive length-$k$ substrings of $W$ can be obtained in $\cO(n)$ time, by reading $W$ from left to right and checking $\mathcal{S}$. (II)
Checking whether $N_i$ and $N_{i+1}$ are mergeable takes  $\cO(k)$ time via letter comparisons, and it is performed in each of the $\cO(n)$ steps. (III) The length is $|E|=\cO(kn+k|\Sigma|n)=\cO(k|\Sigma|n)$. This is because $E$ contains at most $n$ occurrences of non-sensitive length-$k$ substrings, at most $n$ occurrences of $\oplus$, and one occurrence of each of $\ominus$ and $\otimes$ and because the lengths of $\oplus$, $\ominus$ and $\otimes$ are $\cO(k|\Sigma|)$.

Computing $T$ from $W$ and $E$ can be performed in $\cO(|W|\cdot|E|)=\cO(n\cdot |E|)$ time using Theorem~\ref{the:regexp}. Thus  {\ETFSA} takes $\cO(k|\Sigma|n^2)$ time in total.

The correctness of {\ETFSA} follows from the fact that by construction: (I) $T$ does not contain any sensitive pattern, so {\bf C1} is satisfied; (II) $T$ satisfies {\bf P1} and {\bf P2} as no length-$k$ substring over $\Sigma$ (other than the non-sensitive ones) is inserted in $E$; (III) All strings satisfying {\bf C1}, {\bf P1} and {\bf P2} can be obtained by $E$, since they must have the same t-chain of non-sensitive patterns over $\Sigma^*$ as $W$, interleaved by length-$k$ substrings that are on $(\Sigma\cup\#)^*$ but \textit{not} on $\Sigma^*$; and (IV) the minimality on edit distance is guaranteed by Theorem~\ref{the:regexp}. The statement follows.~\qed
\end{proof}

\begin{example}[Illustration of the workings of {\ETFSA}]\label{example:etfsa}
Let $W=\texttt{aaabbaabaccbbb}$, $k=4$, and
the set of sensitive patterns be $\{\texttt{aabb}, \texttt{abba}, \texttt{bbaa}, \texttt{baab}, \texttt{ccbb}\}$.
The sequence of non-sensitive patterns is thus $(N_1, \ldots, N_6) =(\texttt{aaab}, \texttt{aaba}, \texttt{abac}, \texttt{bacc}, \texttt{accb}, \texttt{cbbb})$.
Given that $k=4$ and $\Sigma=\{\texttt{a},\texttt{b}, \texttt{c}\}$, {\ETFSA} constructs  the following gadgets,
$$\oplus=
\#(\Sigma^{<4}\#)^*=\#(((\texttt{a}|\texttt{b}|\texttt{c}|\varepsilon)(\texttt{a}|\texttt{b}|\texttt{c}|\varepsilon)(\texttt{a}|\texttt{b}|\texttt{c}|\varepsilon))\#)^*$$

\vspace{-2mm}
$$\ominus=(\Sigma^{<4}\#)^*=(((\texttt{a}|\texttt{b}|\texttt{c}|\varepsilon)(\texttt{a}|\texttt{b}|\texttt{c}|\varepsilon)(\texttt{a}|\texttt{b}|\texttt{c}|\varepsilon))\#)^*$$

\vspace{-2mm}
$$\otimes=(\#\Sigma^{<4})^*=(\#((\texttt{a}|\texttt{b}|\texttt{c}|\varepsilon)(\texttt{a}|\texttt{b}|\texttt{c}|\varepsilon)(\texttt{a}|\texttt{b}|\texttt{c}|\varepsilon)))^*$$

\noindent and sets $E=\ominus N_1=\ominus\texttt{aaab}$. Then, it iterates  over each pair of successive non-sensitive length-$k$ substrings in the order they appear in $W$ (\ie pair $(N_i, N_{i+1})$ is considered in Step $i\in [1,5]$) and the regular expression $E$ is  updated, as detailed below.

In Step $1$, {\ETFSA} considers the pair $(N_1, N_2)=(\texttt{aaab},\texttt{aaba})$. Observe that in this case $N_1$ and  $N_2$ can be merged, since the length-$3$ suffix of $N_1$ and the length-$3$ prefix of $N_2$ match.
Thus, $(A|N_2)=(\texttt{a}|\oplus\texttt{aaba})$ is appended to $E$. Recall that when merging, we chop off the length-$(k-1)$ prefix of $N_{i+1}=N_2$ (because we have merged it already) and write down what is left of $N_{2}$ (\texttt{a} in this case) before  $|$. Thus, $E=\ominus \texttt{aaab}(\texttt{a}|\oplus\texttt{aaba})$.

In Step 2, {\ETFSA}  considers $(N_2, N_3)=(\texttt{aaba},\texttt{abac})$. Again, $N_2$ and  $N_3$ can be merged. Thus, $(\texttt{c}| \oplus \texttt{abac})$ is appended into $E$, which leads to
$E=\ominus \texttt{aaab}(\texttt{a}|\oplus\texttt{aaba})(\texttt{c}|\oplus
\texttt{abac})$.

In Steps 3 and 4, {\ETFSA} considers the pairs $(N_3, N_4)=(\texttt{abac}, \texttt{bacc})$ and $(N_4, N_5)=(\texttt{bacc}, \texttt{accb})$, respectively. Since the patterns in each pair can be merged, the algorithm appends into $E$ the regular expression $(\texttt{c}| \oplus \texttt{bacc})$ and $(\texttt{b}| \oplus \texttt{accb})$, for the first and second pair, respectively. This leads to
$E=\ominus \texttt{aaab}(\texttt{a}|\oplus\texttt{aaba})(\texttt{c}|\oplus
\texttt{abac})(\texttt{c}|\oplus
\texttt{bacc})(\texttt{b}|\oplus
\texttt{accb})$.

In Step 5, {\ETFSA} considers the last pair $(N_5, N_6)=(\texttt{accb},\texttt{cbbb})$, which cannot be merged, and appends $\oplus \texttt{cccb}$ to $E$. Since there is no other pair to be considered, $\otimes$ is also appended to $E$, leading to:
$$E=\ominus \texttt{\underline{aaab}}(\texttt{a}|\underline{\oplus\texttt{aaba}})(\texttt{\underline{c}}|\oplus
\texttt{abac})(\texttt{\underline{c}}|\oplus
\texttt{bacc})(\texttt{\underline{b}}|\oplus
\texttt{accb})\underline{\oplus \texttt{cbbb}} \otimes.$$
At this point, {\ETFSA} employs Theorem~\ref{the:regexp} to find the following string $T$ that matches $E$ (the choices that were made in the construction of $T$ are underlined in $E$ and $\ominus$, $\oplus$, $\otimes$ are matched by the empty string):
$$T=\texttt{aaab\#aabaccb\#cbbb},$$

\noindent with minimal $d_E(T,W)=4$. Last, {\ETFSA} returns $X_{\text{ED}}=T$.
\qed
\end{example}

Note that $X_{\text{ED}}=T$ in Example \ref{example:etfsa} does not contain any sensitive pattern and that all non-sensitive patterns of $W$ appear in $T$ in the same order and with the same frequency as they appear in $W$. Note also that, for the same instance,  {\TFSA} would return string $X=$\texttt{aaabaccb\#cbbb} with $d_E(W,X)=5>d_{E}(W,X_{\text{ED}})=4$ and $|X|=13<|X_{\text{ED}}|=17$.

\section{Experimental Evaluation}\label{sec:exp}

We evaluate our algorithms in terms of {\em effectiveness} and {\em efficiency}. Effectiveness is measured based on data utility and number of implausible patterns. Efficiency is measured based on runtime.

\paragraph{Evaluated Algorithms} First, we consider the pipeline {\TFSA}$\rightarrow$ {\PFSA}$\rightarrow${\MCSRA}, referred to as {\TPM}.
Given  a string $W$ over $\Sigma$, {\TPM} sanitizes $W$ by applying {\TFSA}, {\PFSA}, and then {\MCSRA}. {\MCSRA} uses the  $\cO(\delta\sigma\theta)$-time algorithm of~\cite{Pissinger} for solving the MCK instances. The final output is a string $Z$ over $\Sigma$.  {\MCSRA} is configured with an empty  set $\mathcal{U}$ (\ie it may lead to implausible patterns that are created in $Z$ after the replacement of a $\#$).

We did not compare {\TPM} against existing methods, because they are not alternatives to {\TPM} (see Section \ref{sec:finale} for more details on related work). Instead, we compared {\TPM} against a greedy baseline referred to as {\BA}, in terms of data utility and efficiency. {\BA} initializes its output string $Z_{\text{BA}}$ to $W$ and then considers each sensitive pattern $R$ in $Z_{\text{BA}}$, from left to right. For each $R$, {\BA} replaces the letter $r$ of $R$ that has the largest frequency in $Z_{\text{BA}}$ with another letter $r'$ that is not contained in $R$ and has the smallest frequency in $Z_{\text{BA}}$, breaking all ties arbitrarily. Note that this letter replacement should not introduce any other sensitive pattern in $Z_{\text{BA}}$. If no such $r'$ exists, $r$ is replaced by $\#$ to ensure that a solution is produced  (even if it may reveal the location of a sensitive pattern). Each replacement removes the occurrence of $R$ and aims to prevent $\tau$-ghost occurrences by selecting an $r'$ that will not substantially increase the frequency of patterns overlapping with $R$. Note that  {\BA} does not preserve the frequency of non-sensitive patterns, and thus, unlike {\TPM}, it can incur $\tau$-lost patterns. We also implemented a similar baseline  that replaces the  letter in $R$ that has the smallest frequency in $Z_{\text{BA}}$ with another letter that is not contained in $R$ and has the largest frequency in $Z_{\text{BA}}$, but omit its results as it was worse than {\BA}.

In addition, we consider the pipelines {\TFSA}$\rightarrow${\MCSRA} and {\TFSA}$\rightarrow${\MCSRAI},  referred to as {\TM} and {\TMI}, respectively. With {\MCSRAI} we refer to the configuration of {\MCSR} in which there is a non-empty set $\mathcal{U}$ of $\rho$-implausible patterns that must not occur in the output string $Z$.  We omit {\PFSA} from the {\TM} and {\TMI} pipelines to avoid the elimination of some implausible patterns due to re-ordering of blocks of non-sensitive patterns that is performed by {\PFSA}.

Last, we consider {\ETFSA}, which we compare to {\TFSA}, to demonstrate that the latter is a very  effective heuristic for the {\ETFS} problem.

\paragraph{Experimental Data} We considered the following publicly available datasets used in \cite{abul,sbsh,icdm13,odesa}: Oldenburg ({\OLDEN}), Trucks  ({\TRU}), MSNBC ({\MSN}), the complete genome of {\em Escherichia coli} ({\DNA}), and synthetic data (uniformly random strings, the largest of which is referred to as {\SYN}). See Table~\ref{tabdata} for the characteristics of these datasets and the parameter values used in experiments, unless stated otherwise.

\begin{table}[!ht]
\vspace{+1mm}
\scriptsize\centering
\scalebox{1.0}{
\begin{tabular}{|c|| c || c|c|| c | c | c |c|}
\hline {\bf Dataset} & {\bf Data domain} & {\bf Length}  & {\bf Alphabet}  &  {\bf \# sensitive} & {\bf \# sensitive} & {\bf Pattern} & {\bf Implausible pat.} \tabularnewline
~ & ~ & $n$ & {\bf size} $|\Sigma|$ & {\bf patterns} & {\bf positions} $|\mathcal{S}|$ & {\bf length} $k$ & {\bf threshold} $\rho$
\tabularnewline\hline\hline
 \emph{\OLDEN} & Movement & 85,563 & 100 & $[30, 240]$ ~~$\mathbf{(60)}$ & $[600, 6103]$ & ~$[3, 7]$ ~~$\mathbf{(4)}$ & $[-2,-0.1]$ ~$\mathbf{(-1)}$ \tabularnewline \hline
  \emph{\TRU}  & Transportation & 5,763 & 100 & $[30,120]$ ~~$\mathbf{(10)}$ & $[324, 2410]$ & ~$[2, 5]$ ~~$\mathbf{(4)}$ & $[-3,-0.1]$ ~$\mathbf{(-4)}$ \tabularnewline \hline
  \emph{\MSN}  & Web & 4,698,764 & 17 & $[30, 120]$ ~~$\mathbf{(60)}$ & $[6030,320480]$ & ~$[3, 8]$ ~~$\mathbf{(4)}$ & $[-6,-3]$ ~$(\mathbf{-1})$ \tabularnewline \hline
  \emph{\DNA}  & Genomic & 4,641,652 & 4 & ~$[25, 500]$ ~~$\mathbf{(100)}$ & $[163, 3488]$ & $[5, 15]$ $\mathbf{(13)}$ & $[-4.5,-2.5]$ ~$\mathbf{(-2.5)}$ \tabularnewline \hline
  \emph{\SYN} & Synthetic & 20,000,000 & 10 & ~$[10,1000]$ ~$\mathbf{(1000)}$ & $[10724,20171]$& ~$[3,6]$ ~~$\mathbf{(6)}$ & - \tabularnewline \hline
  \emph{\SYNB} & Synthetic & 1,000 & 2 & $[4,32]$ $\mathbf{(16)}$ & $[16,128]$&~$[4,7]$~$\mathbf{(4)}$ & - \tabularnewline \hline
\end{tabular}
}
\caption{Characteristics of datasets and values used (default values are in bold).
}\label{tabdata}
\end{table}

\paragraph{Experimental Setup} The sensitive patterns were selected randomly among the frequent length-$k$ substrings at minimum support $\tau$ following \cite{sbsh,icdm13,odesa}.
We used the fairly low values $\tau=10$, $\tau=20$, $\tau=200$, and $\tau=20$ for {\TRU}, {\OLDEN}, {\MSN}, and {\DNA}, respectively, to have a wider selection of sensitive patterns. In {\MCSRA}, we used a  uniform cost of $1$ for every occurrence of each $\tau$-ghost, a weight of $1$ (resp., $\infty$) for each letter replacement that does not (resp., does) create a sensitive pattern, and we  further set $\theta=\delta$. This setup treats all candidate $\tau$-ghost patterns and all candidate letters for replacement uniformly, to facilitate a fair comparison with {\BA} which cannot distinguish between $\tau$-ghost candidates or favor specific letters. In {\MCSRAI}, we instead set a weight $\infty$  for each letter replacement that does not create a sensitive pattern or an implausible pattern of length $k$.

To capture the utility of sanitized data, we used the \emph{(frequency) distortion} measure  $$\sum_{U}(\text{Freq}_{W}(U)-\text{Freq}_{Z}(U))^2,$$ where $U\in \Sigma^k$ is a non-sensitive pattern. The distortion measure quantifies changes in the frequency of non-sensitive patterns with low values suggesting that $Z$ remains useful for tasks based on pattern frequency (\eg identifying motifs corresponding to functional or conserved DNA~\cite{DBLP:journals/bmcbi/Pissis14}).

We also measured the number of $\tau$-ghost and $\tau$-lost patterns in $Z$ following~\cite{sbsh,icdm13,odesa}, where a pattern $U$ is $\tau\textit{-lost}$ in $Z$ if and only if $\text{Freq}_W(U)\geq \tau$ but $\text{Freq}_{Z}(U)<\tau$. That is, $\tau$-lost patterns model knowledge that can no longer be mined from $Z$ but could be mined from $W$, whereas $\tau$-ghost patterns model knowledge that can be mined from $Z$ but not from $W$. A small number of $\tau$-lost/ghost patterns suggests that frequent pattern mining can be accurately performed on $Z$~\cite{sbsh,icdm13,odesa}.
Unlike BA, by design \TPM{} {\em does not} incur any $\tau$-lost pattern, as \TFSA{} and \PFSA{} preserve frequencies of non-sensitive patterns, and \MCSRA{} can only increase pattern frequencies.

To examine the benefit of using {\MCSRAI} instead of {\MCSRA} when implausible patterns need to be eliminated, we measured the percentage of $\rho$-implausible patterns of length $k$ that may  occur in $Z$, when a letter replaces a $\#$. Clearly, the percentage is $0$ when {\MCSRAI} is used, and a large percentage for {\MCSRA} implies that it is beneficial to use {\MCSRAI} instead.

To capture the effectiveness of {\TFSA} in terms of constructing a string $X$ that is at small edit distance from $W$ (see the {\ETFS} problem), we used the \emph{Edit Distance Relative Error}, defined as $$\frac{d_{E}(W,X)-d_{E}(W,X_{\text{ED}})}{d_{E}(W,X_{\text{ED}})}.$$

All experiments ran on a Desktop PC with an Intel Xeon E5-2640 at 2.66GHz and 16GB RAM. Our source code is written in \texttt{C++}. 
The results presented below have been averaged over $10$ runs.

\subsection{{\TPM} vs.~{\BA}}

\paragraph{Data Utility} We first demonstrate that {\TPM} incurs \emph{very low distortion}, which implies high utility for tasks based on the frequency of patterns (\eg \cite{DBLP:journals/bmcbi/Pissis14}). Fig.~\ref{utility_distortion_sens} shows that, for varying number of sensitive patterns, {\TPM} incurred on average $18.4$ (and up to $95$) times lower distortion than {\BA} over all experiments. Also, Fig.~\ref{utility_distortion_sens} shows that  {\TPM} remains effective even in challenging settings, with many sensitive patterns (\eg the last point in Fig. \ref{utility_distortion_sensb} where about $42\%$ of the positions in $W$ are sensitive). Fig.~\ref{utility_distortion_k} shows that, for varying $k$,  {\TPM} caused on average $7.6$ (and up to $14$) times lower distortion than {\BA} over all experiments.

\begin{figure}\hspace{-6mm}
    \begin{subfigure}[b]{0.24\textwidth}
       \includegraphics[width=1\textwidth]{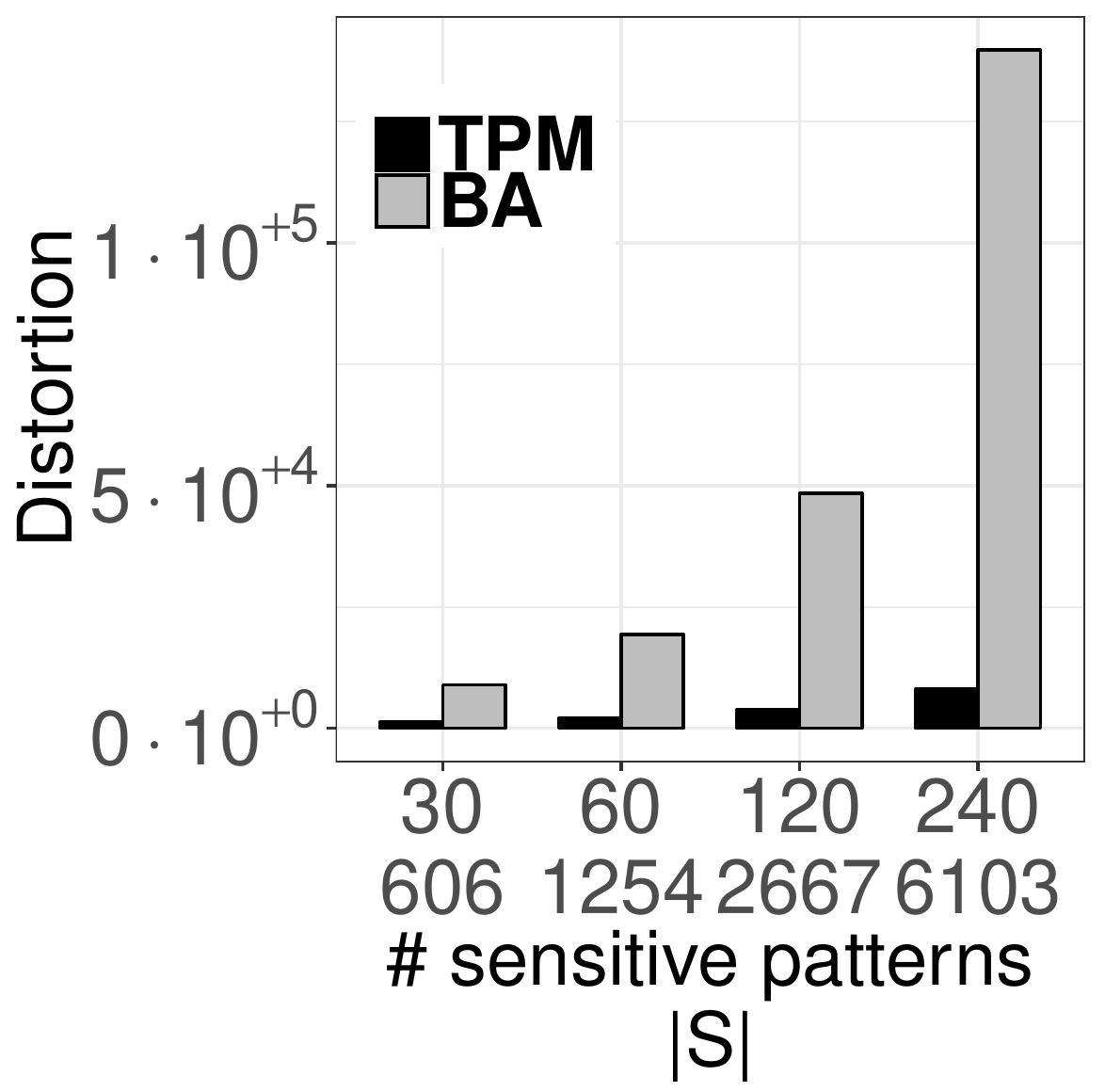}
       \vspace{-6mm}
        \caption{\OLDEN}
        \label{utility_distortion_sensa}
    \end{subfigure}
    ~ 
    \begin{subfigure}[b]{0.24\textwidth}
       \includegraphics[width=1\textwidth]{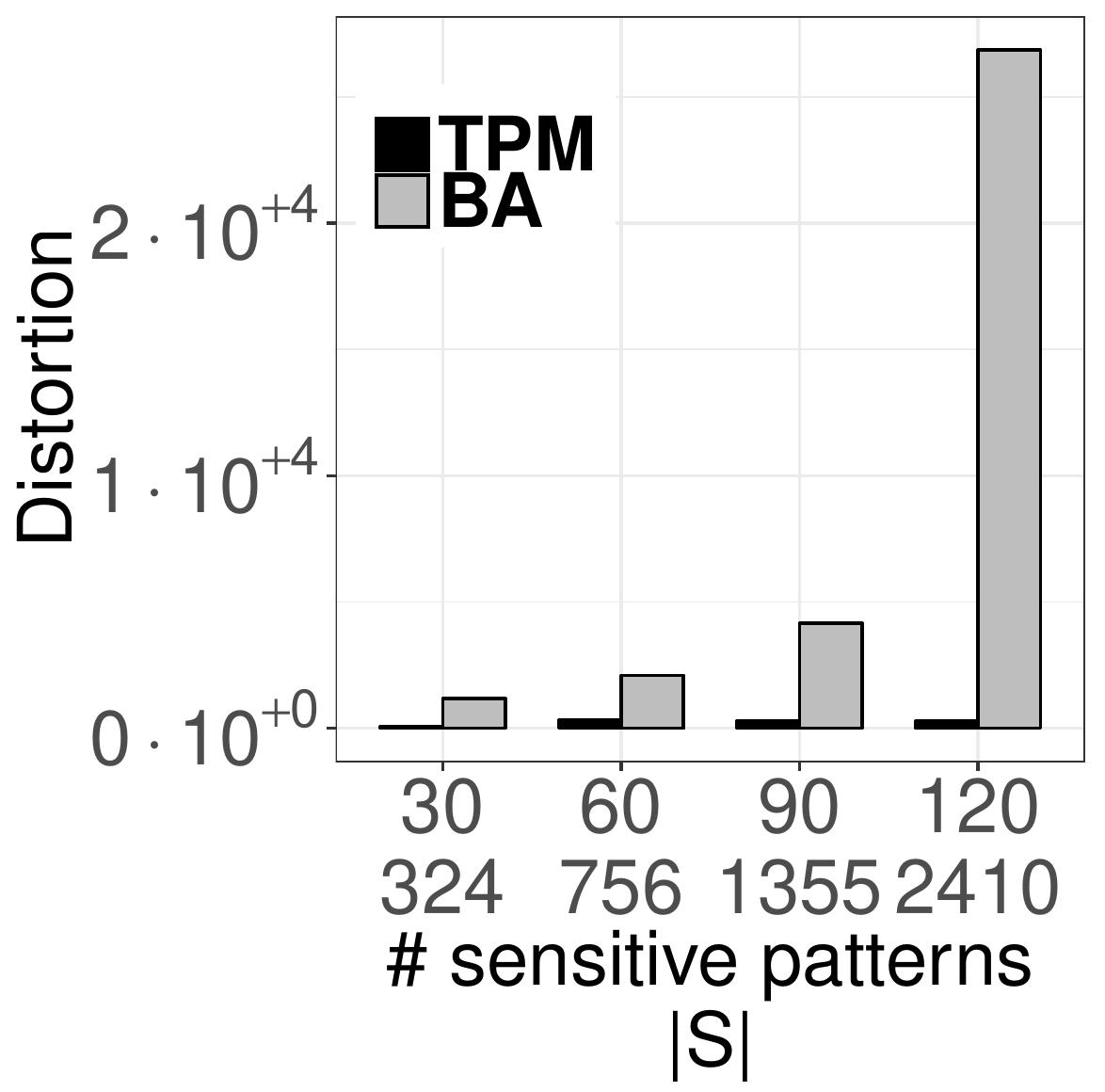}
       \vspace{-6mm}
        \caption{\TRU}
        \label{utility_distortion_sensb}
    \end{subfigure}
    ~ 
    \begin{subfigure}[b]{0.24\textwidth}
       \includegraphics[width=1\textwidth]{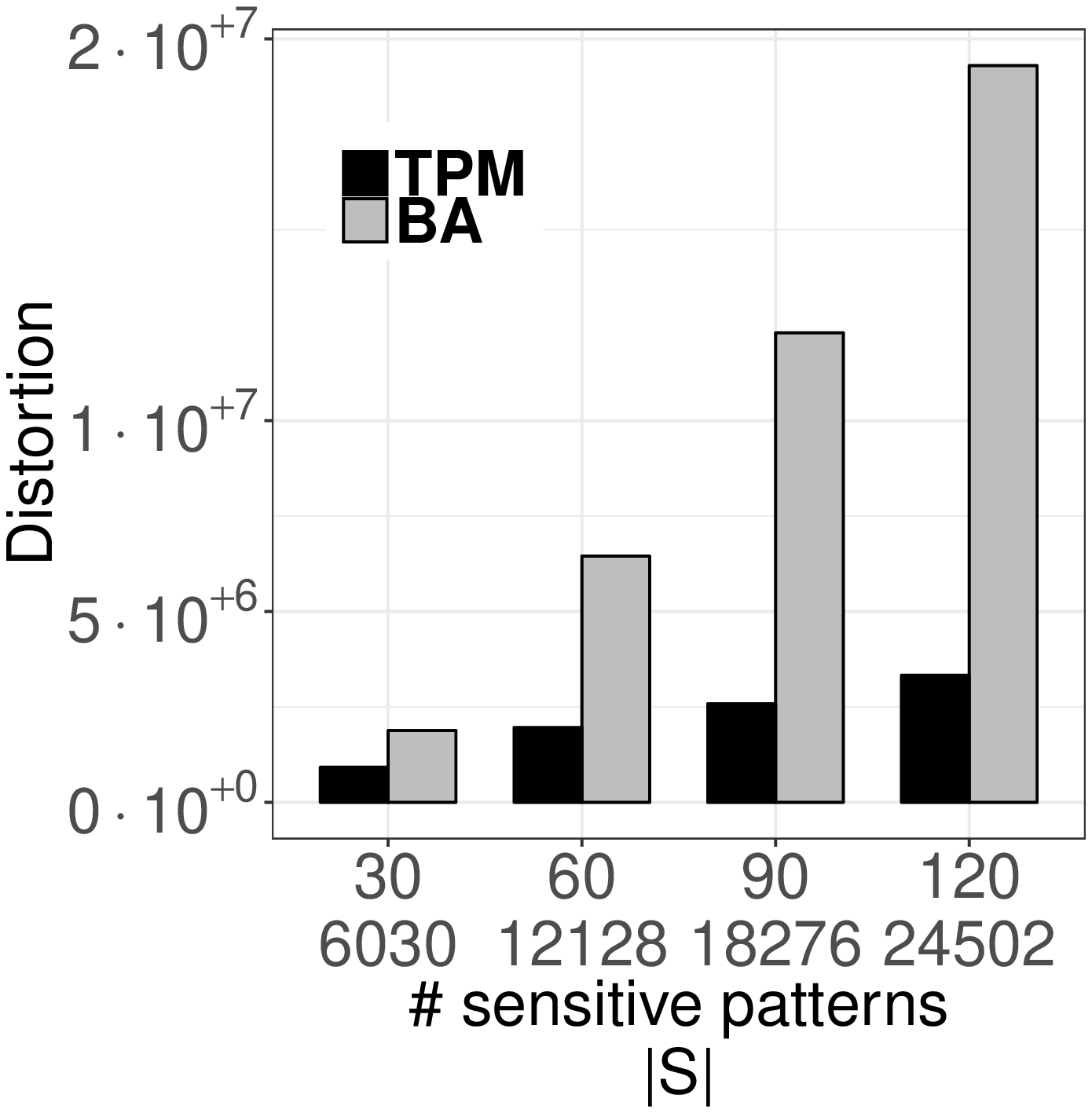}
      \vspace{-6mm}
        \caption{\MSN}
        \label{utility_distortion_sensc}
    \end{subfigure}
    ~ 
      \begin{subfigure}[b]{0.24\textwidth}
       \includegraphics[width=1\textwidth]{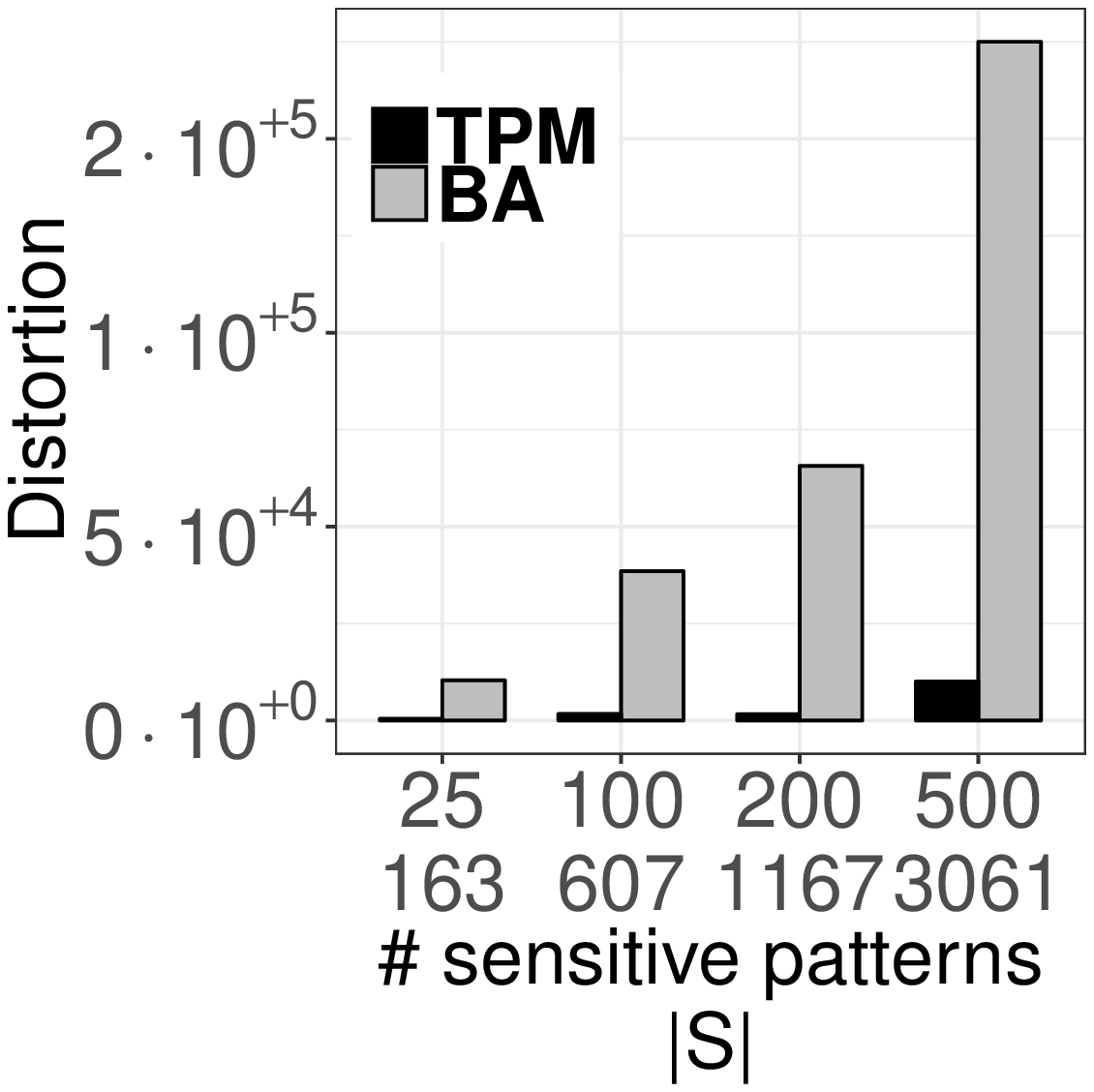}
       \vspace{-6mm}
        \caption{\DNA}
        \label{utility_distortion_sensd}
    \end{subfigure}

    \caption{Distortion vs.~number of sensitive patterns and their total number $|\mathcal{S}|$ of occurrences in $W$~(first two lines on the $X$ axis).} \label{utility_distortion_sens}
\end{figure}
\begin{figure}\hspace{-5mm}
    \centering
    \begin{subfigure}[b]{0.23\textwidth}
       \includegraphics[width=1.05\textwidth]{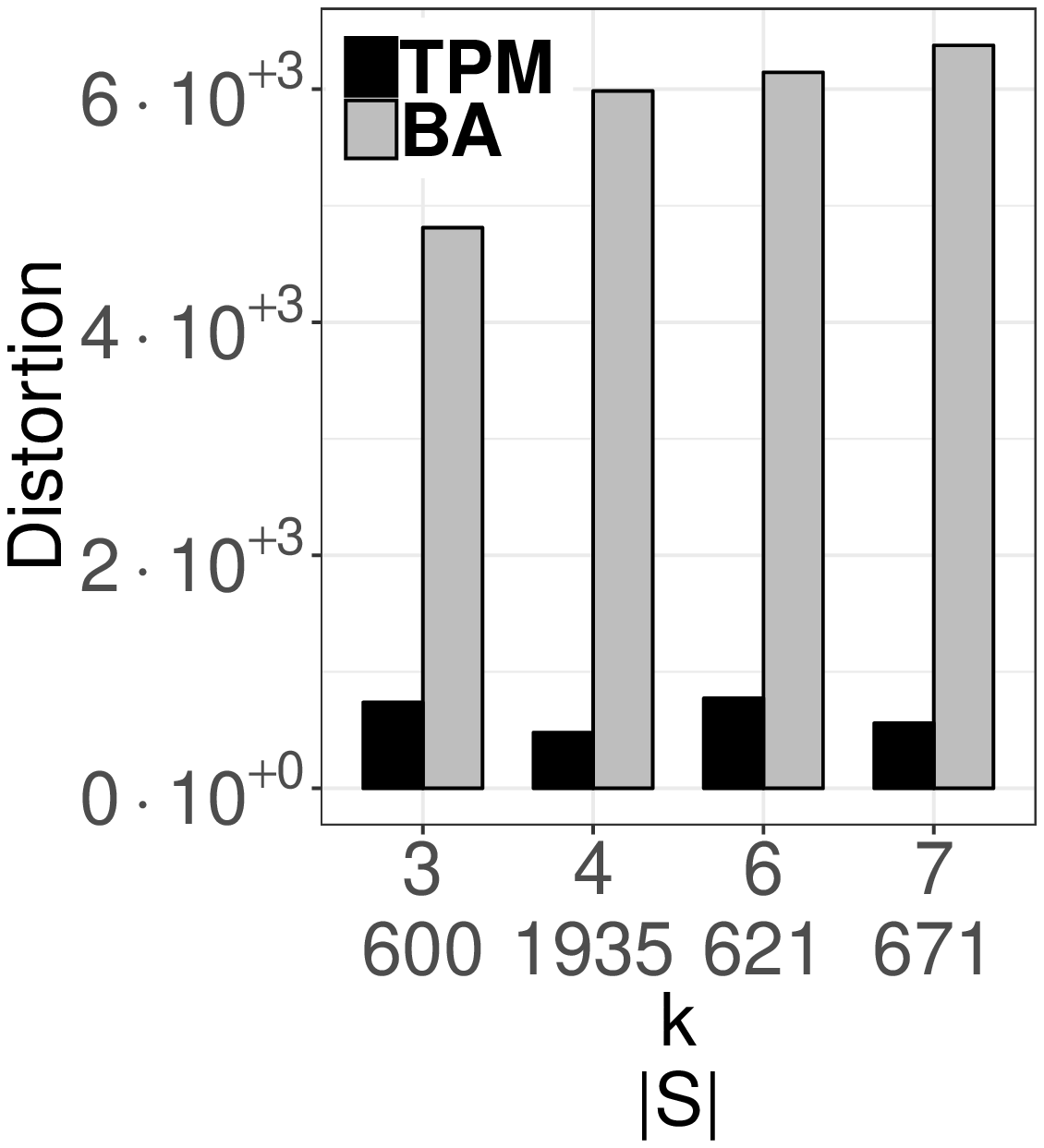}
       \vspace{-7mm}
        \caption{{\OLDEN}}
        \label{utility_distortion_ka}
    \end{subfigure}
    \hspace{-1mm} 
    \begin{subfigure}[b]{0.23\textwidth}
       \includegraphics[width=1.05\textwidth]{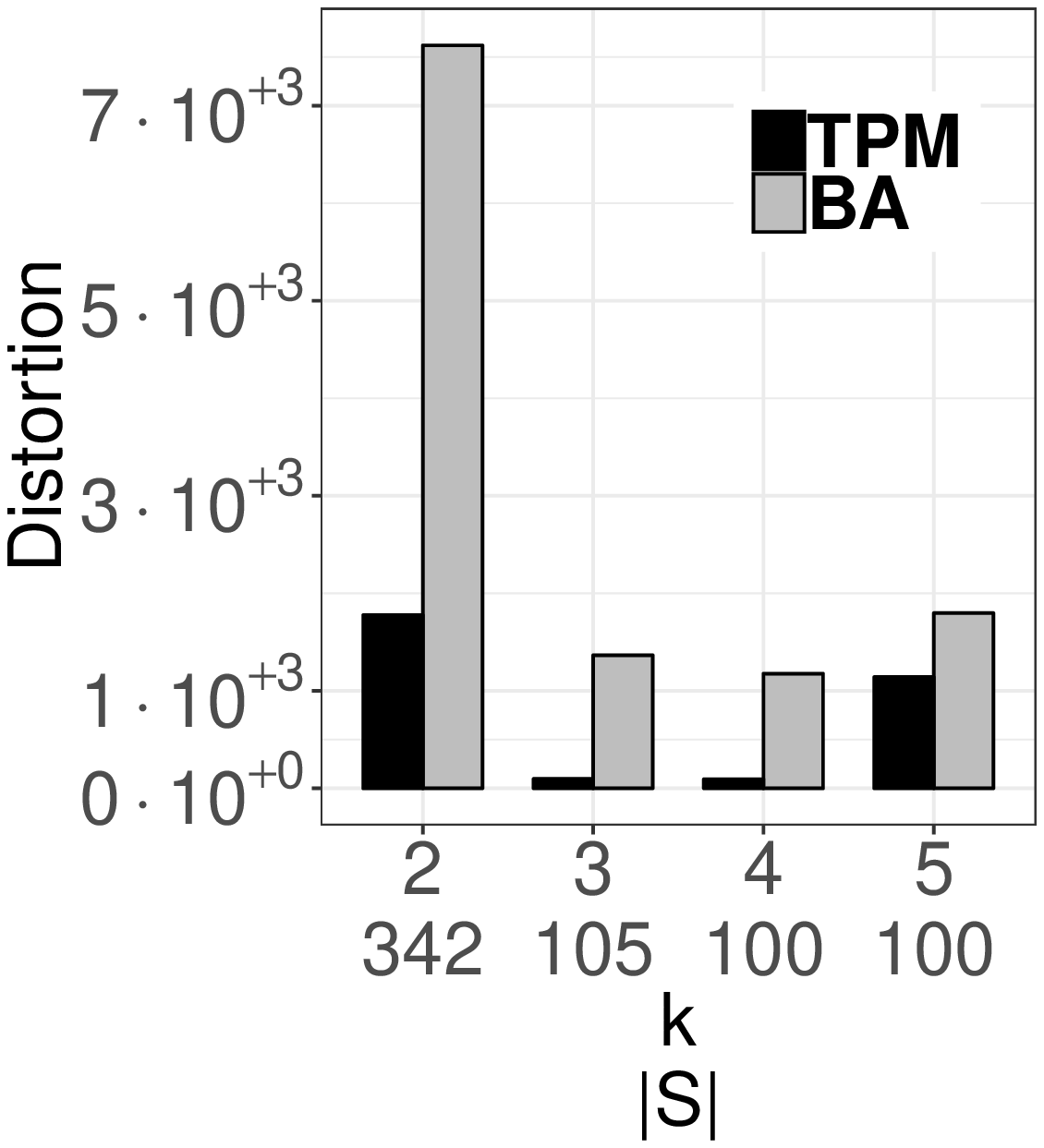}
       \vspace{-7mm}
        \caption{{\TRU}}
        \label{utility_distortion_kb}
    \end{subfigure}
    \hspace{-1mm}
    \begin{subfigure}[b]{0.23\textwidth}
       \includegraphics[width=1.16\textwidth]{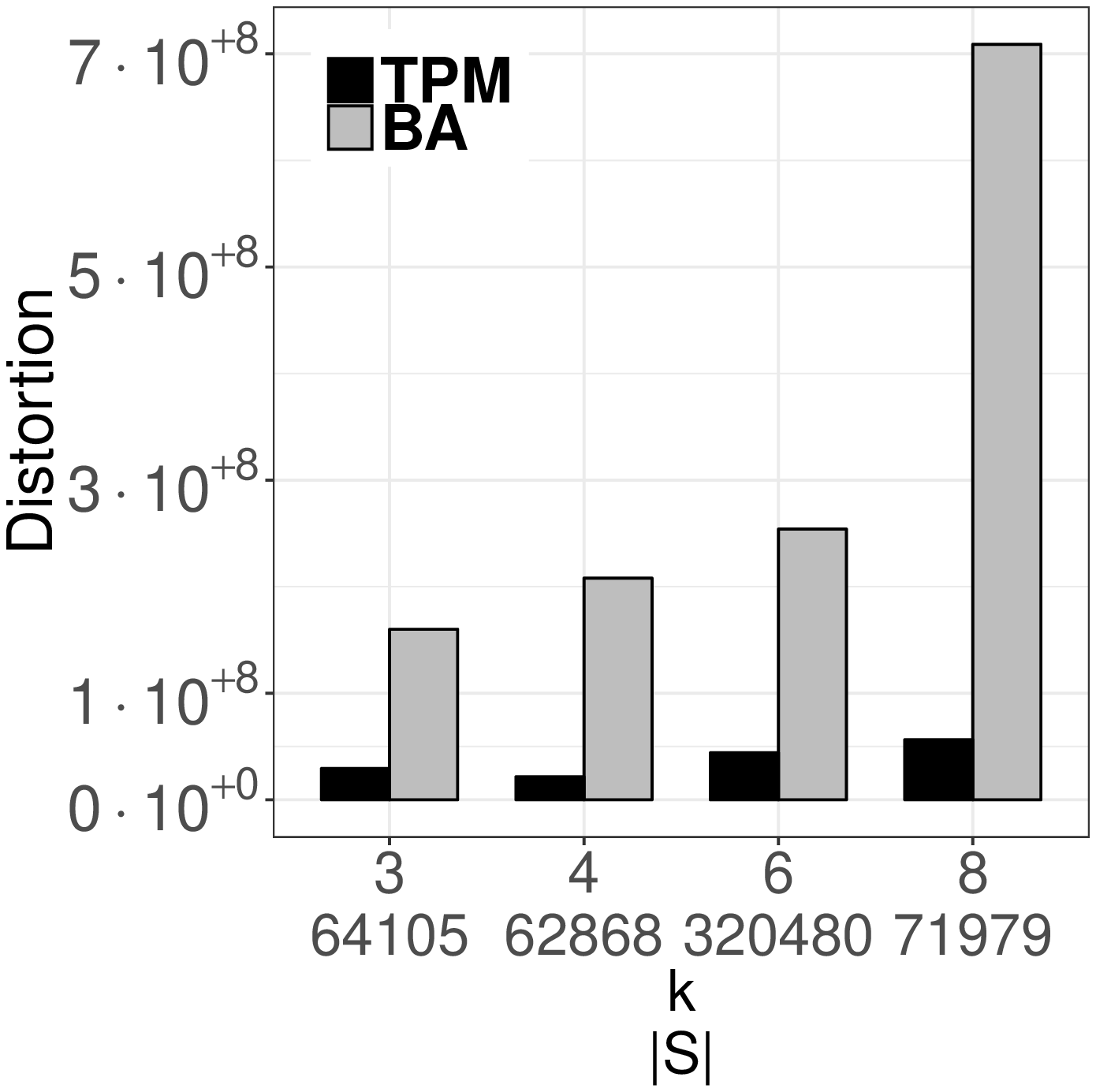}
       \vspace{-7mm}
        \caption{\MSN}
        \label{utility_distortion_kc}
    \end{subfigure}
    \hspace{+2mm}
      \begin{subfigure}[b]{0.23\textwidth}
       \includegraphics[width=1.05\textwidth]{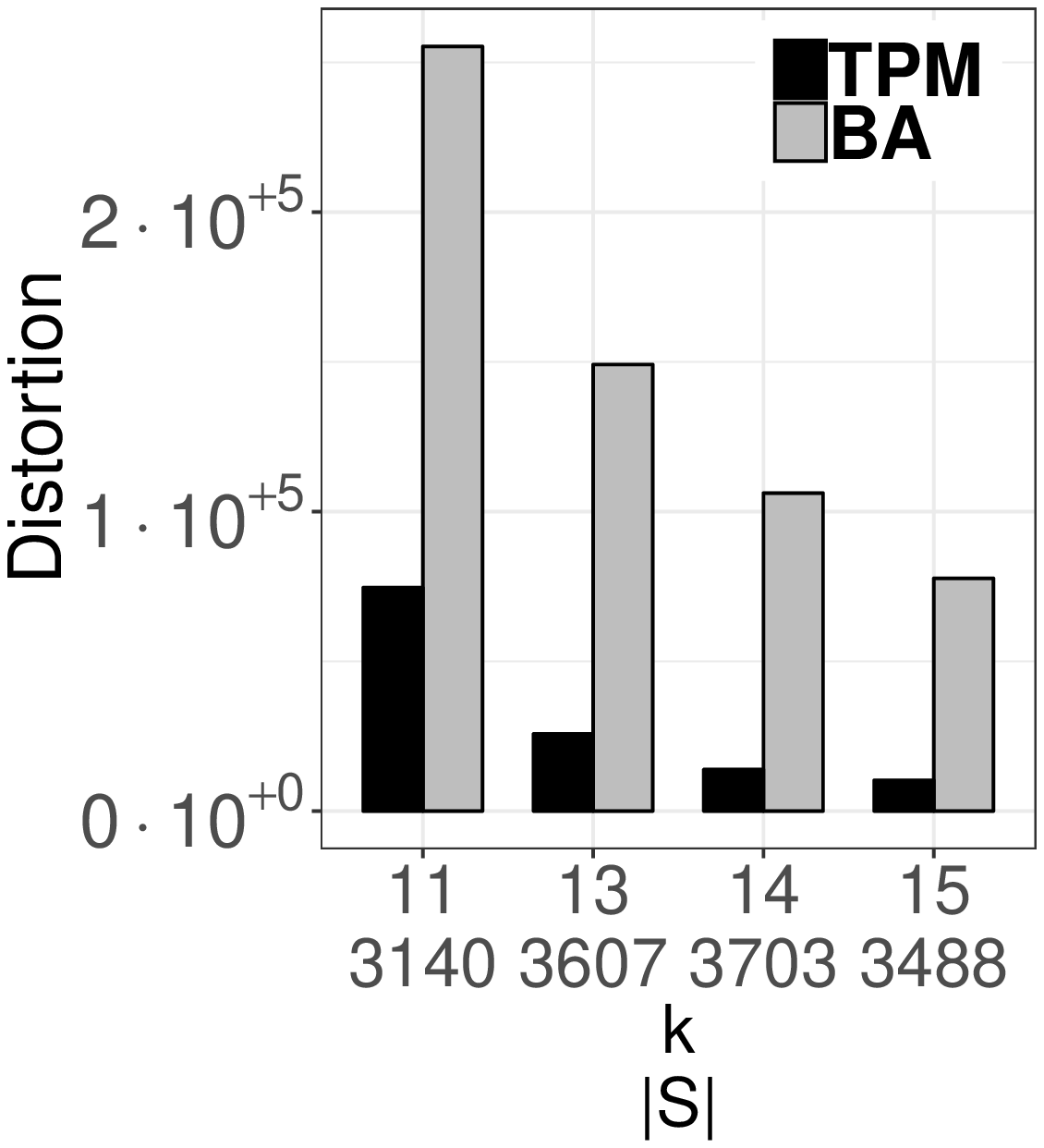}
       \vspace{-7mm}
        \caption{\DNA}
        \label{utility_distortion_kd}
    \end{subfigure}
\caption{Distortion vs.~length of sensitive patterns $k$ (and $|\mathcal{S}|$).}\label{utility_distortion_k}
\end{figure}

Next, we demonstrate that {\TPM} permits \emph{accurate frequent pattern mining}: Fig. \ref{utility_lostghost_sens} shows that  {\TPM} led to no $\tau$-lost or $\tau$-ghost patterns for the {\TRU} and {\MSN} datasets. This implies no utility loss for mining frequent length-$k$ substrings with threshold $\tau$. In all other cases, the number of $\tau$-ghosts was on average $6$ (and up to $12$) times smaller than the total number of $\tau$-lost and $\tau$-ghost patterns for {\BA}. {\BA} performed poorly (\eg up to $44\%$ of frequent patterns became $\tau$-lost for {\TRU} and $27\%$ for {\DNA}). Fig. \ref{utility_lostghost_k} shows that, for varying $k$, {\TPM} led to on average $5.8$ (and up to $19$) times fewer $\tau$-lost/ghost patterns than {\BA}.  {\BA} performed poorly (\eg up to $98\%$ of frequent patterns became $\tau$-lost for {\DNA}).
\begin{figure}\hspace{-5mm}
    \centering
    \begin{subfigure}[b]{0.23\textwidth}
       \includegraphics[width=1.15\textwidth]{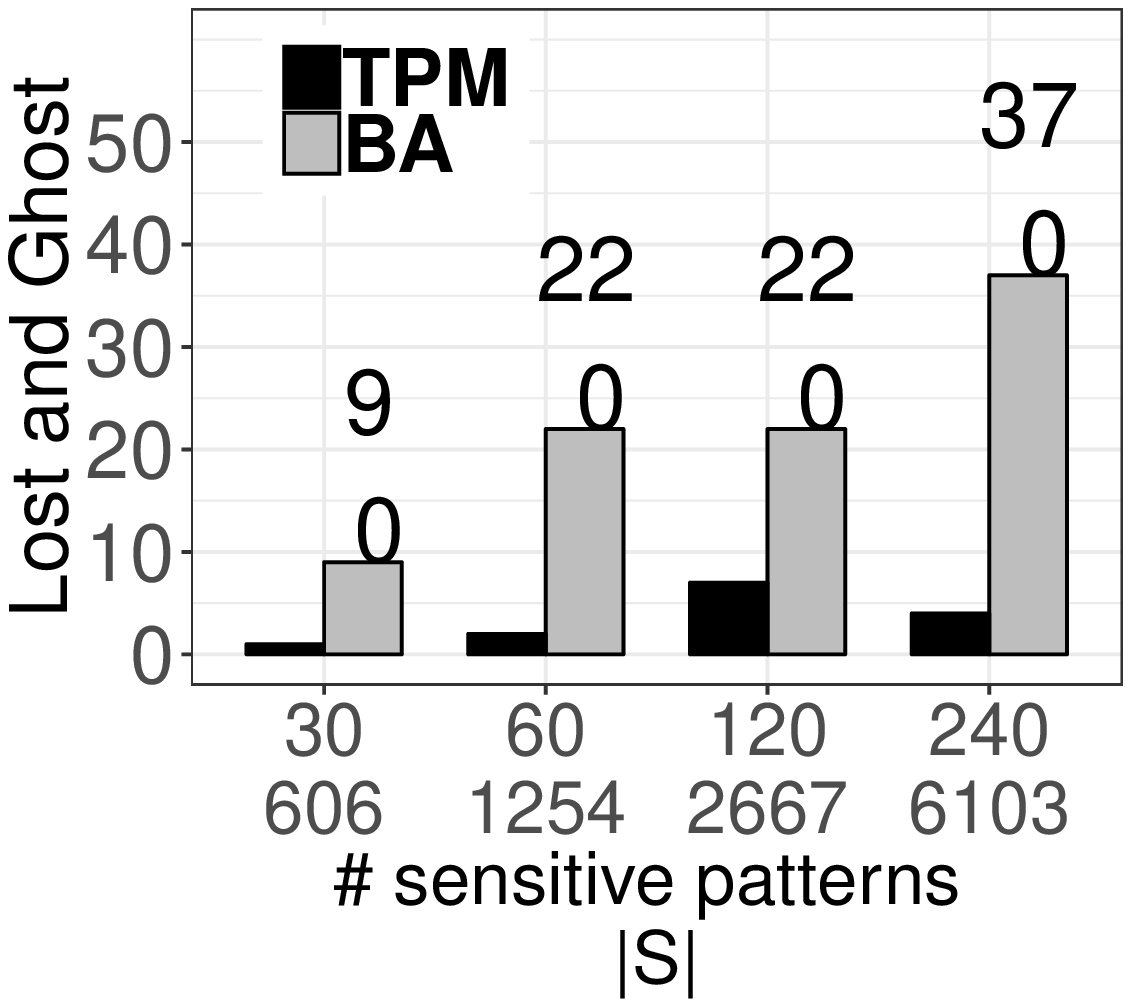}
       \vspace{-7mm}
        \caption{\OLDEN}
        \label{utility_lostghost_sensa}
    \end{subfigure}
    \hspace{+3mm} 
    \begin{subfigure}[b]{0.23\textwidth}
       \includegraphics[width=1.04\textwidth]{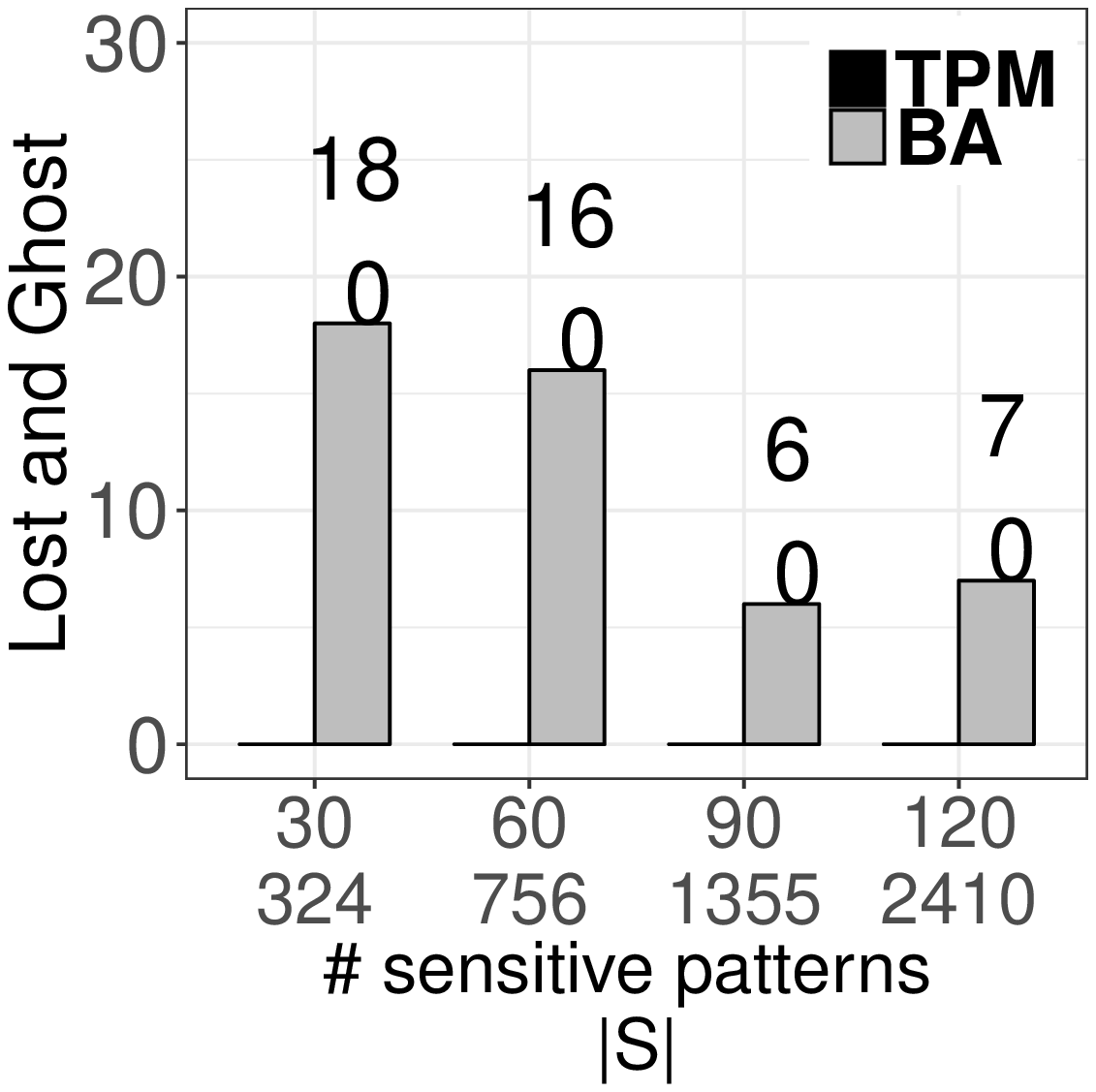}
       \vspace{-7mm}
        \caption{\TRU}
        \label{utility_lostghost_sensb}
    \end{subfigure}
    \hspace{+2mm} 
    \begin{subfigure}[b]{0.23\textwidth}
       \includegraphics[width=1.05\textwidth]{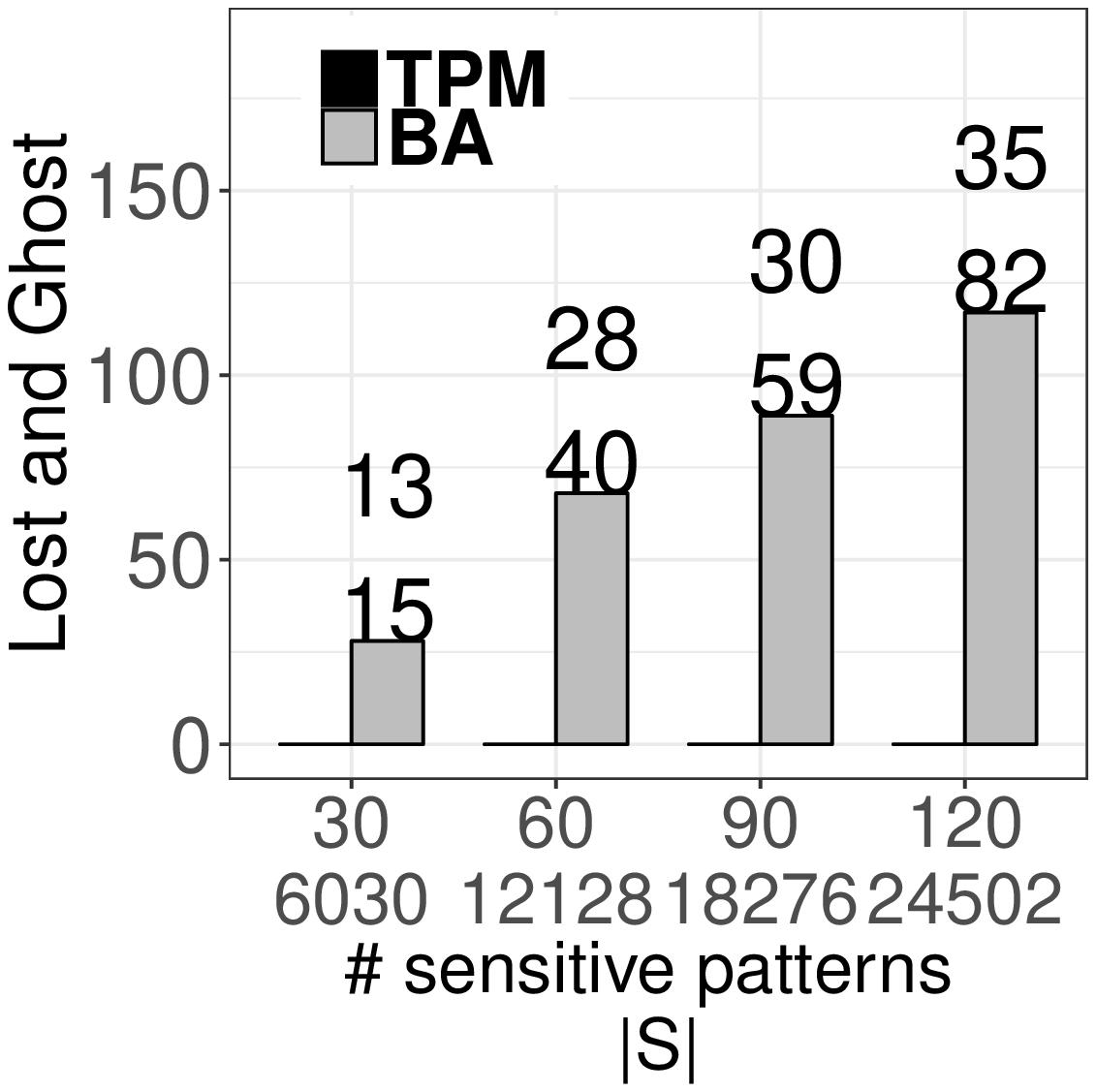}
       \vspace{-7mm}
        \caption{\MSN}
        \label{utility_lostghost_sensc}
    \end{subfigure}
    \hspace{+2mm} 
      \begin{subfigure}[b]{0.23\textwidth}
       \includegraphics[width=1.05\textwidth]{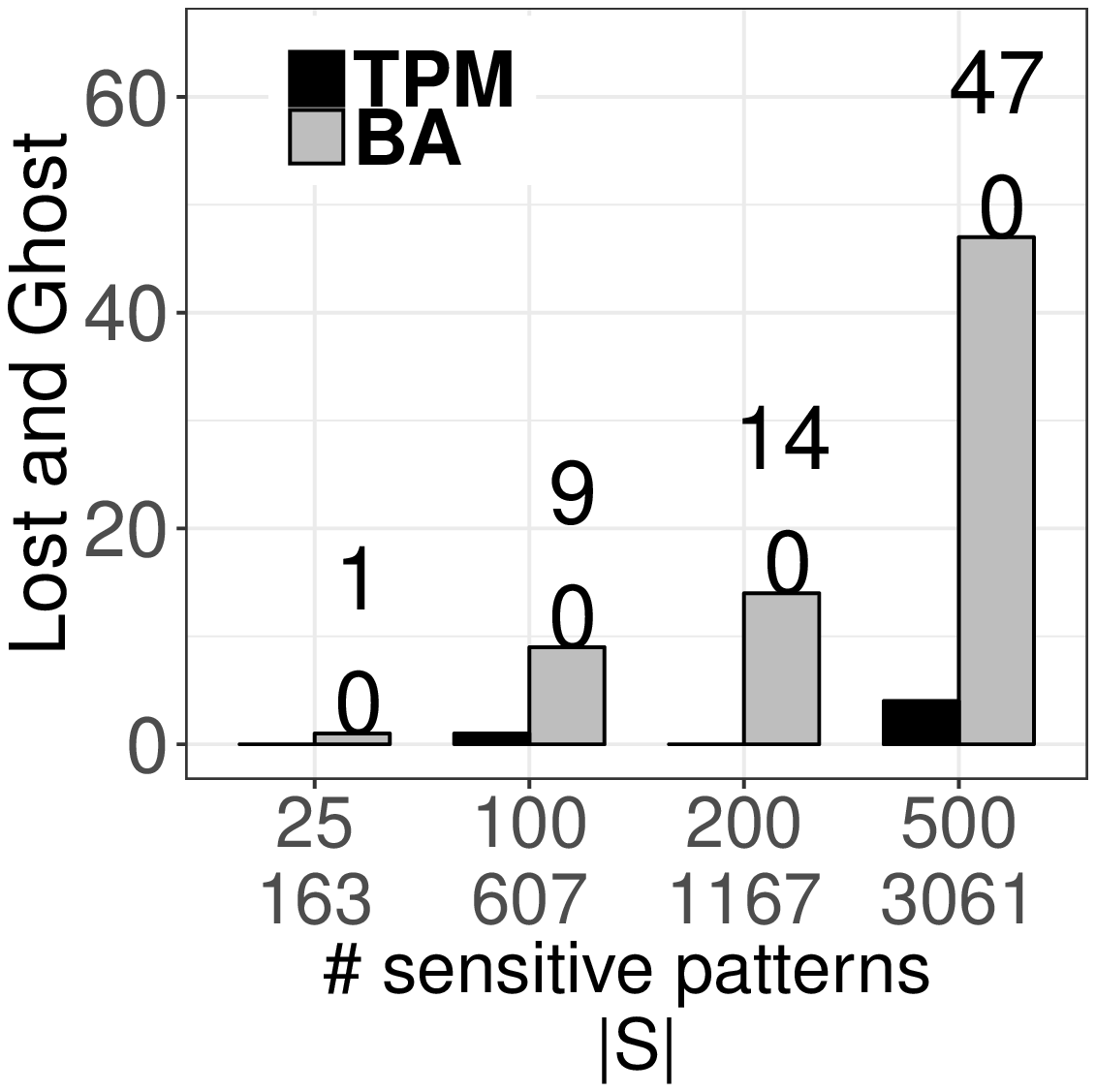}
       \vspace{-7mm}
        \caption{\DNA}
        \label{utility_lostghost_sensd}
    \end{subfigure}
\caption{Total number of $\tau$-lost and $\tau$-ghost patterns vs.~number of sensitive patterns (and $|\mathcal{S}|$). $_{y}^{x}$ on the top of each bar for BA denotes $x$ $\tau$-lost and $y$ $\tau$-ghost patterns.}\label{utility_lostghost_sens}
\end{figure}
\begin{figure*}[ht]\hspace{-0.6cm}
    \centering
    \begin{subfigure}[b]{0.23\textwidth}
       \includegraphics[width=1.13\textwidth]{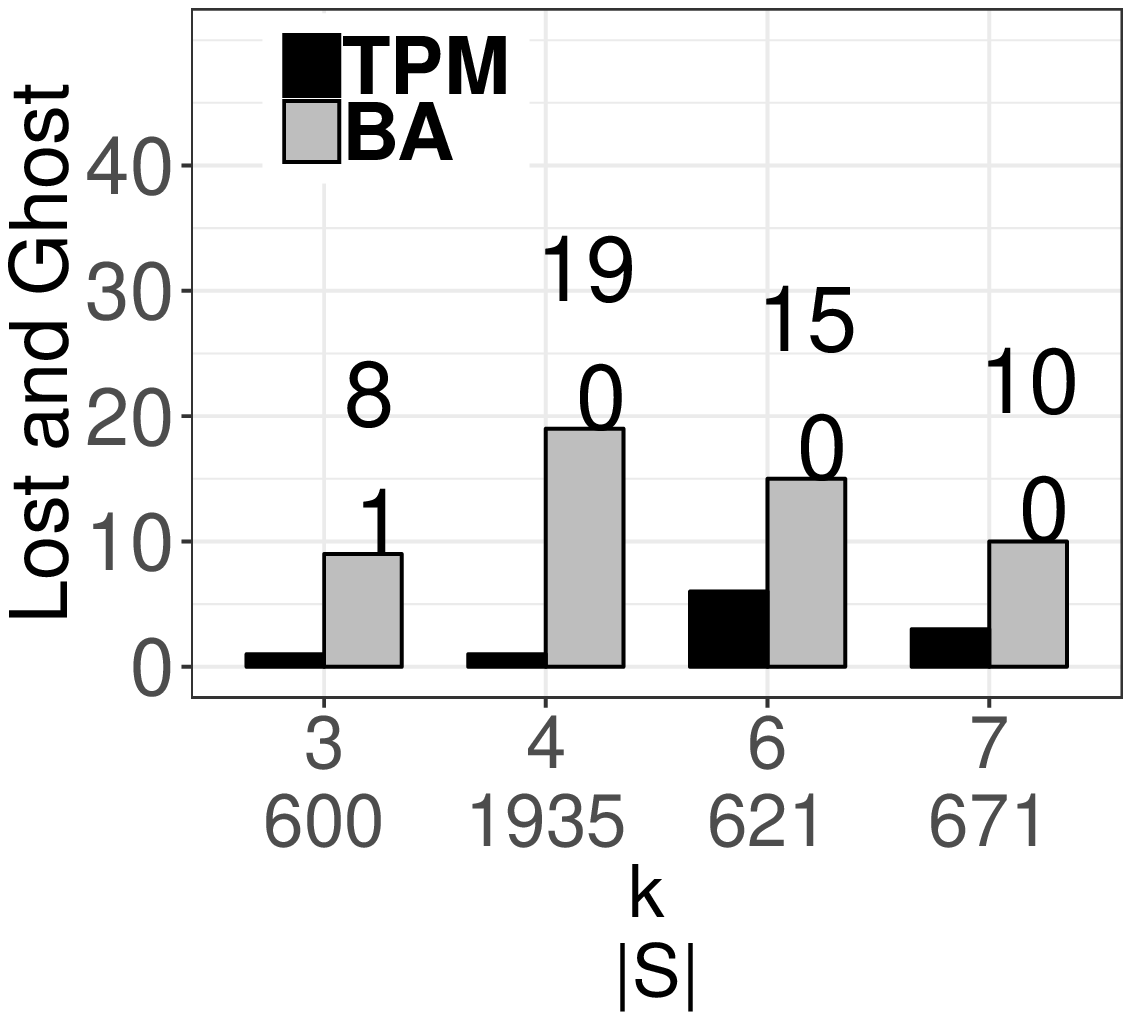}
       \vspace{-7mm}
        \caption{\OLDEN}
        \label{utility_lostghost_ka}
    \end{subfigure}
    \hspace{+2mm} 
    \begin{subfigure}[b]{0.23\textwidth}
       \includegraphics[width=1.13\textwidth]{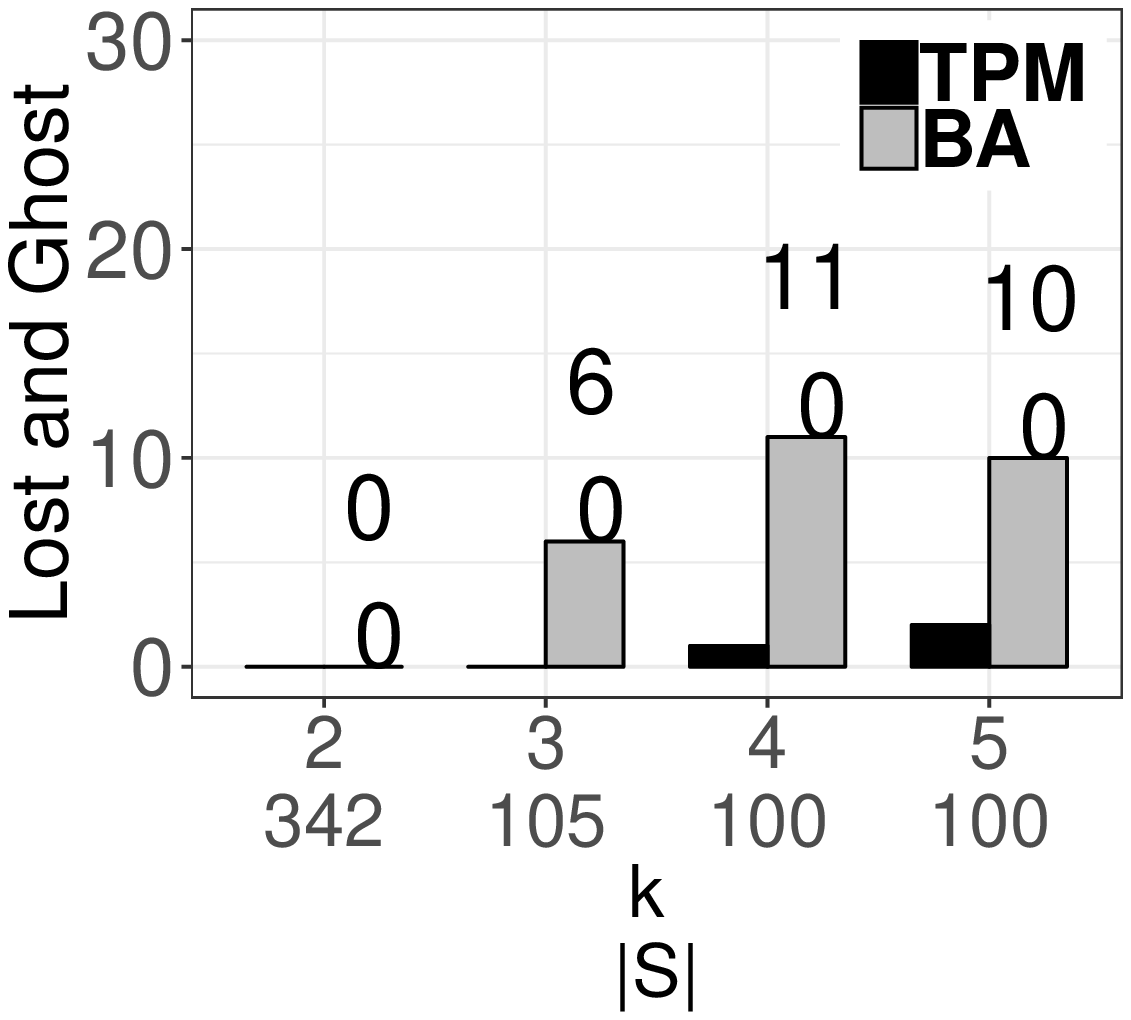}
       \vspace{-7mm}
        \caption{\TRU}
        \label{utility_lostghost_kb}
    \end{subfigure}
    \hspace{+4mm} 
    \begin{subfigure}[b]{0.23\textwidth}
       \includegraphics[width=1.13\textwidth]{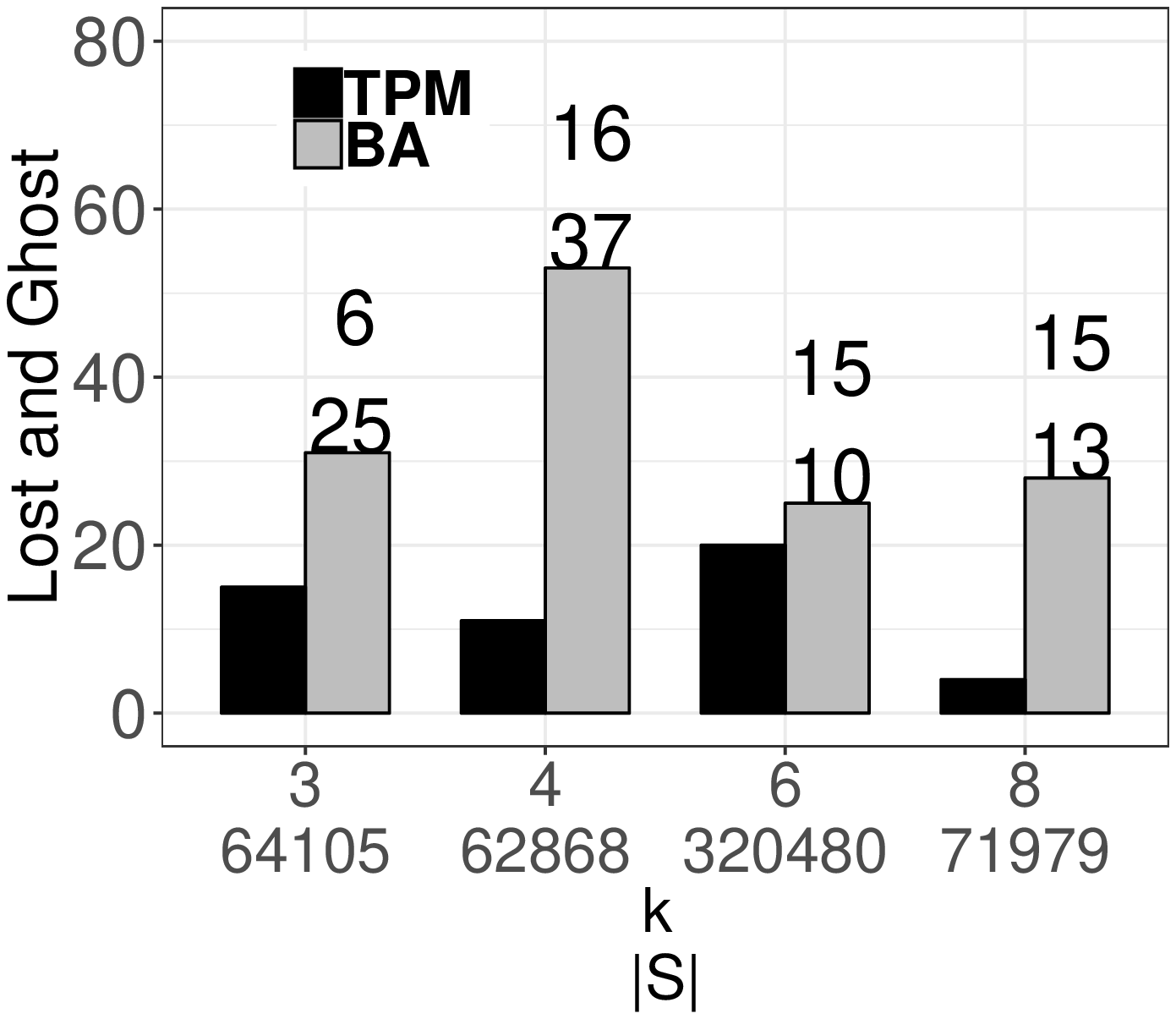}
       \vspace{-5mm}
        \caption{\MSN}
        \label{utility_lostghost_kc}
    \end{subfigure}
    \hspace{+3.5mm} 
      \begin{subfigure}[b]{0.23\textwidth}
       \includegraphics[width=1.15\textwidth]{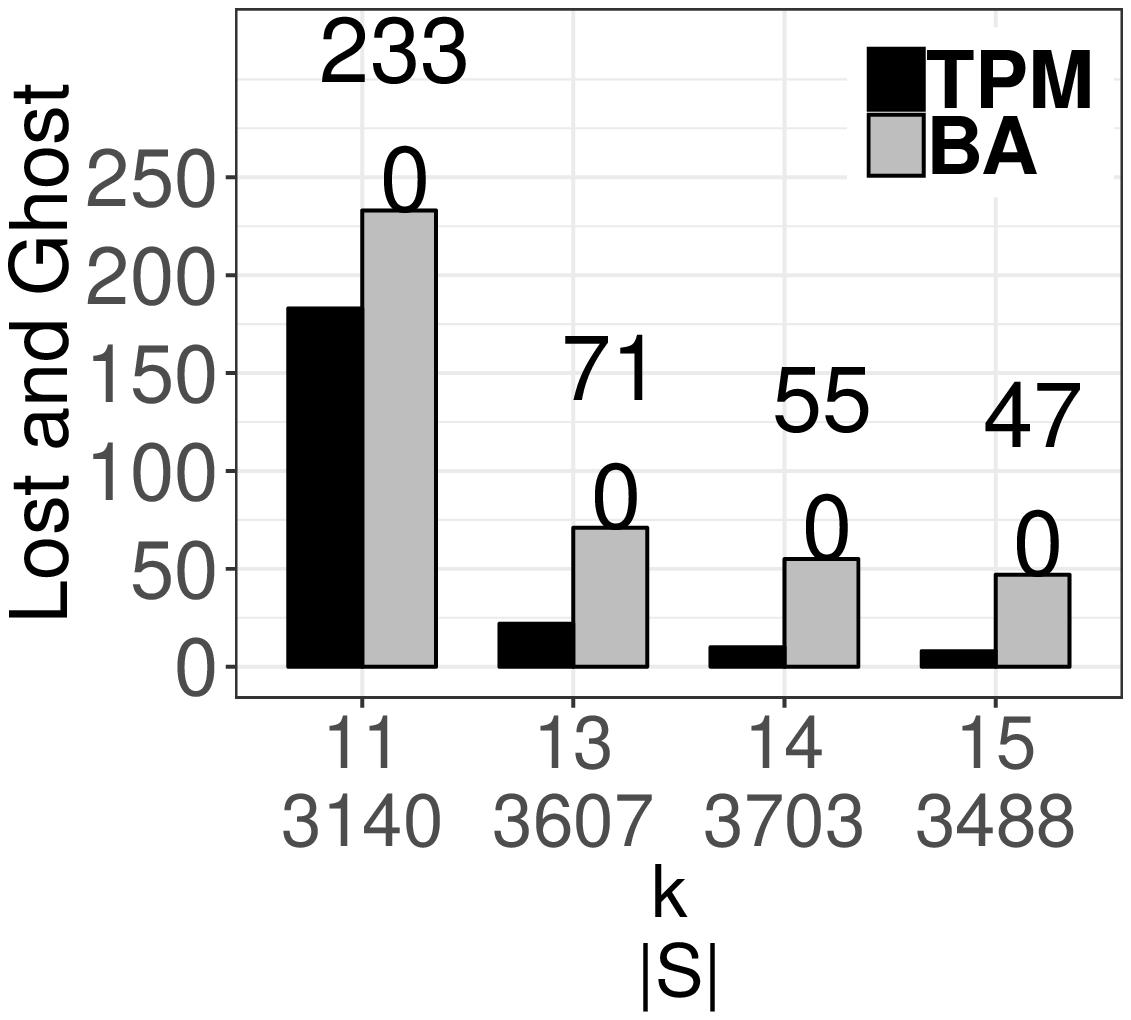}
       \vspace{-7mm}
        \caption{\DNA}
        \label{utility_lostghost_kd}
    \end{subfigure}
\caption{Total number of $\tau$-lost and $\tau$-ghost patterns vs.~length of sensitive patterns $k$ (and $|\mathcal{S}|$). $_{y}^{x}$ on the top of each bar for BA denotes $x$ $\tau$-lost and $y$ $\tau$-ghost patterns.}\label{utility_lostghost_k} \end{figure*}

We also demonstrate that {\PFSA} reduces the length of the output string $X$ of {\TFSA} substantially, creating a string $Y$ that contains \emph{less redundant information} and allows for more efficient analysis. Fig. \ref{utility_shortena} shows the length of  $X$ and of $Y$ and their difference for $k=5$. $Y$ was much shorter than $X$ and its length decreased with the number of sensitive patterns, since more substrings had a suffix-prefix overlap of length $k-1=4$ and were removed (see Section \ref{sec:ps}). Interestingly, the length of $Y$ was close to that of $W$ (the string before  sanitization). A larger $k$ led to less substantial length reduction as shown in Fig. \ref{utility_shortenb} (but still few thousand letters were removed), since it is less likely for long substrings of sensitive patterns to have an overlap and be removed.

\begin{figure}\hspace{-0.55cm}
    \centering
    \begin{subfigure}[b]{0.22\textwidth}
       \includegraphics[width=1.15\textwidth]{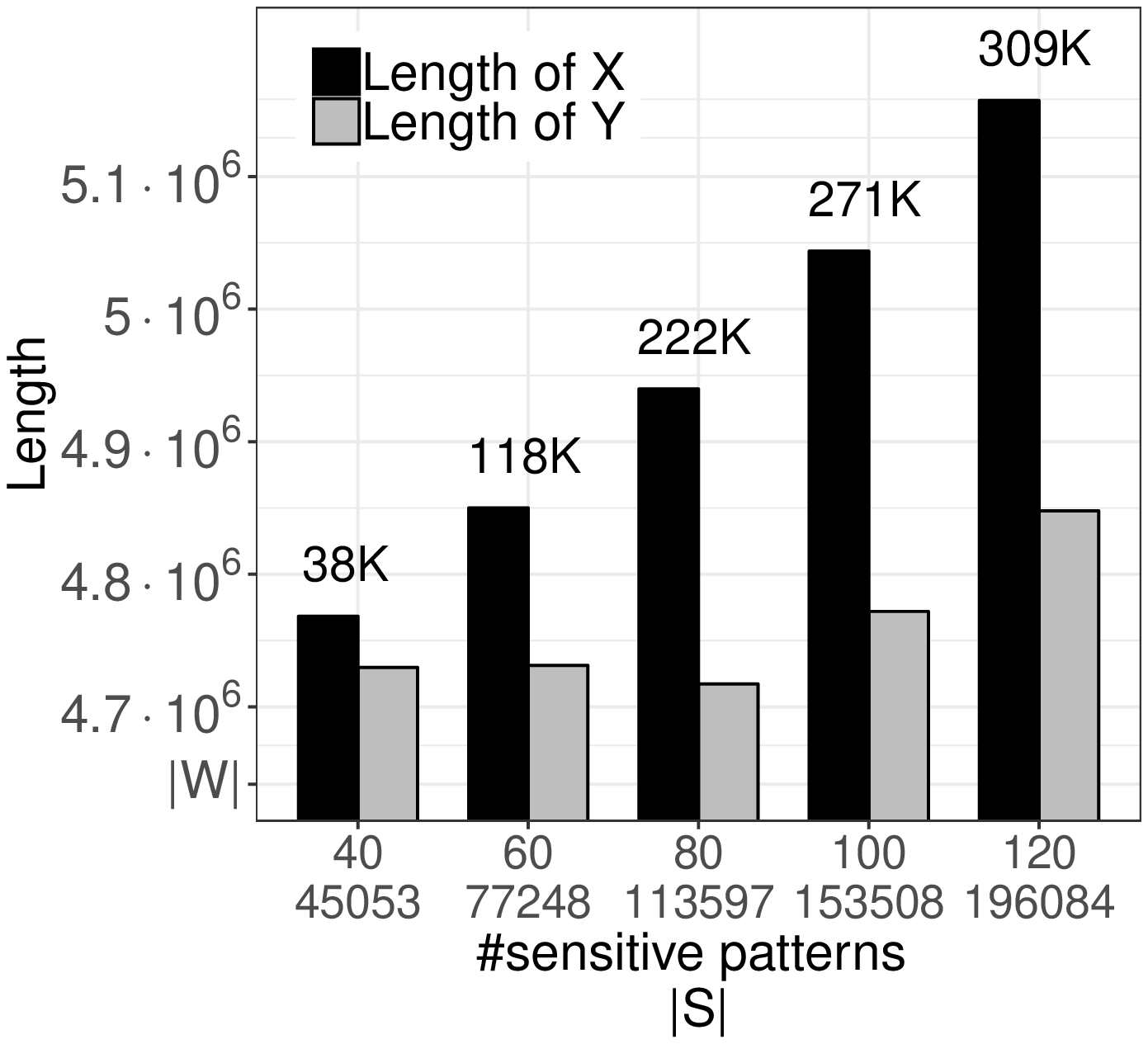}
       \vspace{-7mm}
        \caption{\DNA}
        \label{utility_shortena}
    \end{subfigure}
    \hspace{+2mm} 
    \begin{subfigure}[b]{0.22\textwidth}
       \includegraphics[width=1.15\textwidth]{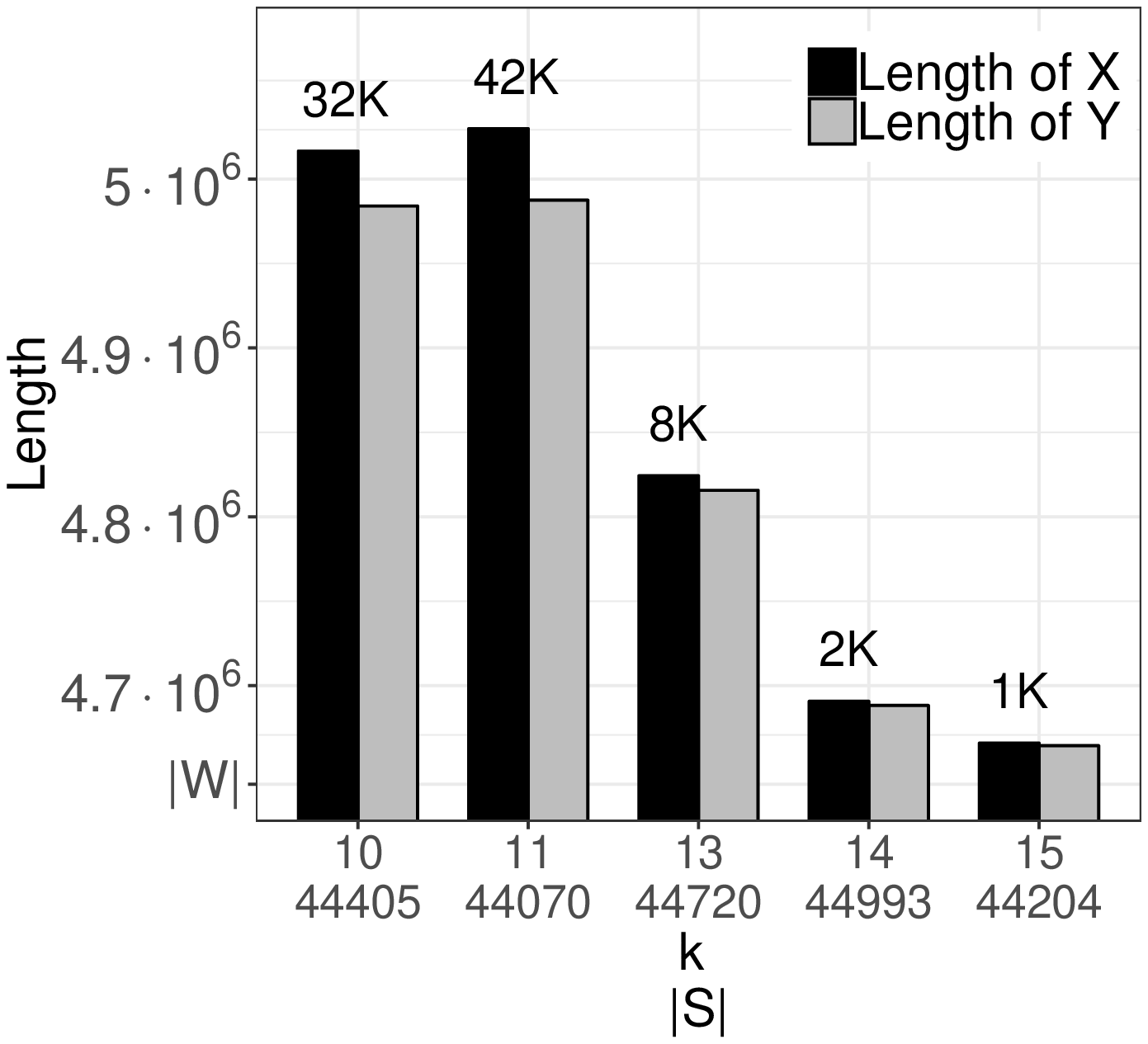}
       \vspace{-7mm}
        \caption{\DNA}
        \label{utility_shortenb}
    \end{subfigure}
    \hspace{+2.5mm} 
    \begin{subfigure}[b]{0.23\textwidth}
       \includegraphics[width=1.08\textwidth]{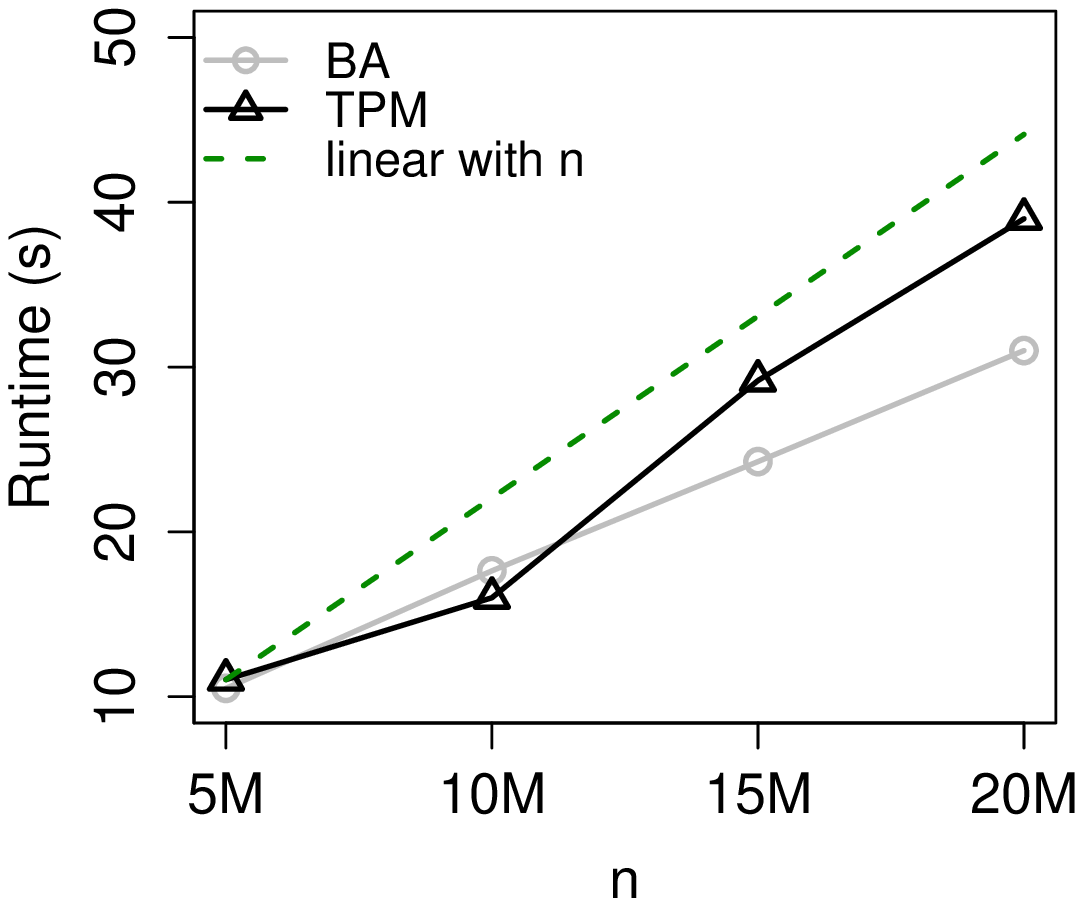}
       \vspace{-5mm}
        \caption{Substr. of {\SYN}}
        \label{runtime_n}
    \end{subfigure}
    \hspace{+0.1mm} 
      \begin{subfigure}[b]{0.23\textwidth}
       \includegraphics[width=1.08\textwidth]{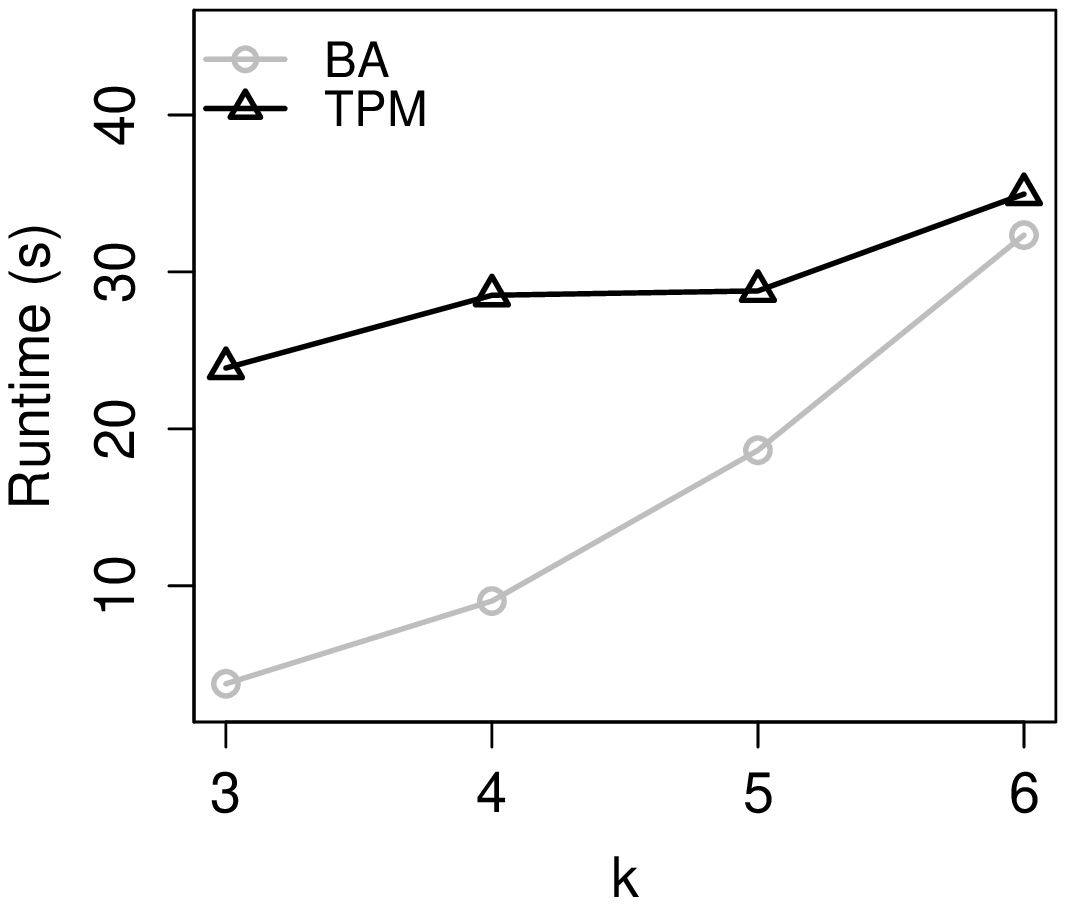}
       \vspace{-5.5mm}
        \caption{\SYN}
        \label{runtime_k}
    \end{subfigure}
\caption{Length of $X$ and $Y$ (output of {\TFSA} and {\PFSA}, resp.) for varying: (a) number of sensitive patterns (and  $|\mathcal{S}|$), (b) length of sensitive patterns $k$ (and $|\mathcal{S}|$). On the top of each pair of bars we plot $|X|-|Y|$. Runtime on synthetic data for varying: (c) length $n$ of string and (d) length $k$ of sensitive patterns. Note that $|Y|=|Z|$.} \label{shorten_runtime}
\end{figure}

\paragraph{Efficiency} We finally measured the runtime of {\TPM} using
prefixes of the synthetic string {\SYN} whose length $n$ is $20$ million letters. Fig. \ref{runtime_n} (resp., Fig. \ref{runtime_k}) shows that {\TPM} scaled linearly with $n$ (resp., $k$), as predicted by our analysis in Section \ref{sec:z} ({\TPM} takes $\cO(n+|Y|+k\delta\sigma+\delta\sigma\theta)=\cO(kn+k\delta\sigma+\delta\sigma\theta)$ time, since the algorithm of \cite{Pissinger} was used for MCK instances). In addition, {\TPM} is efficient, with a runtime similar to that of {\BA} and less than $40$ seconds for {\SYN}.

\subsection{{\TM} vs.~{\TMI}}
We compare {\TM} with {\TMI} based on data utility and the number of implausible patterns incurred. The objective of these experiments is to show that {\TMI} is able to produce a string $Z$ that does not contain implausible patterns, while being comparable to {\TM} in terms of the amount of distortion and number of ghost patterns incurred.

We do not report the results of comparing {\TM} with {\TMI} in terms of efficiency, because the runtime of {\TMI} was almost identical to that of {\TM}.

\paragraph{Impact of $|\mathcal{S}|$} We first demonstrate that many implausible patterns may occur as a result of replacing $\#$s with letters, when {\MCSR} is used. This can be seen from Figs. \ref{TM_TMI_vs_s_unprotect_old}, \ref{TM_TMI_vs_s_unprotect_tru}, and \ref{TM_TMI_vs_s_unprotect_msn}, which show the percentage of implausible patterns incurred by {\TM}, for varying $|\mathcal{S}|$ in {\OLDEN}, {\TRU}, and {\MSN}, respectively.  The percentage is on average $33.08\%$ (and up to $35.63\%$). The percentage for {\DNA} is $0\%$ (omitted), because this dataset has a  very small alphabet size. Thus,  in this experiment, {\MCSRA} and {\MCSRAI} are essentially the same algorithm.  Since {\TMI} is guaranteed to eliminate  implausible patterns, its corresponding percentages are zero (omitted).

\begin{figure}[!ht]\hspace{-5mm}
    \centering
    \hspace{+2mm}
    \begin{subfigure}[b]{0.23\textwidth}
      \includegraphics[width=1.01\textwidth]{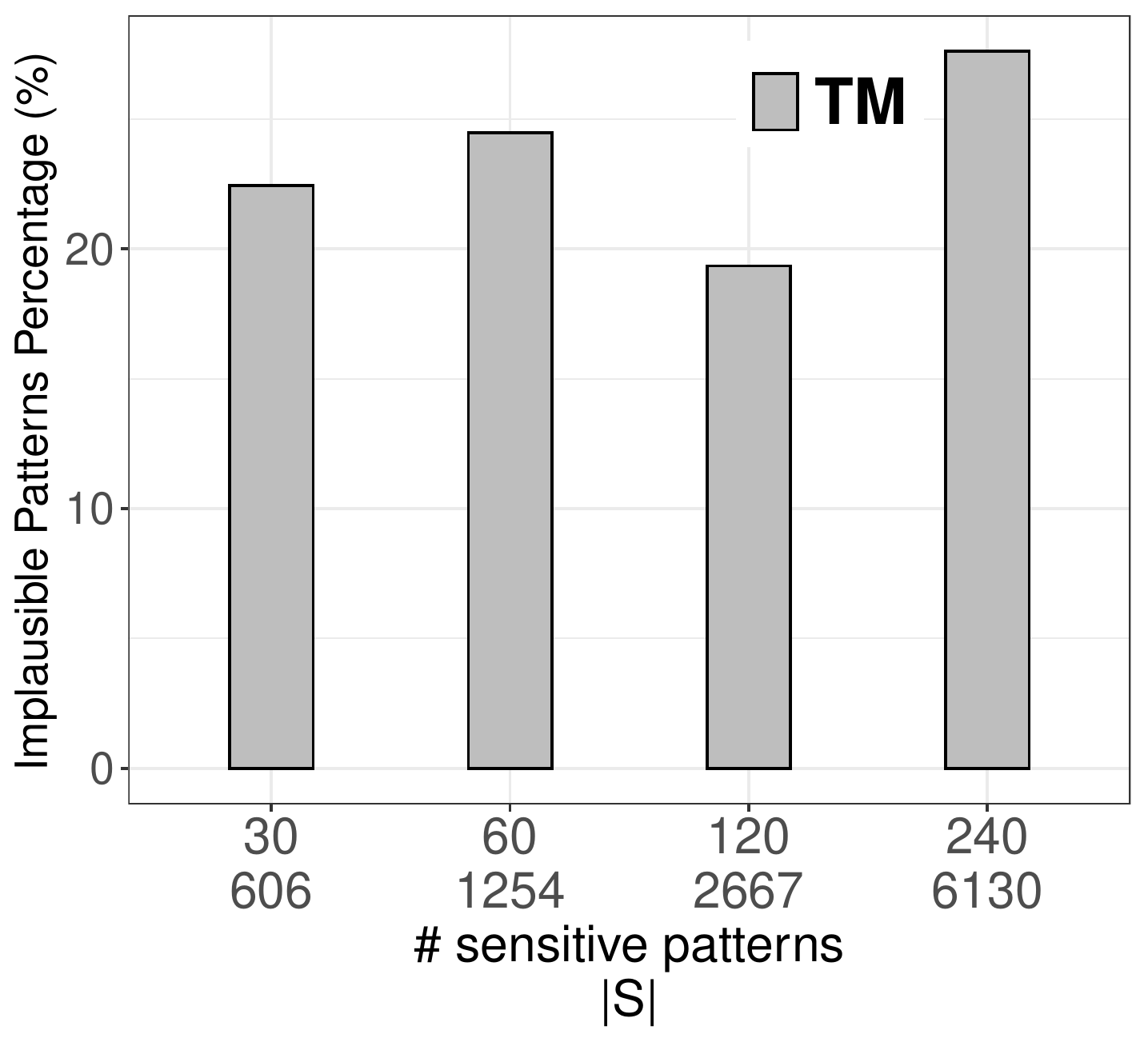}
      \vspace{-7mm}
        \caption{\OLDEN}
        \label{TM_TMI_vs_s_unprotect_old}
    \end{subfigure}
    \hspace{+2mm}
    \begin{subfigure}[b]{0.23\textwidth}
      \includegraphics[width=1.01\textwidth]{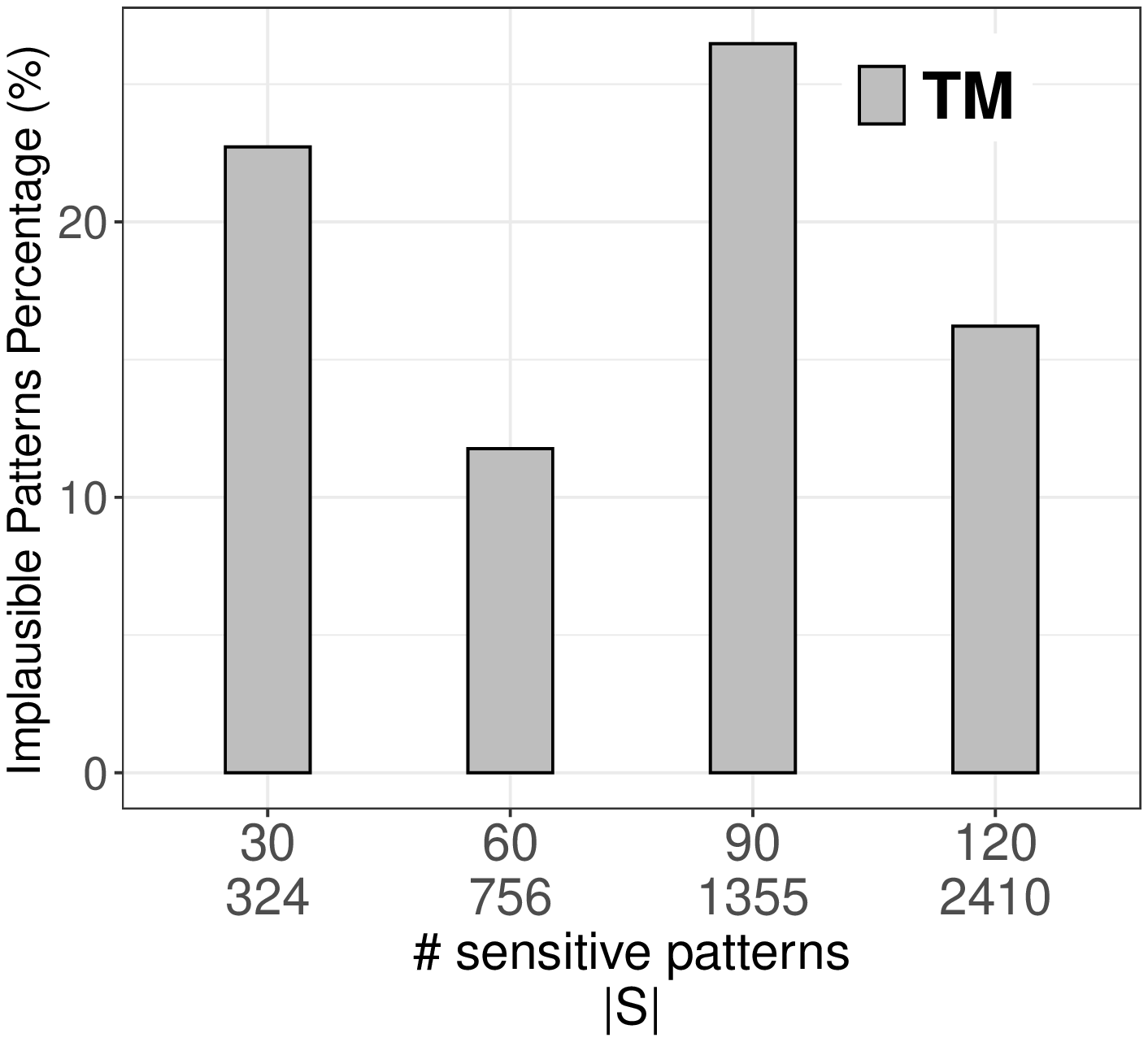}
      \vspace{-7mm}
        \caption{\TRU{}}
        \label{TM_TMI_vs_s_unprotect_tru}
    \end{subfigure}
      \hspace{+2mm}
    \begin{subfigure}[b]{0.23\textwidth}
      \includegraphics[width=1.01\textwidth]{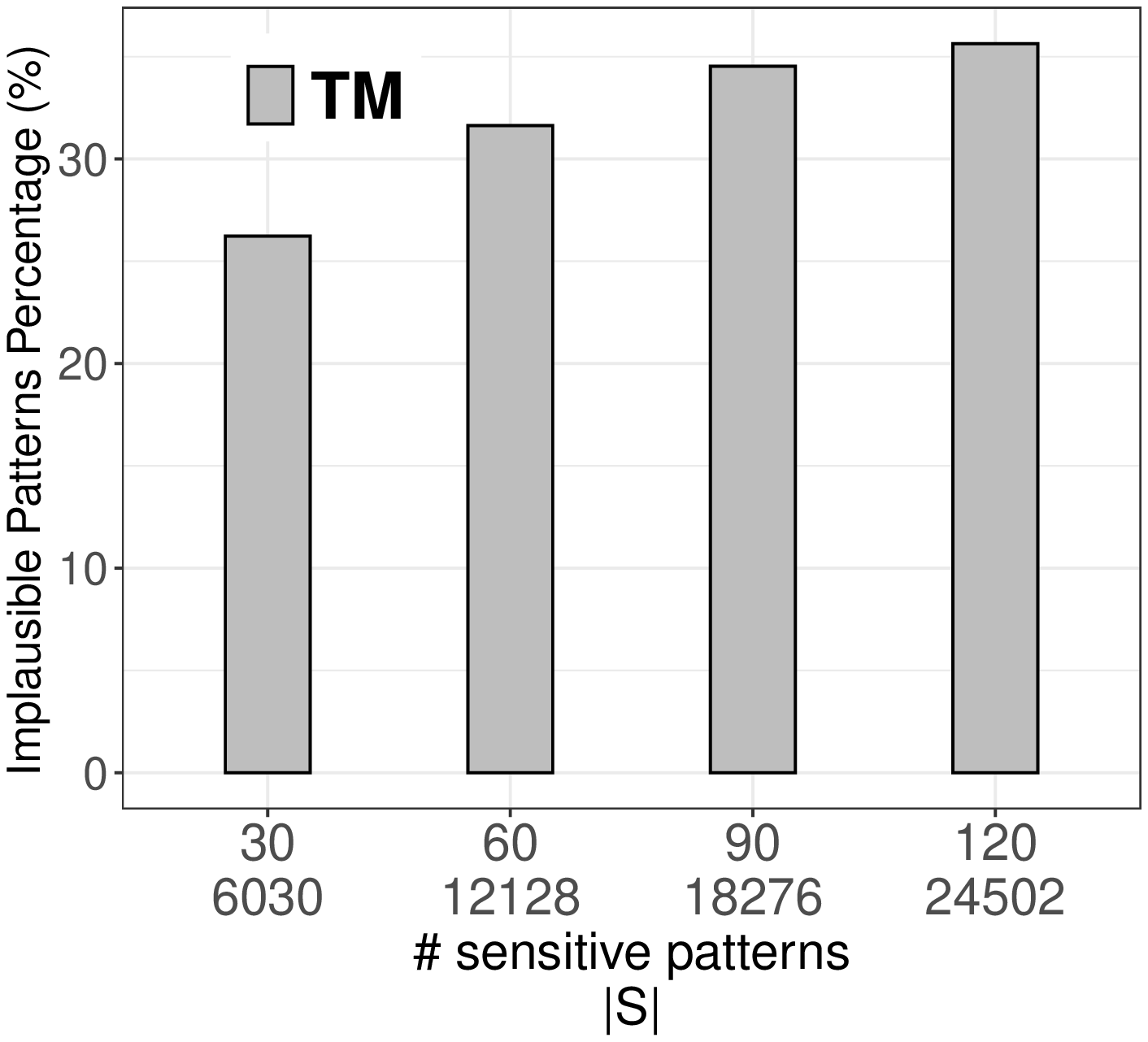}
      \vspace{-7mm}
        \caption{\MSN}
        \label{TM_TMI_vs_s_unprotect_msn}
    \end{subfigure}
    \ 

\caption{Percentage of implausible patterns vs.~number of sensitive patterns (and $|\mathcal{S}|$). The percentages of implausible patterns for {\DNA} are all $0\%$. }\label{TM_TMI_vs_s_unprotect}
\end{figure}

We then demonstrate that {\TMI} eliminates implausible patterns without incurring substantial  utility loss compared to {\TM}. Figs. \ref{TM_TMI_vs_s_disto} and \ref{TM_TMI_vs_s_ghost} show that {\TMI} incurred a comparable amount of distortion to {\TM}. Specifically, {\TMI} incurred $8\%$ and $1\%$ less distortion in the case of {\OLDEN} and {\TRU} datasets and $37\%$ more distortion in the case of {\MSN}.
{\TMI} also incurred a similar number of ghosts than {\TM}. Specifically, {\TMI} incurred $7.1\%$ fewer ghosts in the case of {\TRU} and $54\%$ more ghosts in the case of {\MSN}. Note that no $\tau$-ghost patterns were incurred in the case of {\OLDEN} (for both {\TM} and {\TMI}). The worse performance of {\TMI} in the case of the {\MSN} dataset is attributed to its relatively small alphabet size, which makes it more difficult to select a letter replacement that does not incur implausible patterns.

\begin{figure}[!ht]\hspace{-5mm}
    \centering
    \begin{subfigure}[b]{0.23\textwidth}
      \includegraphics[width=1.05\textwidth]{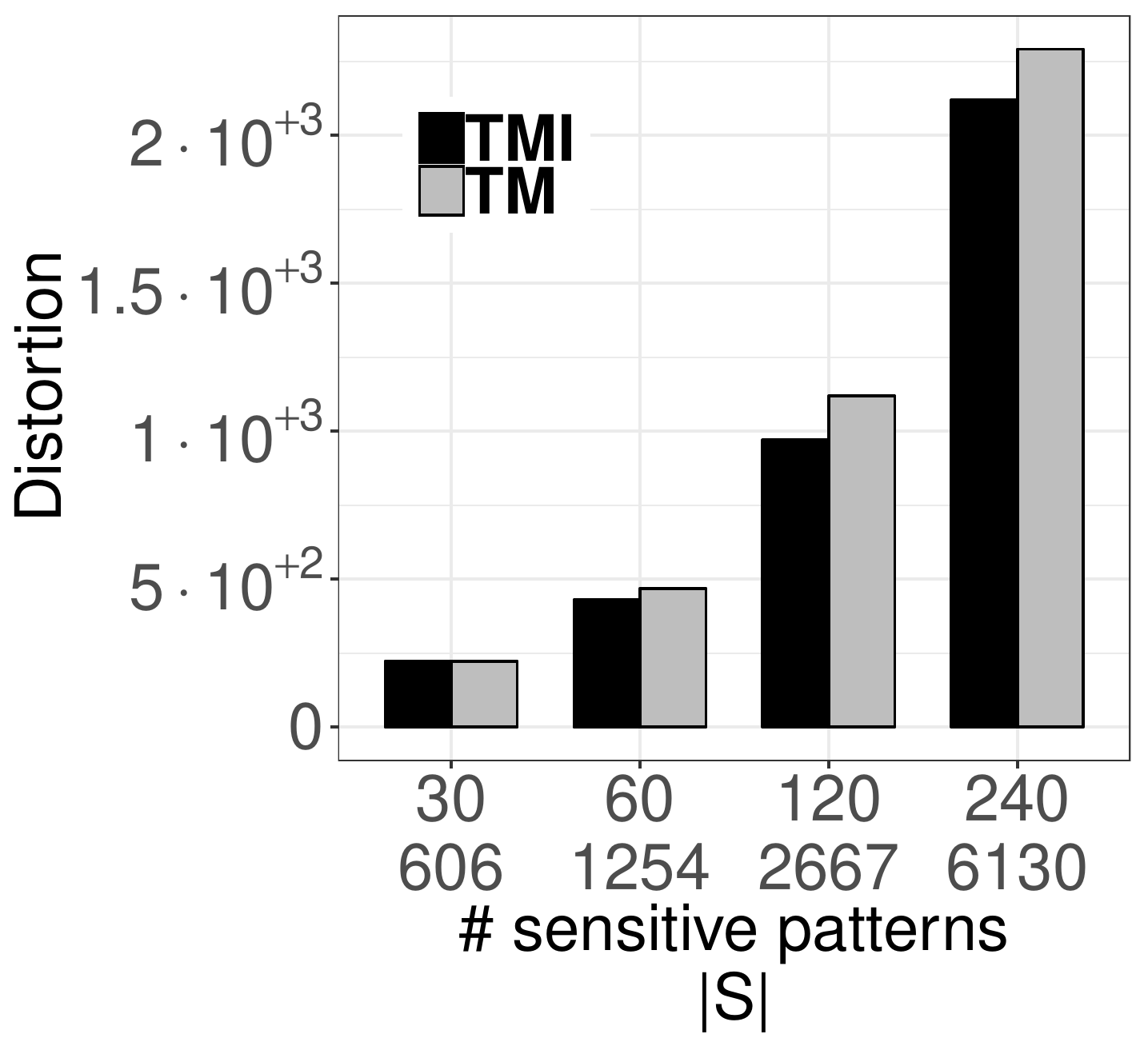}
      \vspace{-7mm}
        \caption{\OLDEN}
        \label{TM_TMI_vs_s_disto_old}
    \end{subfigure}
    \hspace{+3mm} 
     \begin{subfigure}[b]{0.23\textwidth}
      \includegraphics[width=1.05\textwidth]{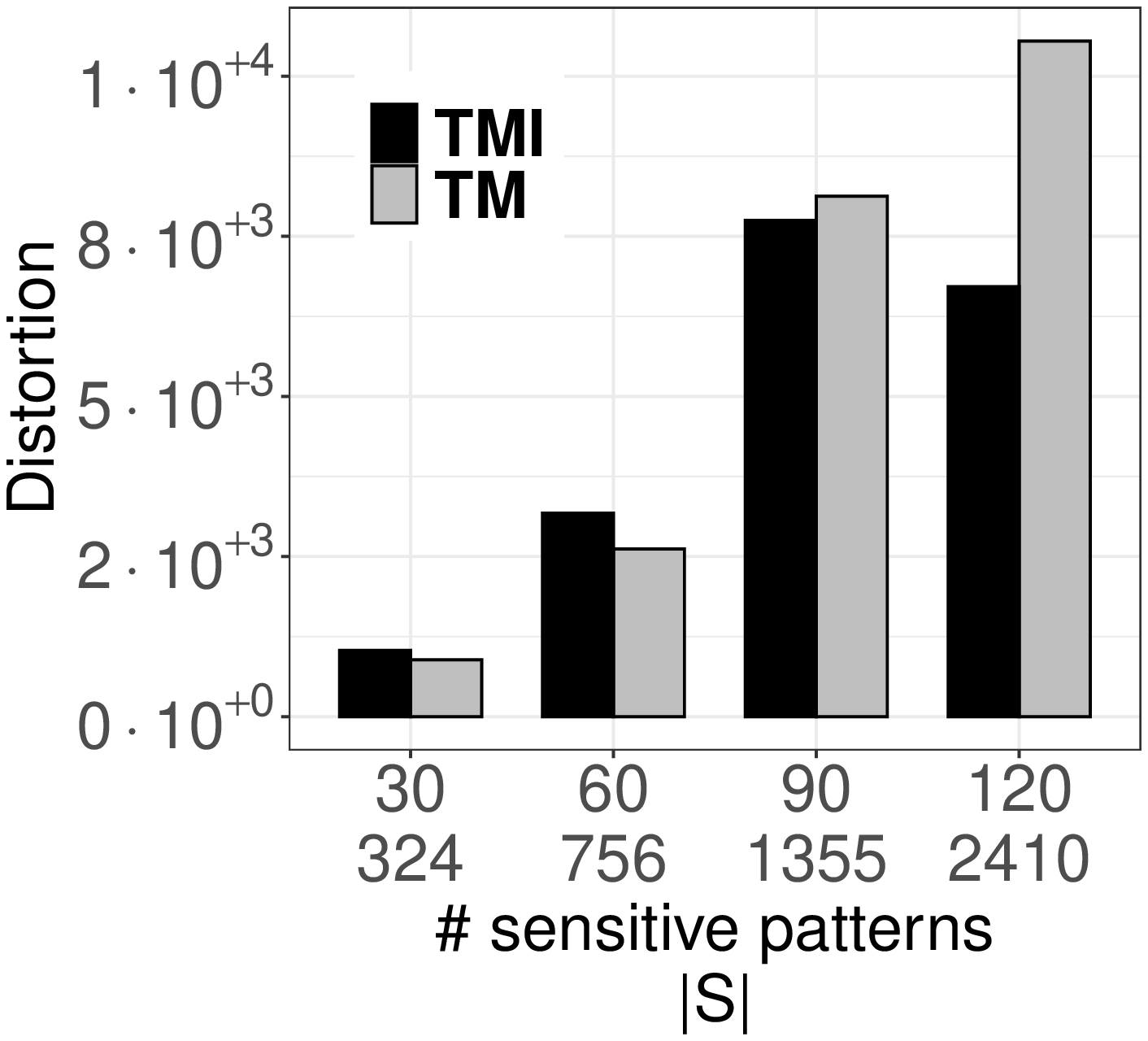}
      \vspace{-7mm}
        \caption{\TRU{}}
        \label{TM_TMI_vs_s_disto_tru}
    \end{subfigure}
     \hspace{+2mm} 
     \begin{subfigure}[b]{0.22\textwidth}
      \includegraphics[width=1.09\textwidth]{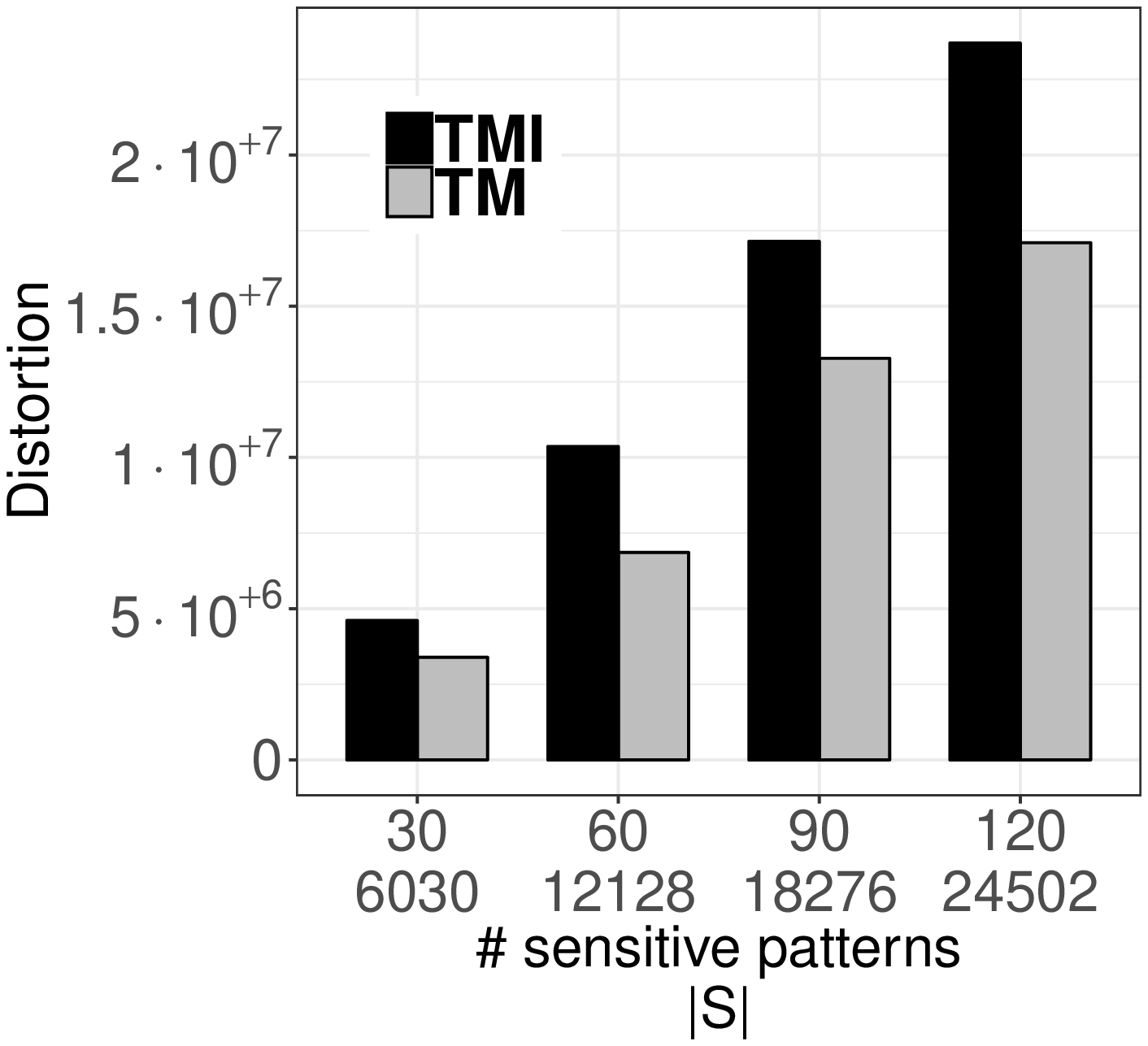}
      \vspace{-7mm}
        \caption{\MSN}
        \label{TM_TMI_vs_s_disto_msn}
    \end{subfigure}
    \hspace{+3mm} 
\caption{Distortion vs.~number of sensitive patterns and their total number $|\mathcal{S}|$ of occurrences in $W$ ~(first two lines on the $X$ axis).}\label{TM_TMI_vs_s_disto}
\end{figure}

\begin{figure}[!ht]\hspace{-5mm}
    \centering
    \begin{subfigure}[b]{0.23\textwidth}
      \includegraphics[width=1.04\textwidth]{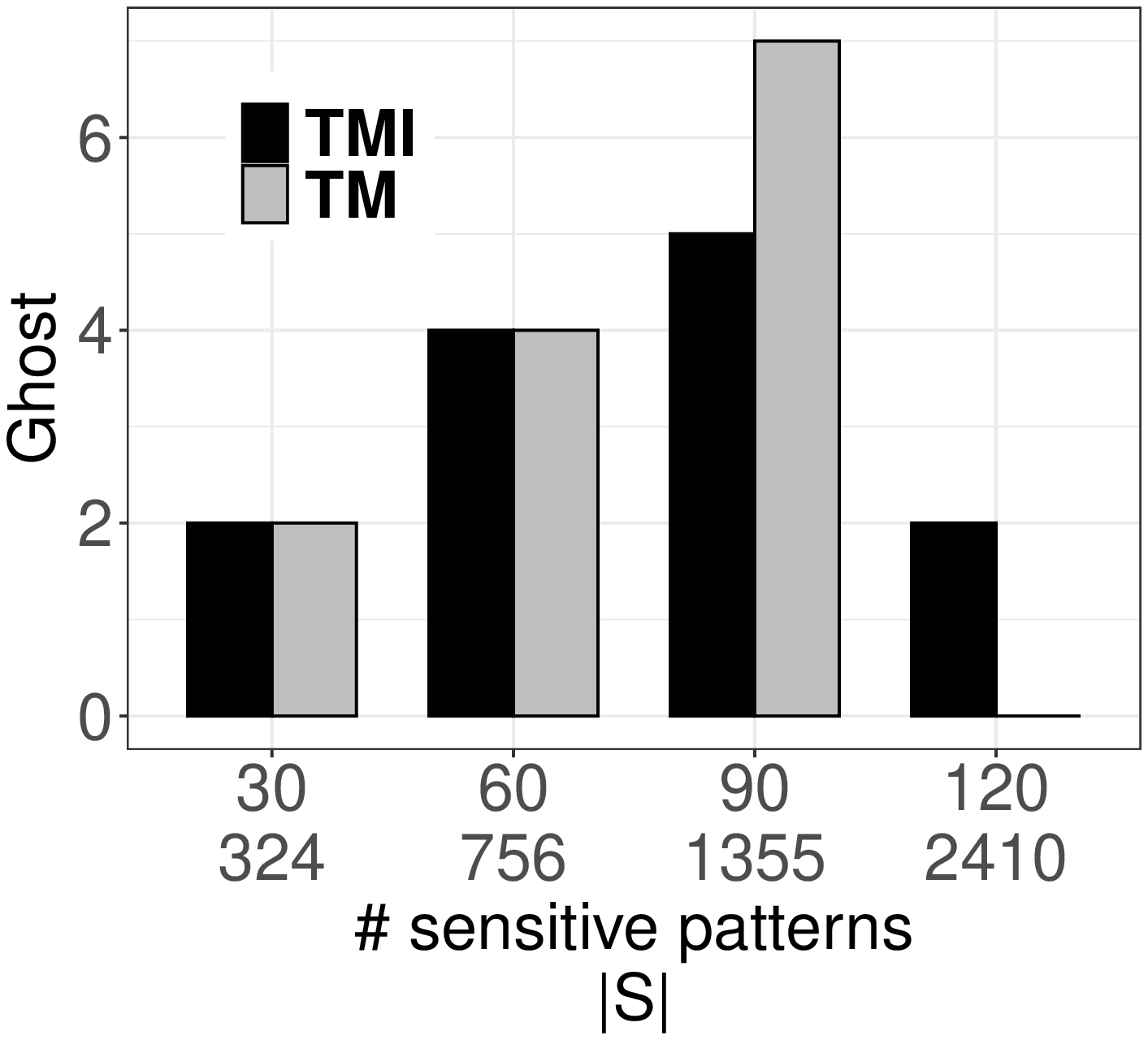}
      \vspace{-7mm}
        \caption{\TRU{}}
        \label{TM_TMI_vs_s_ghost_tru}
    \end{subfigure}
    \hspace{+3mm}
    \begin{subfigure}[b]{0.23\textwidth}
      \includegraphics[width=1.04\textwidth]{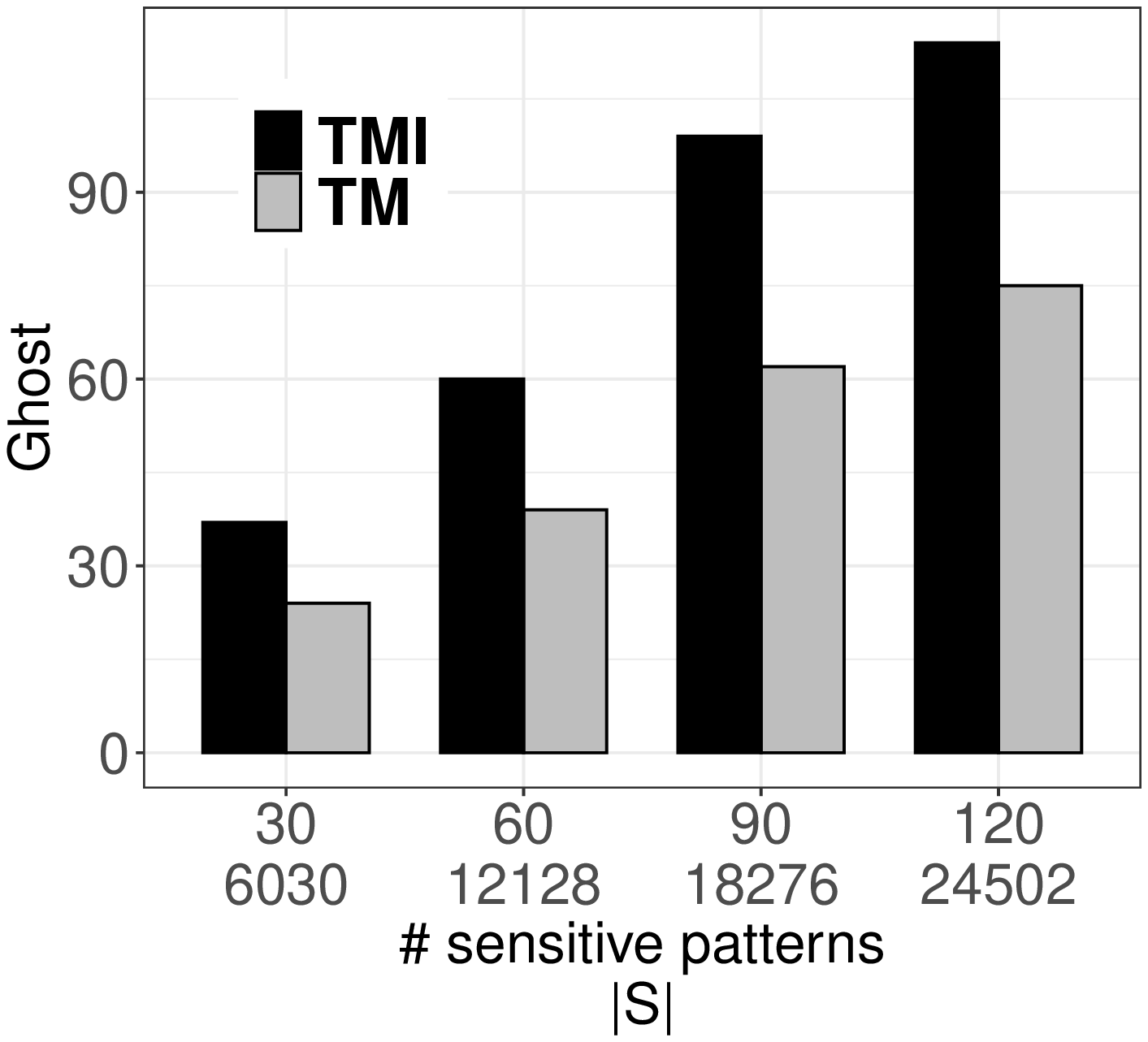}
      \vspace{-7mm}
        \caption{\MSN}
        \label{TM_TMI_vs_s_ghost_msn}
    \end{subfigure}
\caption{Number of $\tau$-ghost patterns (the number of $\tau$-lost patterns is zero by design) vs.~number of sensitive patterns (and $|\mathcal{S}|$). The number of $\tau$-ghost patterns for {\OLDEN} is  $0$.}\label{TM_TMI_vs_s_ghost}
\end{figure}

\begin{figure}[!ht]\hspace{-5mm}
    \centering
     \begin{subfigure}[b]{0.235\textwidth}
      \includegraphics[width=1.01\textwidth]{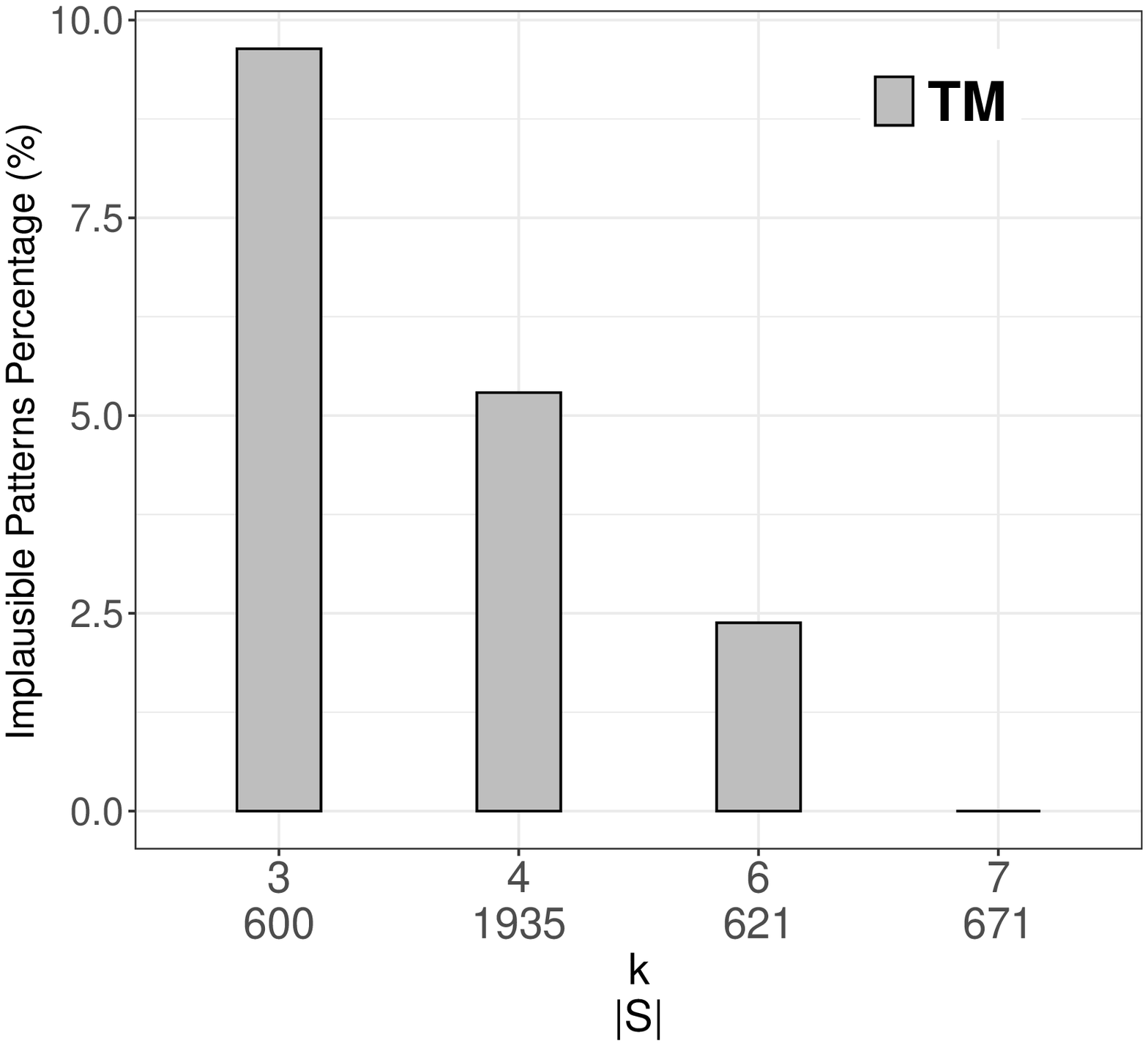}
      \vspace{-7mm}
        \caption{\OLDEN}
        \label{TM_TMI_vs_k_unprotect_old}
    \end{subfigure}
     \hspace{+2mm} 
    \begin{subfigure}[b]{0.23\textwidth}
      \includegraphics[width=1.03\textwidth]{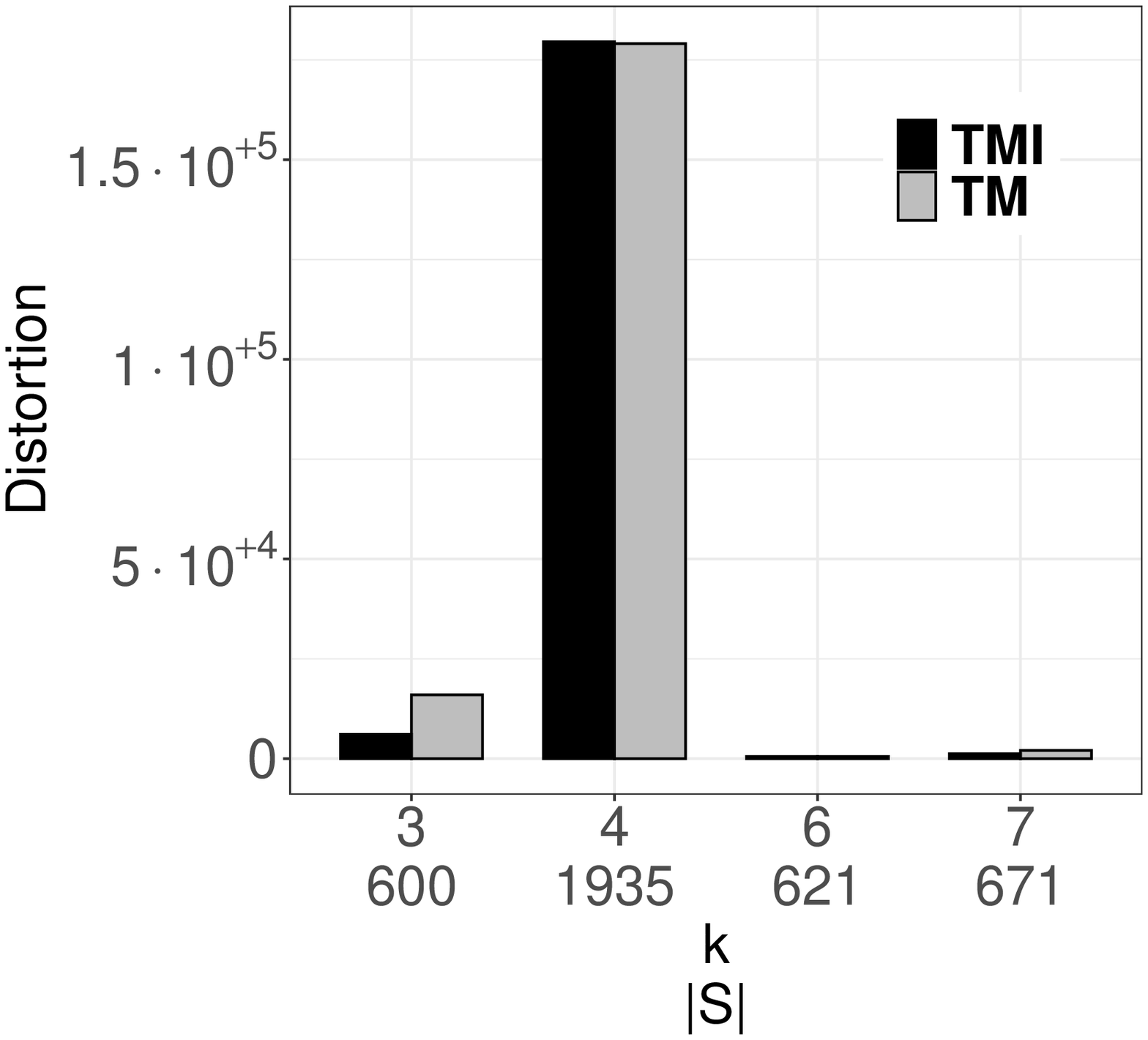}
      \vspace{-7mm}
        \caption{\OLDEN}
        \label{TM_TMI_vs_k_disto_old}
    \end{subfigure}
\hspace{+2mm}
\begin{subfigure}[b]{0.23\textwidth}
      \includegraphics[width=1.05\textwidth]{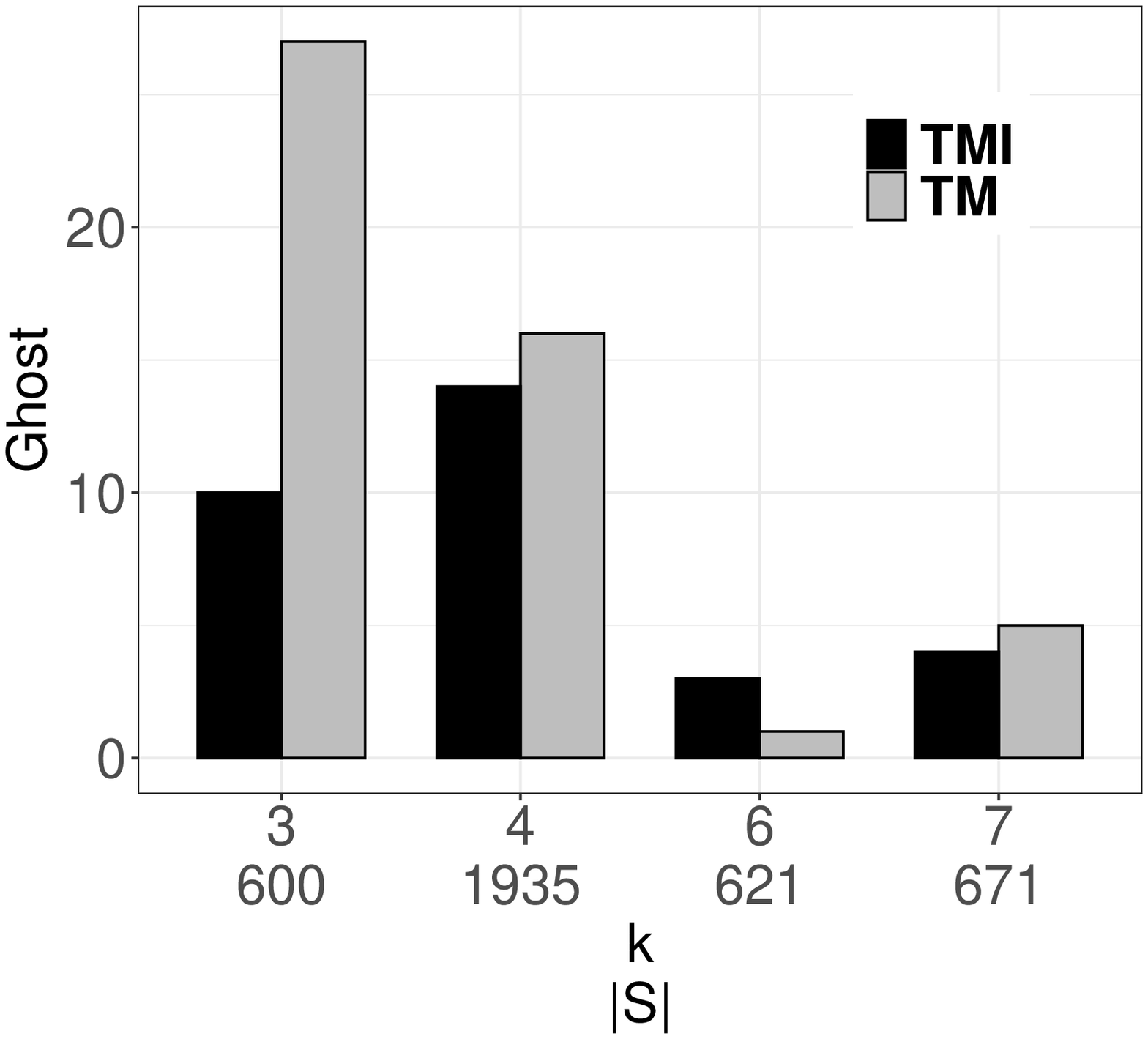}
      \vspace{-7mm}
        \caption{\OLDEN}
        \label{TM_TMI_vs_k_ghost_old}
    \end{subfigure}

\caption{(a) Percentage of implausible patterns vs.~$k$ (and $|\mathcal{S}|$). (b) Distortion vs.~$k$ (and $|\mathcal{S}|$). (c) Number of $\tau$-ghost patterns vs.~$k$ (and $|\mathcal{S}|$).}\label{TM_TMI_vs_s_unprotect}
\end{figure}

\paragraph{Impact of $k$} Fig. \ref{TM_TMI_vs_k_unprotect_old} shows that  the percentage of implausible patterns incurred by {\TM} for the {\OLDEN} dataset was on average $4.3\%$ (and up to $9.6\%$).  Again, this confirms the need to eliminate implausible patterns in practice. The results for {\TRU}, {\MSN}, and {\DNA} are qualitatively similar and omitted from all  remaining experiments.

We now demonstrate that {\TMI} eliminates implausible patterns,  while incurring a comparable amount of distortion and ghosts (on average) compared to {\TM}.  Specifically, the distortion for  {\TMI} was $17\%$ lower than {\TM} on average (see  Fig. \ref{TM_TMI_vs_k_disto_old}), and the number of $\tau$-ghost patterns for {\TMI} was $16.2\%$ lower on average (see Fig. \ref{TM_TMI_vs_k_ghost_old}).

\paragraph{Impact of $\rho$} We demonstrate that {\TMI} can eliminate implausible patterns, while preserving data utility as well as {\TM} does. This can be seen from Fig.~\ref{TM_TMI_vs_rho_unprotect_olden}, which  shows that the percentage of implausible patterns incurred by {\TM} was $4.1\%$ on average (and up to $5.3\%$), and from
Figs.~\ref{TM_TMI_vs_rho_disto_olden} and \ref{TM_TMI_vs_rho_ghost_olden}, which show that {\TMI} caused on average $19.5\%$ lower distortion and $9.4\%$ fewer $\tau$-ghosts, respectively, compared to {\TM}.

\begin{figure}[!ht]
    \centering
     \begin{subfigure}[b]{0.235\textwidth}
      \includegraphics[width=1.03\textwidth]{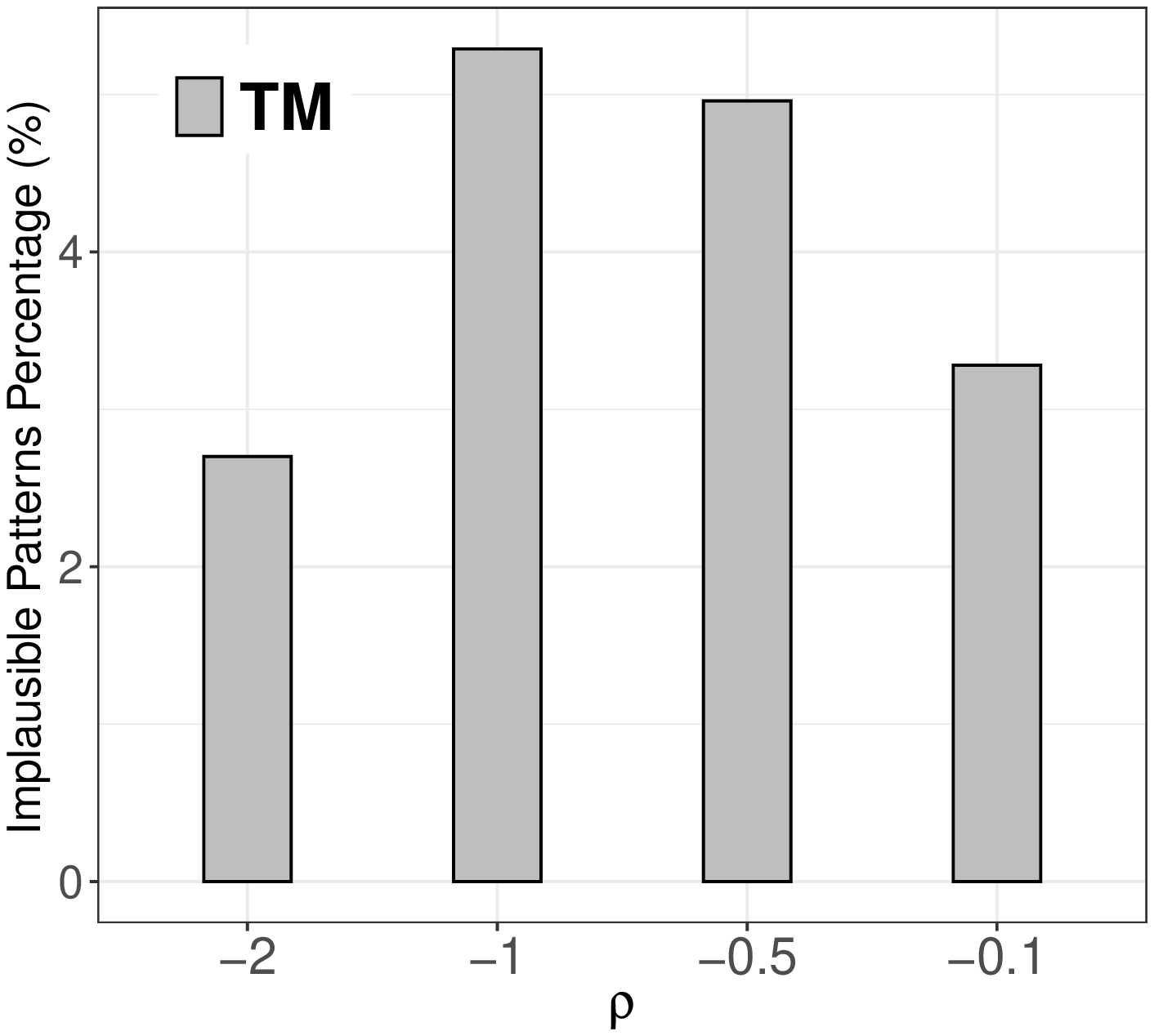}
      \vspace{-7mm}
        \caption{\OLDEN}
        \label{TM_TMI_vs_rho_unprotect_olden}
    \end{subfigure}
    \hspace{+3mm} 
   \begin{subfigure}[b]{0.23\textwidth}
      \includegraphics[width=1.05\textwidth]{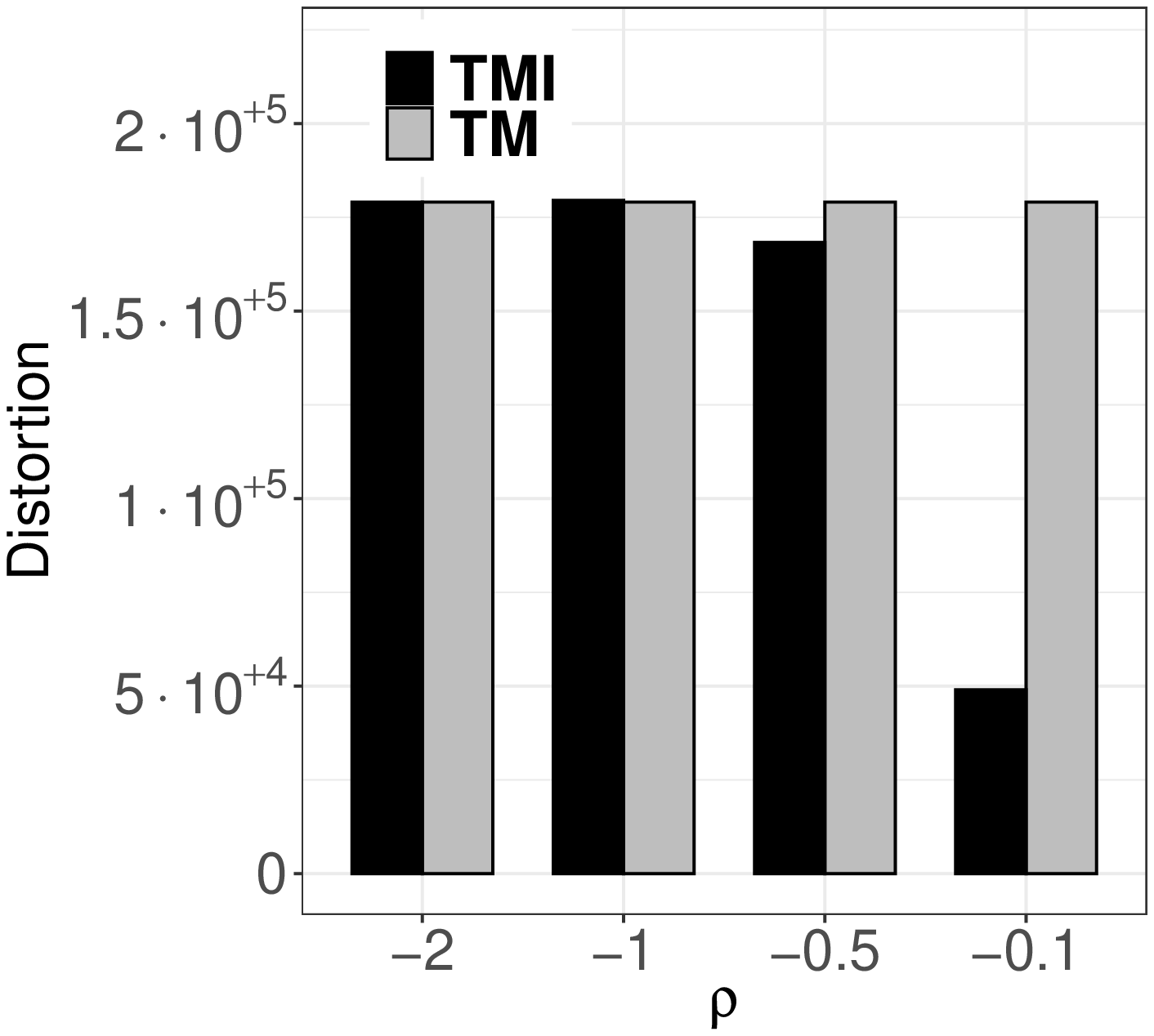}
      \vspace{-7mm}
        \caption{\OLDEN}
        \label{TM_TMI_vs_rho_disto_olden}
    \end{subfigure}
    \hspace{+2mm} 
    \begin{subfigure}[b]{0.23\textwidth}
      \includegraphics[width=1.05\textwidth]{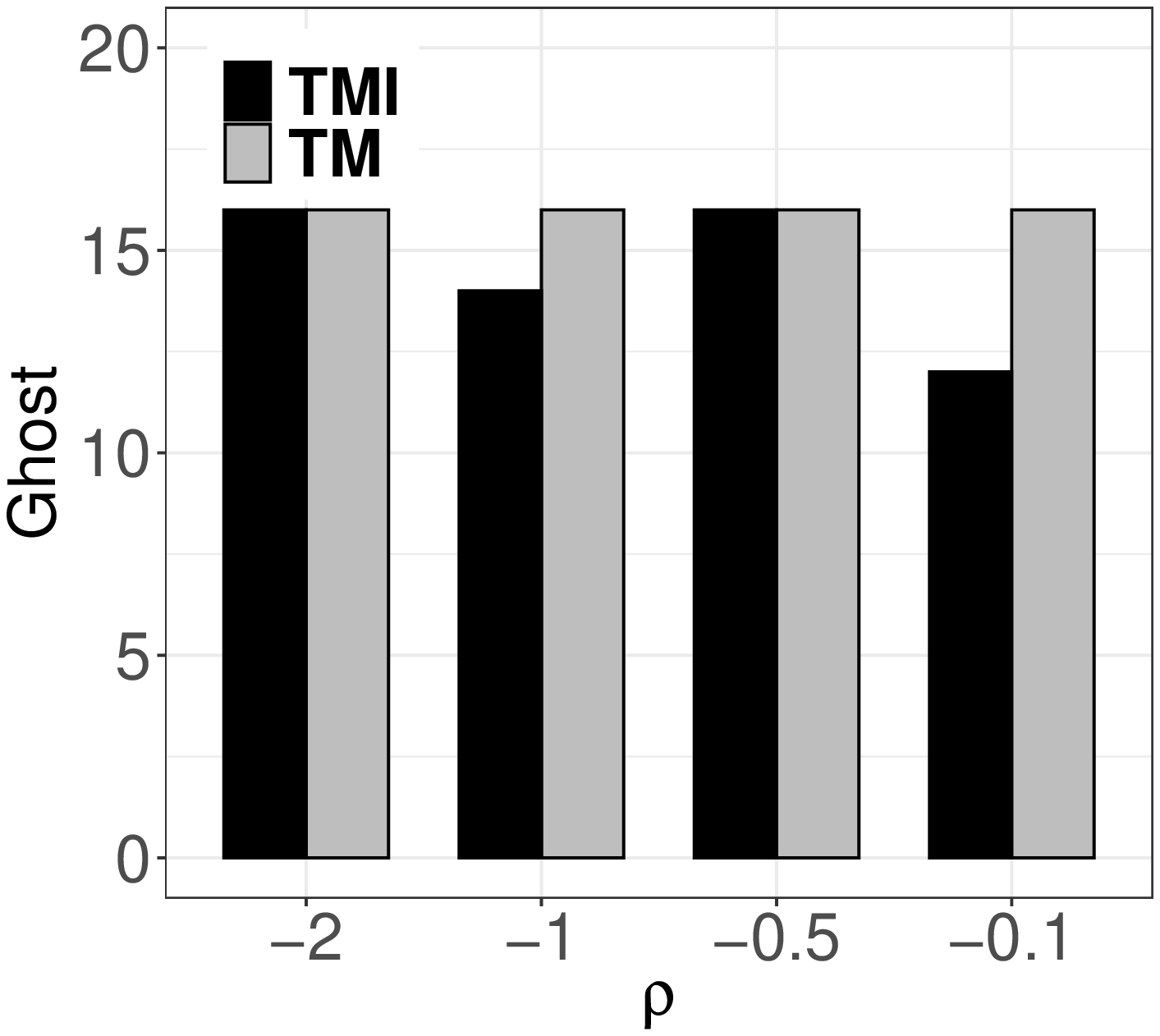}
      \vspace{-7mm}
        \caption{\OLDEN}
        \label{TM_TMI_vs_rho_ghost_olden}
    \end{subfigure}
     \hspace{+2mm} 

\caption{(a) Distortion, (b) number of $\tau$-ghost patterns, and (c) percentage of implausible patterns vs.~$\rho$.}\label{TM_TMI_vs_rho}
\end{figure}

\subsection{{\TFSA} vs.~{\ETFSA}}

We demonstrate that {\TFSA} is a very effective heuristic for the {\ETFS} problem. Specifically, it constructs a string $X$ that is either an optimal solution to the problem or it is at slightly larger edit distance from $W$ compared to the exact solution string $X_{\text{ED}}$ that is constructed by {\ETFSA}. This can be seen from Fig. \ref{etfs_vs_k_syn} (resp., \ref{etfs_vs_s_syn}), which shows that {\TFSA} constructed optimal solutions (\ie{} Edit Distance Relative Error was $0$) in $98\%$ (resp., $93\%$) of the tested strings, on average. These strings are uniformly random and have the same length and alphabet as {\SYNB}. Qualitatively similar results were obtained for uniformly random strings of different lengths and alphabet sizes (omitted).  In addition, the effectiveness of {\TFSA} can be seen from Figs. \ref{etfs_vs_k_tru} and \ref{etfs_vs_s_tru}, which show that the Edit Distance Relative Error in {\TRU}  was no more than $2.8\%$. These results are encouraging because, unlike {\ETFSA},  {\TFSA} is applicable to large strings such as {\OLDEN}, {\MSN}, and {\DNA} (recall that its time complexity is linear instead of quadratic in $|W|$).

\begin{figure}[!ht]\hspace{-5mm}
    \centering
    \begin{subfigure}[b]{0.23\textwidth}
      \includegraphics[width=1.1\textwidth]{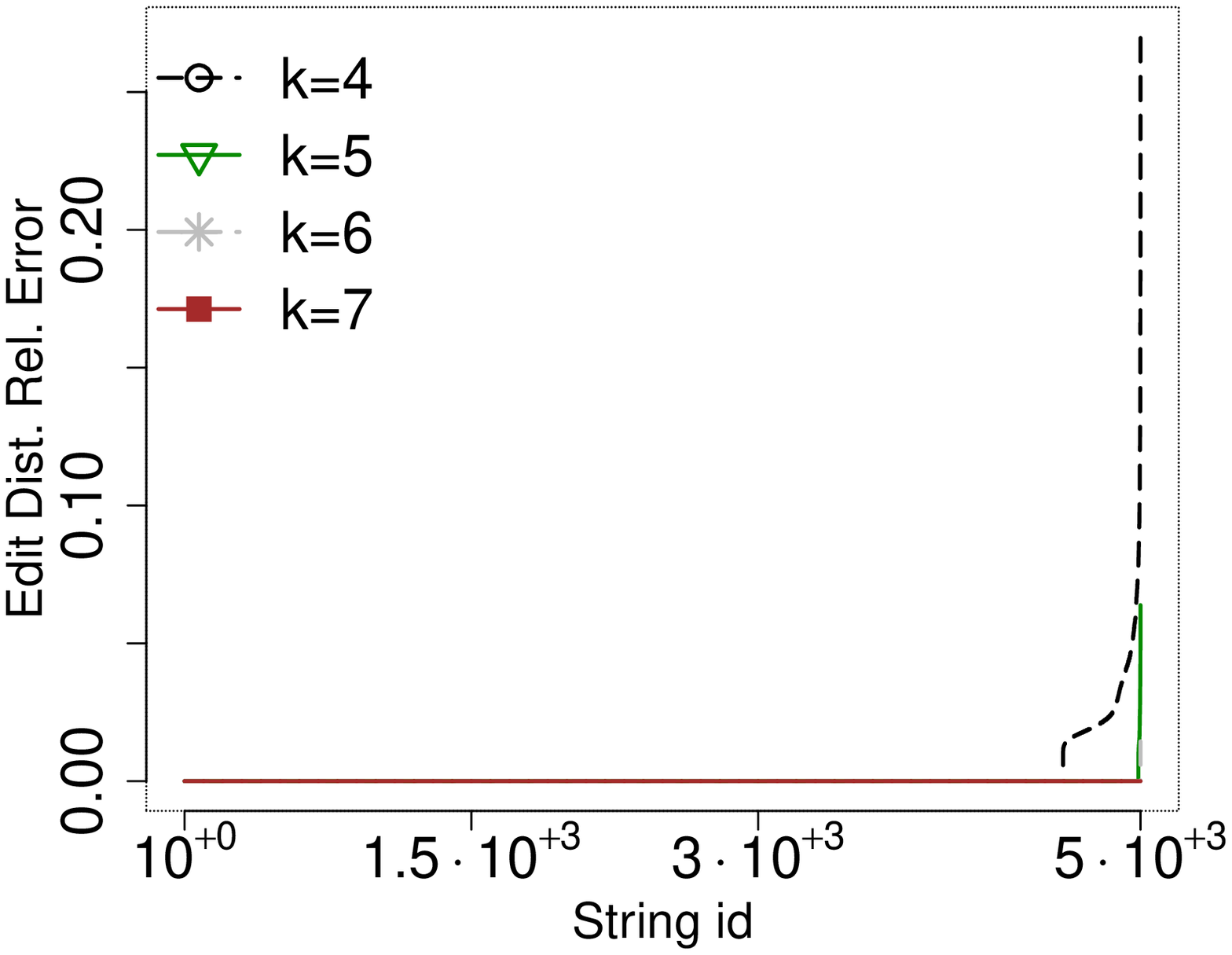}
      \vspace{-5mm}
        \caption{\SYNB}
        \label{etfs_vs_k_syn}
    \end{subfigure}
    \hspace{+3mm} 
    \begin{subfigure}[b]{0.23\textwidth}
      \includegraphics[width=1.1\textwidth]{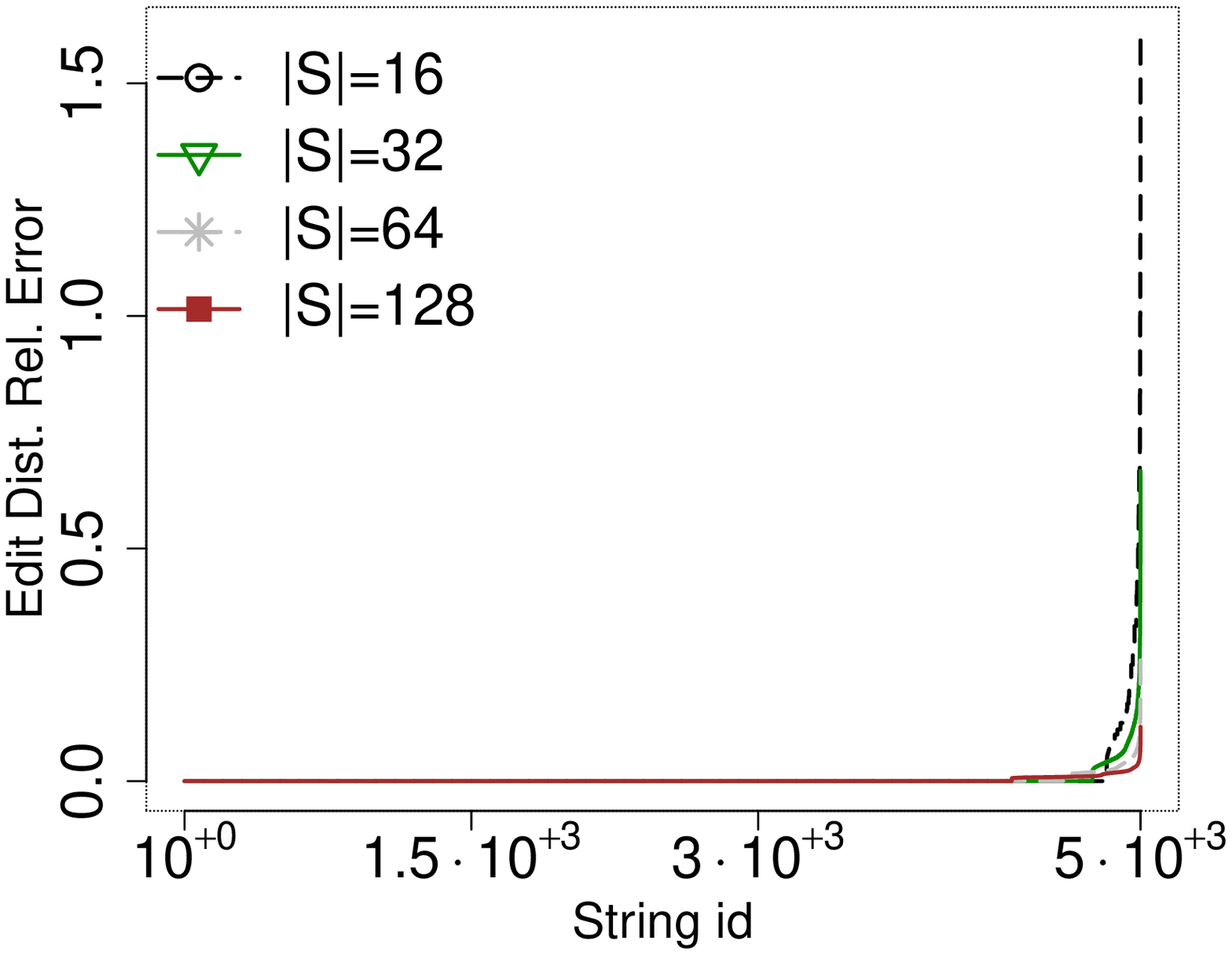}
      \vspace{-5mm}
        \caption{\SYNB{}}
        \label{etfs_vs_s_syn}
    \end{subfigure}
    \hspace{+2mm} 
    \begin{subfigure}[b]{0.23\textwidth}
      \includegraphics[width=1.04\textwidth]{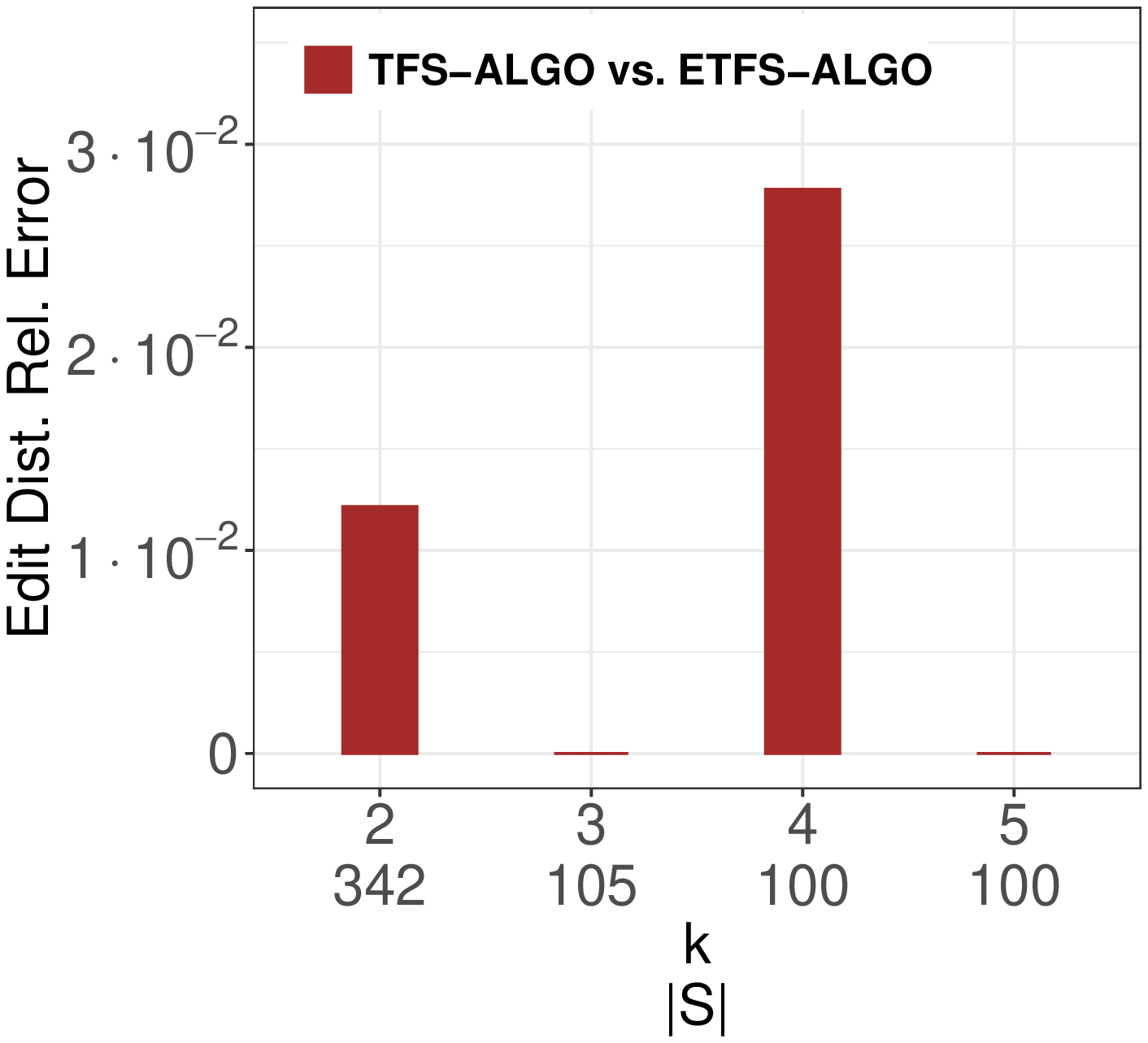}
      \vspace{-7mm}
        \caption{\TRU{}}
        \label{etfs_vs_k_tru}
    \end{subfigure}
     \hspace{+2mm} 
    \begin{subfigure}[b]{0.23\textwidth}
      \includegraphics[width=1.04\textwidth]{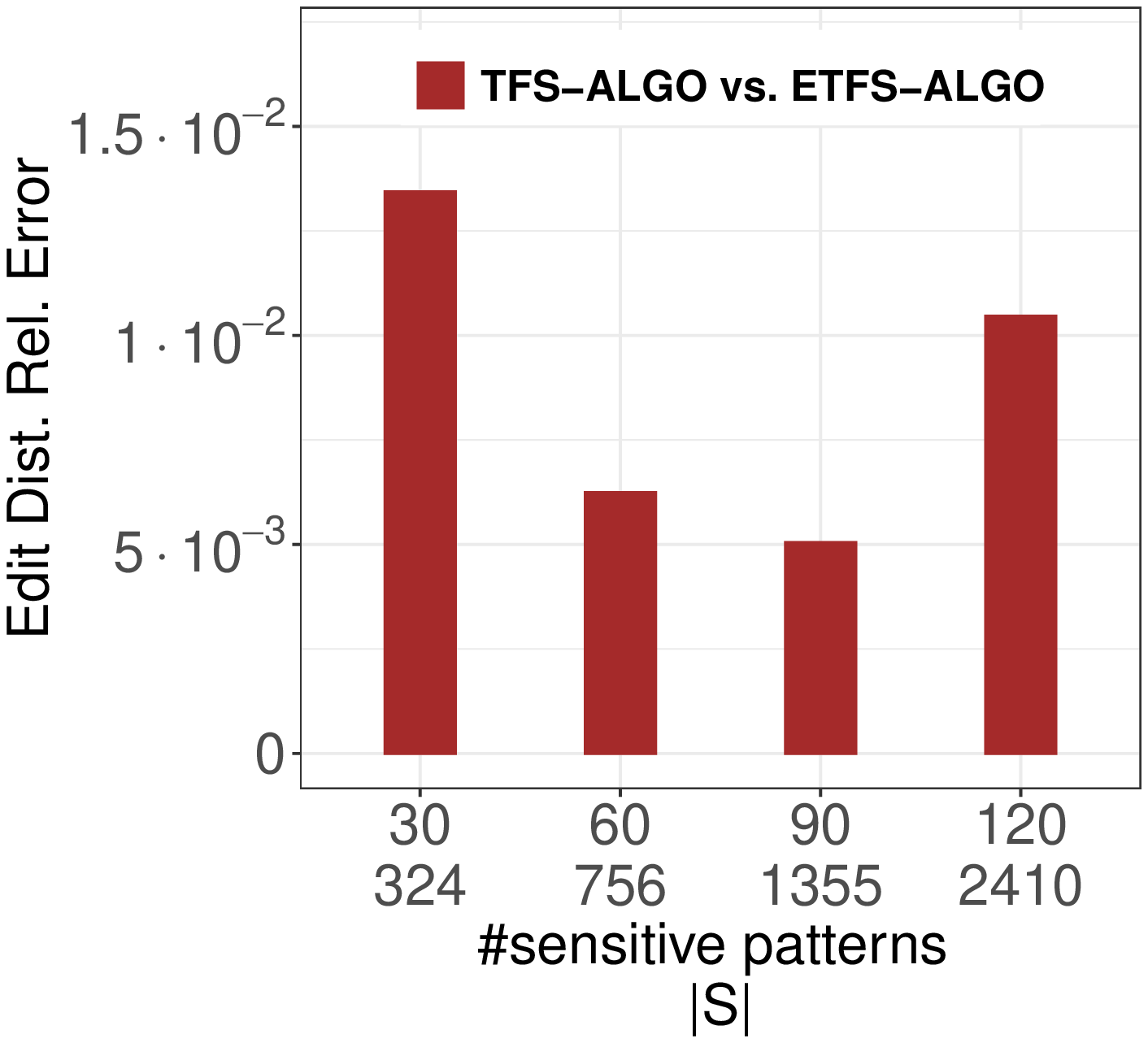}
      \vspace{-7mm}
        \caption{\TRU{}}
        \label{etfs_vs_s_tru}
    \end{subfigure}

\caption{Edit Distance Relative Error vs.~(a) $k$ (and $|\mathcal{S}|$), and (b) number of sensitive patterns (and  $|\mathcal{S}|$) for each of the $50,000$ random strings. Edit Distance Relative Error vs.~(c) $k$ (and $|\mathcal{S}|$), and (d) number of sensitive patterns (and  $|\mathcal{S}|$) for {\TRU}. }\label{etfs_syn_tru}
\end{figure}

\section{Related Work}\label{sec:finale}

{\em Data sanitization} aims at concealing confidential information from a dataset prior to its dissemination. In privacy-preserving data mining, data sanitization (\textit{a.k.a.} knowledge hiding) aims at concealing patterns modeling confidential knowledge by limiting their frequency, so that they are not easily mined from the data. Existing methods are applied to: (I) a \emph{collection} of set-valued data (transactions)~\cite{verykios} or spatiotemporal data (trajectories)~\cite{abul}; (II) a \emph{collection} of sequences~\cite{sbsh,icdm13}; or (III) a \emph{single} sequence~\cite{liyue,odesa,streamevent}. Yet, none of these methods follows our CSD setting: Methods in category I are not applicable to string data, and those in categories II and III do not have guarantees on privacy-related constraints~\cite{streamevent} or on utility-related properties~\cite{sbsh,icdm13,liyue,odesa}. Specifically, unlike our approach,~\cite{streamevent} cannot guarantee that all sensitive patterns are concealed (constraint {\bf C1}),  while~\cite{sbsh,icdm13,liyue,odesa} do not guarantee the satisfaction of utility properties (\eg \textbf{$\boldsymbol{\Pi} 1$} and {\bf P2}).

{\em Data anonymization} is a different direction in privacy-preserving data mining which is applied to individual-specific data and aims to prevent the disclosure of individuals' identity and/or information that individuals are not willing to be associated with  \cite{condensation,anna,differentialprivacy}. On the other hand, our approach is applied to a string modeling information that does not necessarily refer to specific individuals and aims to protect sensitive patterns that model confidential knowledge rather than values individuals do not want to be associated with. For example, our approach may be applied to a string comprised of letters corresponding to orders of different products by a business. In this case, subsequences of ordered products that provide competitive advantage to the business~\cite{icdm13}  are treated as sensitive patterns and should be concealed from the disseminated string. The fact that anonymization methods deal with individual-specific data and aim to prevent privacy threats other than confidential knowledge exposure leads to fundamentally different protection principles and methods than ours. For instance, differential privacy \cite{differentialprivacy} is a well-known anonymization principle and anonymization methods based on condensation \cite{condensation} have been proposed for strings \cite{condensation,charusdm}. Our work is related to anonymization approaches in that it shares the general objective of protecting string data with~\cite{condensation,charusdm} and that of protecting data while supporting string mining with the works of~\cite{bonomi} and~\cite{chen}. However, our work considers different input data  and has a fundamentally different privacy objective than \cite{condensation,charusdm,bonomi,chen}. Specifically, these works consider a collection of strings instead of a single long string and employ privacy objectives which do not aim to reduce the frequency of sensitive length-$k$ substrings to zero. Therefore, they cannot be applied to address the problems considered in this paper.

\section{Conclusion}\label{sec:conclusion}

In this paper, we introduced the Combinatorial String Dissemination model. The focus of this model is on {\em guaranteeing} privacy-utility trade-offs in sequential data (\eg {\bf C1} {\em vs.} {\bf $\boldsymbol{\Pi} 1$} and {\bf P2}).

Under this model, we considered two different settings. The common privacy constraint in both settings is that the output string must not contain any sensitive pattern. In the first setting, we aim to generate the minimal-length string that preserves the order of appearance and the frequency of all non-sensitive patterns. We defined a problem, \TFS, to capture these requirements, and a variant of it, \PFS, that preserves a partial order and the frequency of the non-sensitive patterns but generally produces a shorter string. We developed two time-optimal algorithms, {\TFSA} and {\PFSA}, for {\TFS} and {\PFS}, respectively. We also developed {\MCSRA}, a heuristic that prevents the disclosure of the location of sensitive patterns, ensuring that sensitive patterns are not reinstated, implausible
patterns are not introduced, and occurrences of spurious patterns are prevented from the outputs of {\TFSA} and {\PFSA}.  In the second setting, we aim to generate a string that is at minimal edit distance from the original string, in addition to preserving the order of appearance and the frequency of all non-sensitive patterns. We defined a problem, {\ETFS}, to capture these requirements, and proposed {\ETFSA}, an algorithm, which is based on solving specific instances of approximate regular expression matching, to construct such a string.

Our experiments show that string sanitization by {\TFSA}, {\PFSA} and then {\MCSRA} is both effective and efficient. They also demonstrate that {\TFSA} can be employed as an effective heuristic to the {\ETFS} problem producing optimal or near-optimal solutions  in practice.

\bibliographystyle{splncs04}
\bibliography{references}

\begin{thebibliography}{10}
\providecommand{\url}[1]{\texttt{#1}}
\providecommand{\urlprefix}{URL }
\providecommand{\doi}[1]{https://doi.org/#1}

\bibitem{abul}
Abul, O., Bonchi, F., Giannotti, F.: Hiding sequential and spatiotemporal
  patterns. TKDE  \textbf{22}(12),  1709--1723 (2010)

\bibitem{charusdm}
Aggarwal, C.C., Yu, P.S.: On anonymization of string data. In: SDM. pp.
  419--424 (2007)

\bibitem{condensation}
Aggarwal, C.C., Yu, P.S.: A framework for condensation-based anonymization of
  string data. DMKD  \textbf{16}(3),  251--275 (2008)

\bibitem{almir}
Almirantis, Y., Charalampopoulos, P., Gao, J., Iliopoulos, C.S., Mohamed, M.,
  Pissis, S.P., Polychronopoulos, D.: On avoided words, absent words, and their
  application to biological sequence analysis. Algorithms for molecular biology
  : AMB  \textbf{12} (2017)

\bibitem{pkdd2019}
Bernardini, G., Chen, H., Conte, A., Grossi, R., Loukides, G., Pisanti, N.,
  Pissis, S.P., Rosone, G.: String sanitization: {A} combinatorial approach.
  In: ECML/PKDD (2019), \url{https://ecmlpkdd2019.org/downloads/paper/73.pdf}

\bibitem{liyue}
Bonomi, L., Fan, L., Jin, H.: An information-theoretic approach to individual
  sequential data sanitization. In: WSDM. pp. 337--346 (2016)

\bibitem{bonomi}
Bonomi, L., Xiong, L.: A two-phase algorithm for mining sequential patterns
  with differential privacy. In: CIKM. pp. 269--278 (2013)

\bibitem{brendel}
Brendel, V., Beckmann, J.S., Trifonov, E.N.: Linguistics of nucleotide
  sequences: Morphology and comparison of vocabularies. Journal of Biomolecular
  Structure and Dynamics  \textbf{4}(1),  11--21 (1986)

\bibitem{CAZAUX2016}
Cazaux, B., Lecroq, T., Rivals, E.: Linking indexing data structures to {de
  Bruijn} graphs: Construction and update. J. Comput. Syst. Sci.  (2016)

\bibitem{chen}
Chen, R., Acs, G., Castelluccia, C.: Differentially private sequential data
  publication via variable-length n-grams. In: CCS. pp. 638--649 (2012)

\bibitem{graham1}
Cormode, G., Korn, F., Tirthapura, S.: Exponentially decayed aggregates on data
  streams. In: ICDE. pp. 1379--1381 (2008)

\bibitem{DBLP:books/daglib/0020103}
Crochemore, M., Hancart, C., Lecroq, T.: Algorithms on strings. Cambridge
  University Press (2007)

\bibitem{icassp}
{Droppo}, J., {Acero}, A.: Context dependent phonetic string edit distance for
  automatic speech recognition. In: ICASSP. pp. 4358--4361 (2010)

\bibitem{differentialprivacy}
Dwork, C., McSherry, F., Nissim, K., Smith, A.: Calibrating noise to
  sensitivity in private data analysis. In: TCC. pp. 265--284 (2006)

\bibitem{DBLP:journals/jcss/GallantMS80}
Gallant, J., Maier, D., Storer, J.A.: On finding minimal length superstrings.
  J. Comput. Syst. Sci.  \textbf{20}(1),  50--58 (1980)

\bibitem{sbsh}
Gkoulalas-Divanis, A., Loukides, G.: Revisiting sequential pattern hiding to
  enhance utility. In: KDD. pp. 1316--1324 (2011)

\bibitem{DBLP:journals/almob/GrossiIMPPRV16}
Grossi, R., Iliopoulos, C.S., Mercas, R., Pisanti, N., Pissis, S.P., Retha, A.,
  Vayani, F.: Circular sequence comparison: algorithms and applications. AMB
  \textbf{11}, ~12 (2016)

\bibitem{icdm13}
Gwadera, R., Gkoulalas-Divanis, A., Loukides, G.: Permutation-based sequential
  pattern hiding. In: ICDM. pp. 241--250 (2013)

\bibitem{vldbj08}
Jin, L., Li, C., Vernica, R.: Sepia: estimating selectivities of approximate
  string predicates in large databases. The VLDB Journal  \textbf{17}(5),
  1213--1229 (Aug 2008)

\bibitem{anliu}
{Liu}, A., {Zhengy}, K., {Liz}, L., {Liu}, G., {Zhao}, L., {Zhou}, X.:
  Efficient secure similarity computation on encrypted trajectory data. In:
  ICDE. pp. 66--77 (2015)

\bibitem{odesa}
Loukides, G., Gwadera, R.: Optimal event sequence sanitization. In: SDM. pp.
  775--783 (2015)

\bibitem{LoukidesPNAS}
Loukides, G., Gkoulalas-Divanis, A., Malin, B.: Anonymization of electronic
  medical records for validating genome-wide association studies. Proceedings
  of the National Academy of Sciences  \textbf{107}(17),  7898--7903 (2010)

\bibitem{tkde14qa}
{Lu}, W., {Du}, X., {Hadjieleftheriou}, M., {Ooi}, B.C.: Efficiently supporting
  edit distance based string similarity search using b $^+$-trees. IEEE
  Transactions on Knowledge and Data Engineering  \textbf{26}(12),  2983--2996
  (2014)

\bibitem{brad}
Malin, B., Sweeney, L.: Determining the identifiability of {DNA} database
  entries. In: AMIA. pp. 537--541 (2000)

\bibitem{anna}
Monreale, A., Pedreschi, D., Pensa, R.G., Pinelli, F.: Anonymity preserving
  sequential pattern mining. Artif. Intell. Law  \textbf{22}(2),  141--173
  (2014)

\bibitem{regex}
Myers, E.W., Miller, W.: Approximate matching of regular expressions. Bulletin
  of Mathematical Biology  \textbf{51}(1),  5--37 (1989)

\bibitem{netflix}
{Narayanan}, A., {Shmatikov}, V.: Robust de-anonymization of large sparse
  datasets. In: S\&P. pp. 111--125 (2008)

\bibitem{Pissinger}
Pissinger, D.: A minimal algorithm for the multiple-choice knapsack problem.
  Eur J Oper Res  \textbf{83}(2),  394--410 (1995)

\bibitem{DBLP:journals/bmcbi/Pissis14}
Pissis, S.P.: {MoTeX-II}: structured {MoTif eXtraction} from large-scale
  datasets. {BMC} Bioinformatics  \textbf{15}, ~235 (2014)

\bibitem{mireille}
R{\'{e}}gnier, M., Vandenbogaert, M.: Comparison of statistical significance
  criteria. J. Bioinformatics and Computational Biology  \textbf{4}(2),
  537--552 (2006)

\bibitem{sdm16edit}
Shang, J., Peng, J., Han, J.: Macfp: Maximal approximate consecutive frequent
  pattern mining under edit distance. In: Proceedings of the 2016 SIAM
  International Conference on Data Mining. pp. 558--566

\bibitem{mckp}
Sinha, P., Zoltners, A.A.: The multiple-choice knapsack problem. Operations
  Research  \textbf{27}(3),  431--627 (1979)

\bibitem{manolis}
Terrovitis, M., Poulis, G., Mamoulis, N., Skiadopoulos, S.: Local suppression
  and splitting techniques for privacy preserving publication of trajectories.
  TKDE  \textbf{29}(7),  1466--1479 (2017)

\bibitem{george}
Theodorakopoulos, G., Shokri, R., Troncoso, C., Hubaux, J., Boudec, J.L.:
  Prolonging the hide-and-seek game: Optimal trajectory privacy for
  location-based services. In: WPES. pp. 73--82 (2014)

\bibitem{verykios}
Verykios, V.S., Elmagarmid, A.K., Bertino, E., Saygin, Y., Dasseni, E.:
  Association rule hiding. TKDE  \textbf{16}(4),  434--447 (2004)

\bibitem{streamevent}
Wang, D., He, Y., Rundensteiner, E., Naughton, J.F.: Utility-maximizing event
  stream suppression. In: SIGMOD. pp. 589--600 (2013)

\bibitem{icde19}
{Wen}, Z., {Deng}, D., {Zhang}, R., {Kotagiri}, R.: 2ed: An efficient entity
  extraction algorithm using two-level edit-distance. In: ICDE. pp. 998--1009
  (2019)

\end{thebibliography}

\end{document}